\documentclass[11pt,a4paper,twoside,reqno]{amsart} 
\usepackage[utf8]{inputenc}
\usepackage[english]{babel}
\usepackage{graphicx}
\usepackage[width=160mm,top=25mm,bottom=25mm,bindingoffset=6mm]{geometry}
\usepackage{lmodern}

\usepackage{float}
\usepackage{xspace}
\usepackage{amssymb}
\usepackage{latexsym}
\usepackage{amsthm}
\usepackage{amsmath}
\usepackage{amssymb}
\usepackage{mathrsfs}
 \usepackage{multirow}
 \usepackage[normalem]{ulem}
\usepackage{color}

\usepackage[skip=2pt]{caption}
\usepackage{tensor}

\newtheorem{thm}{Theorem}[section]
\newtheorem{lem}[thm]{Lemma}
\newtheorem{cor}[thm]{Corollary}

\newtheorem{prop}[thm]{Proposition}
\theoremstyle{definition}
\newtheorem{defn}[thm]{Definition}
\newtheorem{oss}[thm]{Remark}
\newtheorem{example}[thm]{Example}

\DeclareMathOperator{\sgn}{sgn}

\newcommand{\bh}{\ensuremath{Q}\xspace}
\newcommand{\R}{\mathbb{R}}

\usepackage{hyperref}
\begin{document}
\title[Closed and broken electromagnetic orbits in Kerr--Newman spacetime]{Closed and broken electromagnetic orbits in Kerr--Newman spacetime}

\author[E. Caponio]{Erasmo Caponio}
\address{E. Caponio
\newline\indent Dipartimento di Meccanica, Matematica e Management\newline\indent
	Politecnico di Bari}
\email{erasmo.caponio@poliba.it}
    
\author[G. Sanzeni]{Giulio Sanzeni}
\address{G. Sanzeni
\newline\indent Faculty of Science - Mathematics $\&$ Statistical Sciences \newline\indent University of Alberta,  Edmonton, Canada}
\email{sanzeni@ualberta.ca}

\author[S. Suhr]{Stefan Suhr}
\address{S. Suhr
\newline\indent Fakult\"at f\"ur Mathematik \newline \indent Ruhr-Universit\"at Bochum,  Bochum,  Germany}
\email{stefan.suhr@ruhr-uni-bochum.de}

  \begin{abstract}
    		We study future-pointing timelike solutions of the Lorentz force equation in the
  		sub-extremal Kerr--Newman spacetime, with special attention to the time-machine
  		region $\mathfrak T$, where the axial Killing field $\partial_\phi$ is timelike.
  		We first construct smooth closed electromagnetic orbits tangent to $\partial_\phi$
  		in the positive equatorial part of $\mathfrak T$: the radius of such a circle
  		determines, and is determined by, the charge-to-mass ratio of the particle which must have opposite sign to that of the black hole charge.  We
  		then prove the existence of  spherical electromagnetic orbits contained in the
  		equatorial time-machine region and derive explicit relations between their radius,
  		charge-to-mass ratio, energy and angular momentum. Next we give sufficient
  		conditions ensuring that an equatorial electromagnetic orbit is a flyby
  		orbit with radial turning point in $\mathfrak T$. Finally, to describe
  		charged-particle decay processes whose fragments have different charge-to-mass
  		ratios, we introduce the notion of a broken electromagnetic orbit: a continuous, piecewise smooth, future-pointing 
  		worldline whose smooth pieces solve the Lorentz force equation. Imposing
  		conservation of kinetic four-momentum and electric charge at the decay vertices,
  		we exhibit an energy extraction  process  followed by a causality-violating one. The latter is
  		realized by a closed broken electromagnetic orbit, which we construct in both the
  		$r$-positive and the $r$-negative regions inside the inner horizon, by
  		concatenating two flyby branches sharing a common radial turning value $\bar r$, and having charge-to-mass ratios with opposite sign to that of the black hole charge,
  		with a spherical electromagnetic orbit of radius $\bar r$.
\end{abstract}

\keywords{Kerr--Newman spacetime; Lorentz force equation; closed  orbits; spherical and flyby orbits; broken electromagnetic orbits; Penrose-type process}
\subjclass[2020]{83C10, 83C40, 83C50}
	\maketitle

	 \tableofcontents

\section{Introduction}
The existence of closed causal curves in rotating black hole spacetimes is a
classical source of interest in Lorentzian geometry and general relativity,
going back to the discovery of the Kerr \cite{Kerr-paper} and Kerr--Newman
\cite{NewmanEtAl1965} solutions and to Carter's analysis of their global
structure \cite{Carter_causality}. In every member of this family the
obstruction to global causality is localized in the \emph{time-machine region}
\[
\mathfrak{T}:=\{p\in\mathcal M : g_p(\partial_\phi,\partial_\phi)<0\},
\]
where the axial Killing field $\partial_\phi$ becomes timelike and its closed
integral circles are therefore closed timelike curves.  
In the Kerr spacetime the time-machine region is confined to the negative-$r$ region,  \cite{Calvani_timelike}. For a Kerr black
hole, i.e.\ when $a^2\le M^2$, $\mathfrak T$ lies below the Cauchy horizon
$r_-=M-\sqrt{M^2-a^2}$ (degenerating into the single
horizon $r=M$ at the extremal case),  and the causality violation it carries is invisible from the exterior
region $\{r>r_+\}$ \cite{Carter_causality,KBH_book}. In the over-extremal regime $a^2>M^2$, instead, no horizon is present \cite{Carter_causality,KBH_book}.  Consequently, the time-machine region, which accumulates at the ring singularity $\Sigma=\{r=0,\theta=\pi/2\}$, is no longer shielded by horizons \cite{Penrose_naked}.

It is then natural to ask whether this causality violation can be realized
along freely falling matter or light, that is, whether Kerr admits closed causal
geodesics.  In the
analytic extension of the  Kerr black hole across the
horizons and into the negative-$r$ region (the {\em Kerr-star} spacetime), there are
no closed timelike geodesics \cite{sanzeni_timelike}, while null geodesics can
be neither closed nor even contained in a compact subset \cite{sanzeni2024non}.
Thus, although a closed timelike curve passes through every point below the
inner horizon, the Kerr-star spacetime, contains no closed
causal geodesics. The timelike statement is the delicate one: once the
geodesics crossing the horizons, those confined to $\{0<r<r_-\}$ and those
tangent to the symmetry axis have been excluded, the remaining  case is that of spherical geodesics of constant negative radius with
negative Carter constant, which are ruled out by showing that the
Boyer--Lindquist time coordinate strictly increases over each full oscillation
of the polar angle $\theta$. The non-existence of closed null geodesics in the fast and extremal ($a^2\geq M^2$) Kerr  has  been proved in \cite{sanzmosani}.

The situation changes substantially in the Kerr--Newman spacetime, for two
independent reasons. First, as soon as the black-hole charge $\bh$ is non-zero,
the time-machine region intersects the positive-$r$ domain as well: on the
equatorial plane $g_{\phi\phi}(r,\pi/2)\approx-a^2\bh^2/r^2<0$ for $|r|$ small,
so $\mathfrak T$ surrounds the ring singularity on both sides (still inside the
Cauchy horizon, see Lemma~\ref{rtm}). Second, the natural worldlines of charged
matter are no longer geodesics but solutions of the Lorentz force equation
\begin{equation}\label{kF}
(\nabla_{\dot\gamma}\dot\gamma)^\alpha
= k\,F^\alpha{}_\beta\,\dot\gamma^\beta,
\end{equation}
where $k=e/m$ is the charge-to-mass ratio of the particle and
$F^\alpha{}_\beta=g^{\alpha\mu}F_{\mu\beta}$ is the $(1,1)$-tensor associated with the electromagnetic two-form $F$. The existence of worldline solutions to \eqref{kF} connecting chronologically related points in globally hyperbolic spacetimes was first established in \cite{Caponio_Minguzzi}, under the assumptions that $F$ is an exact $2$-form and $k$  belongs to a neighbourhood of zero. The result was then extended in \cite{Minguzzi_2003} to arbitrary $k$, provided that no null connecting geodesic exists between the chosen points.
The latter was hence generalized in \cite{Minguzzi_Sanchez} to causally related points for arbitrary $k$ in $C^2$ globally hyperbolic spacetimes.  However, since the Kerr--Newman spacetime fails to be causal below the Cauchy horizon, these existence results are not applicable in that region.
It is therefore natural to ask whether 
Kerr--Newman spacetimes  admit closed timelike trajectories of charged test particles;
charged-particle motion in this background has been studied extensively (see
e.g. \cite{eva_orbits} and references therein), but, to our knowledge, not from
this point of view.

We answer this question in the affirmative. Our first basic result
(Theorem~\ref{bij}) establishes a one-to-one correspondence between
charge-to-mass ratios $k$ and radii $r_k$ such that, on the equatorial plane
$\{\theta=\pi/2\}$ and inside the positive-$r$ time-machine region, the
future-pointing timelike unit field $-\bigl(-g_{\phi\phi}\bigr)^{-1/2}\,\partial_\phi$
is the four-velocity of a charged test particle that moves, solely under the action of
the Kerr--Newman electromagnetic field, along the closed timelike circle
of radius $r_k$. Equivalently, prescribing the radius of an equatorial
$\partial_\phi$-circle in $\mathfrak T$ singles out the unique charge-to-mass
ratio for which that circle solves the Lorentz force equation. We notice that in the positive $r$ time-machine region, the charge-to-mass ratio must have opposite sign to that of $\bh$.

Beyond these $\partial_\phi$-circles, we study two further families of
equatorial electromagnetic orbits that interact with the time-machine region.
In Theorem~\ref{proposition alpha curve} we prove the existence of
spherical (non-closed) electromagnetic orbits, of constant Boyer--Lindquist radius,
contained in $\mathfrak T$ for both signs of $r$, and we obtain closed-form
expressions for their angular velocity, charge-to-mass ratio, energy and angular
momentum in terms of the radius. 
Although spherical electromagnetic orbits in Kerr--Newman spacetime fall within the general classification of charged-particle motion in \cite{eva_orbits} (see also \cite{semerak1997}), we record here an explicit family in $\mathfrak T$, with the radius determining the charge-to-mass ratio whose sign   must be opposite  to that of $r\bh$ (see Remark~\ref{remark sign spherical charge}). In particular,  such orbits run backwards in Boyer--Lindquist $t$-coordinate, and it is this reversed time orientation that drives the causality-violating construction in Section~\ref{subsection_causality_violation_process}.
In Section~\ref{subsection_flyby orbits} we then
give, for both positive and negative $r$, sufficient  conditions on the
constants of motion ensuring that an equatorial electromagnetic orbit with charge-to-mass ratio having opposite sign to that of $\bh$ is a
flyby orbit (Lemma~\ref{proposition flyby reaching the time-machine}). In comparison with  \cite{eva_orbits}, our purpose in Section~\ref{subsection_flyby orbits} is more specific. Indeed, we also give a result  guaranteeing that an equatorial flyby electromagnetic orbit has its unique  turning point inside 
$\mathfrak T$ (Theorem~\ref{prescribed_common_turning_fyby}).
These results  are tailored to the construction of broken electromagnetic orbits.

By  a {\em broken electromagnetic orbit} we mean a continuous, piecewise smooth worldline whose smooth pieces solve the Lorentz force equation, possibly with different charge-to-mass ratios. Such
curves model decay processes of charged particles into fragments having
different charge-to-mass ratios. In Section~\ref{decay_section} we construct two
successive decays: first of an equatorial flyby particle reaching a radial turning value
$\bar r\in\mathfrak T$ followed by the one  of a particle moving along  an equatorial spherical orbit  with the same radius $\bar r$ and produced at the first decay. We make explicit the conservation laws holding at the decay vertices: conservation
of electric charge together with the mass-weighted conservation of energy and
angular momentum, equivalent to the conservation of the kinetic four-momentum
$\pi$. Building on this, in
Section~\ref{subsection_causality_violation_process} we prove
(Theorem~\ref{closed_broken_orbit_blockIII}) that, both in
$\{\theta=\pi/2\}\cap\{r>0\}$ and in $\{\theta=\pi/2\}\cap\{r<0\}$, there exist
closed broken electromagnetic orbits obtained by concatenating two equatorial flyby
branches with a common radial turning value $\bar r\in\mathfrak T$ and an equatorial  spherical orbit of radius $\bar r$. These closed broken orbits lie inside the
inner horizon and are interpreted as causality-violating decay processes; we
also note in Remark~\ref{brokenKerr} that an  analogous concatenation cannot be realized by timelike
geodesics in sub-extremal Kerr.

A second physical reading of this construction connects it with the Penrose
process \cite{first_Penrose_process,Penrose_process}. In the
Kerr spacetime this energy-extraction mechanism relies on the \emph{ergosphere},
the region between the event horizon and the stationary limit surface where
$\partial_t$ is spacelike: a particle decaying there may produce a fragment in retrograde motion, having negative energy falling into the black hole, while the other fragment escapes to
infinity with energy larger than that of the incoming particle, the excess being
extracted from the rotational energy of the black hole. 
Recent results \cite{Ruffini_PRL} have revisited the efficiency of the
single Penrose process and recovered the energy-extraction picture originally
envisaged by Penrose, while \cite{Ruffini_second} extended the analysis to
repetitive decay processes in the extreme Kerr spacetime (see also
\cite{Ruffini_book} for an overview of the Penrose process in relativistic
astrophysics). The energetic interpretation of this mechanism is closely tied
to Christodoulou's irreducible mass, to the area theorem, and to the laws of
black-hole mechanics
\cite{Christodoulou1970, Christ_Ruffini,Hawking_area,Hawking_thermodynamics,Wald}.
In \cite{Christ_Ruffini}, Christodoulou and Ruffini also observed that for charged particles in presence of an electromagnetic field, negative-energy states do not necessarily require that the decay happens in the ergosphere:
they can arise from the electromagnetic contribution to the conserved Killing
energy (see also \cite{Denardo_Ruffini}). Later, the Kerr--Newman
case was studied in detail in \cite{Prasanna1983, BDD1985}. These works, however, did not relate this phenomenon to the time-machine region. In
Section~\ref{Penrose} we consider an energy extraction  process which takes place in the equatorial section of  $\mathfrak T$, where $\partial_t$ is still
future-pointing timelike, yet the electromagnetic interaction allows one
daughter particle to have negative energy. By conservation of the mass-weighted
energy, the other daughter particle is then produced with energy exceeding that
of the parent particle. Since the decay occurs below the Cauchy horizon, this
higher-energy particle cannot return to the original asymptotic region; instead it can propagate, through the horizons of a subsequent Kerr--Newman patch, to a
different asymptotically flat end of the maximal analytic extension.
In this sense, energy is transferred between distinct asymptotic regions. The
causality-violating process  enters through the complementary branch of the
decay chain: the negative-energy fragment moving along an equatorial  spherical orbit later decays and gives rise
to another equatorial flyby electromagnetic orbit. Explicit numerical examples, listing
the rest masses, electric charges, energies and angular momenta of all the
particles involved, in both the positive- and negative-\(r\) cases, are
collected in Section~\ref{subsection_numerics}.
\subsubsection*{Conventions}
Without loss of generality we will assume the parameter $a$, representing the angular momentum (per unit mass) of the black hole, to be positive. Notice that if $a<0$, one may  redefine the angular coordinate $\phi$ to be $-\phi$, hence obtain the same metric tensor. Moreover, the electric charge $\bh$ of the black hole will be assumed to be non-zero.

\section{Kerr--Newman  spacetime}
Let us consider Boyer--Lindquist coordinates
\[
(t,r)\in \mathbb R^2,
\qquad
(\theta,\phi)\in S^2,
\]
and let us set
\[
\rho^2:=r^2+a^2\cos^2\theta,
\qquad
\Delta:=r^2-2Mr+a^2+\bh^2,
\]
where
\[
a>0,
\qquad
M>0,
\qquad
\bh\in \mathbb R\setminus\{0\}.
\]
We assume throughout the sub-extremal condition
\[
a^2+\bh^2<M^2.
\]
Thus, $\Delta$ has two distinct real roots
\[
r_\pm:=M\pm\sqrt{M^2-a^2-\bh^2}.
\]
The expression of the metric below is singular at the roots of
$\Delta$. We therefore distinguish the Boyer--Lindquist coordinate domain from
the horizon-crossing spacetime. Let
\[
\mathcal M_{\mathrm{BL}}
:=
(\mathbb R^2\times S^2)\setminus(\Sigma
\cup\{r=r_+\}\cup\{r=r_-\}),
\qquad
\Sigma:=\{\rho=0\}=\{r=0,\ \theta=\pi/2\}.
\]
On $\mathcal M_{\mathrm{BL}}$ the Kerr--Newman metric is given by
\begin{align*}
	g={}&-\frac{\Delta-a^2\sin^2\theta}{\rho^2}\,dt^2
	+\frac{2a\sin^2\theta\,[\Delta-(r^2+a^2)]}{\rho^2}\,dt\,d\phi \nonumber\\
	&\quad
	+\frac{\sin^2\theta[(r^2+a^2)^2-a^2\Delta\sin^2\theta]}{\rho^2}\,d\phi^2
	+\frac{\rho^2}{\Delta}\,dr^2+\rho^2\,d\theta^2,
\end{align*}
(see \cite[Appendix D]{FrolovNovikov1998}). We fix once and for all, a smooth horizon-crossing Kerr--Newman extension
$(\mathcal M,g)$ of $(\mathcal M_{\mathrm{BL}},g)$. The equations
\[
\mathscr H_\pm:=\{r=r_\pm\}
\]
define smooth null hypersurfaces only in this extension.  The hypersurface $\mathscr H_+$ is the {\em event horizon} and $\mathscr H_-$ is the {\em Cauchy horizon}.
All Boyer--Lindquist coordinate computations below are understood inside
$\mathcal M_{\mathrm{BL}}$, while statements involving crossing $r=r_\pm$ are
understood in the fixed horizon-crossing extension $\mathcal M$.
\begin{defn}
	The connected components of $\mathcal M_{\mathrm{BL}}$
	\[
	\mathrm{I}:=\{r>r_+\},
	\qquad
	\mathrm{II}:=\{r_-<r<r_+\},
	\qquad
	\mathrm{III}:=\{r<r_-\},
	\]
	are called the Boyer--Lindquist blocks.
\end{defn}
Each Boyer--Lindquist block becomes a time-oriented spacetime when equipped with the following choice of future direction:
\begin{itemize}
	\item on I, the vector field \(-\nabla t\) is declared future-pointing;
	\item on II, the vector field \(-\partial_r\) is declared future-pointing;
	\item on III, the vector field
	\begin{equation*}
	V:=(r^2+a^2)\partial_t+a\partial_\phi
	\end{equation*}
	is declared future-pointing.
\end{itemize}
\begin{oss}\label{horizons}
The choice of future cones in blocks I, II and III is not made independently in each block, but is obtained by continuous extension of the same time orientation across the future horizons. To make this precise, one may pass to ingoing Kerr--Newman coordinates
\begin{equation}\label{regular}
v=t+\int \frac{r^2+a^2}{\Delta}\,dr,
\qquad
\varphi_+=\phi+\int \frac{a}{\Delta}\,dr,
\end{equation}
which are regular across the hypersurfaces $r=r_\pm$. In these coordinates the future event horizon $\mathscr H_+$ and the future Cauchy horizon $\mathscr H_-$ are smooth null hypersurfaces, generated respectively by
\[
K_\pm=\partial_v+\Omega_\pm\,\partial_{\varphi_+},
\qquad
\Omega_\pm=\frac{a}{r_\pm^2+a^2}.
\]
Since the ingoing coordinates are defined by $v=t+f(r)$ and $\varphi_+=\phi+g(r)$, with $r,\theta$ unchanged, the coordinate vector fields $\partial_v$ and $\partial_{\varphi_+}$ coincide with $\partial_t$ and $\partial_\phi$, respectively. Thus $K_\pm=\partial_t+\Omega_\pm\,\partial_\phi$.
Moreover, wherever $\Delta>0$ (that is, in blocks I and III), the vector field $V$ is timelike, since
$g(V,V)=-\Delta\rho^2<0$. At the horizons one has
\[
V\big|_{r=r_\pm}=(r_\pm^2+a^2)\,K_\pm.
\]
In block I the future cone is the connected component of the timelike cone containing $-\nabla t$. Since
$g(V,-\nabla t)=-dt(V)=-(r^2+a^2)<0$,
the vector field $V$ is future-pointing. As $r\to r_+$, it tends to the positive multiple $(r_+^2+a^2)K_+$, hence $K_+$ belongs to the closure of the field of future cones of block I.
In block III the future cones are, by definition, the connected components of the ones  containing $V$. As $r\to r_-$, one has $V\longrightarrow (r_-^2+a^2)K_-$,
so that $K_-$ belongs to the closure  of the field future cones of block III.
In block II one has $\Delta<0$, hence $g_{rr}=\rho^2/\Delta<0$,
so $\partial_r$ is timelike there, and the future cones are  by definition the connected components containing $-\partial_r$. To compare this choice with the horizons, define
\[
X_r^+:=K_+-\sqrt{r_+-r}\,\partial_r
\qquad\text{for }r<r_+\text{ close to }r_+,
\]
and
\[
X_r^-:=K_--\sqrt{r-r_-}\,\partial_r
\qquad\text{for }r>r_-\text{ close to }r_-.
\]
Since $g_{rt}=g_{r\phi}=0$ in block II we have  $g(K_\pm,\partial_r)=0$
and therefore
\[
g(X_r^\pm,X_r^\pm)=g(K_\pm,K_\pm)+|r-r_\pm|\,g_{rr}.
\]
Now $K_\pm$ extend smoothly across the corresponding horizons in ingoing Kerr--Newman coordinates and are null on $r=r_\pm$, so $g(K_\pm,K_\pm)=O(r-r_\pm)$.
On the other hand, for $r\to r_\pm$,
\[
|r-r_\pm|\,g_{rr}\longrightarrow -\frac{\rho_\pm^2}{r_+-r_-}<0.
\]
Hence $X_r^\pm$ are timelike for $r$ sufficiently close to $r_\pm$. Moreover,
\[
g(X_r^\pm,-\partial_r)=\sqrt{|r-r_\pm|}\,g_{rr}<0,
\]
so $X_r^\pm$ are also future-pointing in  block II. Letting $r\to r_\pm$, we conclude that both $K_+$ and $K_-$ belong to the closure of the field future cones of block II.
Therefore the future cones chosen in blocks I, II and III are mutually compatible: they arise from the same continuous time orientation across the future event horizon and the future Cauchy horizon.
\end{oss}	

The Kerr--Newman spacetime was discovered in \cite{NewmanEtAl1965} as the electrovac solution of the Einstein--Maxwell equations describing a rotating charged black hole with mass $M$, angular momentum per unit mass $a$, and electric charge $\bh$.  The  electromagnetic field in Boyer--Lindquist coordinates is $F=dA$
where an electromagnetic potential $A$ is given by
\begin{equation}\label{A}
A=
-\frac{\bh r}{\rho^2}\bigl(dt-a\sin^2\theta\,d\phi\bigr).
\end{equation}
The nonvanishing covariant components of $F$ are: 
\[
F_{tr}=-\frac{\bh\,(r^2-a^2\cos^2\theta)}{\rho^4},
\qquad
F_{t\theta}=\frac{2\bh a^2 r\sin\theta\cos\theta}{\rho^4},
\]
\[
F_{r\phi}=-\frac{\bh a\sin^2\theta\,(r^2-a^2\cos^2\theta)}{\rho^4},
\qquad
F_{\theta\phi}=\frac{2\bh a r(r^2+a^2)\sin\theta\cos\theta}{\rho^4};
\]
they extend smoothly across the horizons. The associated endomorphism $\bar F:T\mathcal M\to T\mathcal M$,  is the fiberwise linear map defined by
\begin{equation}\label{barF}
g(v,\bar F(w))=F(v,w),
\qquad
v,w\in T\mathcal M.
\end{equation}
On $\mathcal M_{\mathrm{BL}}$ its nonvanishing components $F^\alpha{}_\beta=g^{\alpha\mu}F_{\mu\beta}$ are:
\begin{equation}\label{Ftrtheta}
F^t{}_r=\frac{\bh(r^2+a^2)(r^2-a^2\cos^2\theta)}{\rho^4\Delta},
\qquad
F^t{}_\theta=-\frac{2\bh a^2 r\sin\theta\cos\theta}{\rho^4},
\end{equation}
\begin{equation}\label{Frtphi}
F^r{}_t=\frac{\bh\Delta(r^2-a^2\cos^2\theta)}{\rho^6},
\qquad
F^r{}_\phi=-\frac{\bh a\Delta\sin^2\theta\,(r^2-a^2\cos^2\theta)}{\rho^6},
\end{equation}
\begin{equation*}
F^\theta{}_t=-\frac{2\bh a^2 r\sin\theta\cos\theta}{\rho^6},
\qquad
F^\theta{}_\phi=\frac{2\bh a r(r^2+a^2)\sin\theta\cos\theta}{\rho^6},
\end{equation*}
\begin{equation}\label{Fphirtheta}
F^\phi{}_r=-\frac{\bh a(a^2\cos^2\theta-r^2)}{\rho^4\Delta},
\qquad
F^\phi{}_\theta=-\frac{2\bh a r\cot\theta}{\rho^4}.
\end{equation}
\begin{oss}
With these choices for the vector potential $A$ in \eqref{A} and the endomorphism $\bar F$ in \eqref{barF}, our Lorentz force agrees with both Carter \cite{Carter_1966_Axis} and Frolov--Novikov \cite{FrolovNovikov1998}. Indeed, Carter uses the opposite sign for the potential $A$, but writes the Lorentz force with the mixed tensor obtained by raising the second index, which differs by a minus sign from $F^{\alpha}{}_\beta$.
\end{oss}
\subsection{The time-machine region}
Let us introduce the so-called  time-machine region of the Kerr--Newman spacetime in Boyer--Lindquist coordinates.
\begin{defn}
The subset of $\mathcal M$ where the Killing vector field $\partial_\phi$ becomes timelike,
\begin{align*}
\mathfrak{T}=\{p\in\mathcal{M}: g_p(\partial_\phi,\partial_\phi)<0\},
\end{align*}
is called \textit{time-machine region}.
\end{defn}
The time-machine region  $\mathfrak{T}$ contains points with both negative and positive $r$-coordinate if the charge of the black hole $\bh$ is non-vanishing,  while it is contained in $\{r<0\}$ if $\bh=0$,  see \cite[Lemma 2.4.9]{KBH_book}. Indeed, if we restrict to $\{\theta=\pi/2\}$ and $\bh\neq 0$
\begin{align}\label{gphiphi in pi/2}
g(\partial_\phi,\partial_\phi)|_{\theta=\pi/2} = r^2 + \frac{a^2 (r^2+2Mr-\bh^2)}{r^2}\approx -\frac{a^2 \bh^2}{r^2}<0,                                     
\end{align}
for $|r|$ sufficiently small. More precisely, 
there exists a unique $r_{\mathrm{tm}}>0$ such that 
\begin{equation*}
\left\{r(x):\; x\in \mathfrak{T}\cap \{r>0\},\; \theta(x)=\frac{\pi}{2}\right\}=(0,r_{\mathrm{tm}}).
\end{equation*}
In fact, from \eqref{gphiphi in pi/2}
$g_{\phi\phi}(r,\pi/2)=\frac{G(r)}{r^2}$,
where
\begin{equation}\label{G}
G(r):=r^4+a^2r^2+2a^2Mr-a^2\bh^2.
\end{equation}
Since $G(0)=-a^2\bh^2$ and $G'(r)=4r^3+2a^2r+2a^2M>0$ for $r>0$, 
there exists a unique $r_{\mathrm{tm}}>0$ such that $G(r_{\mathrm{tm}})=0$ and
$G(r)<0$ for  $0<r<r_{\mathrm{tm}}$.
\begin{lem}\label{rtm}
	The positive $r$ equatorial time-machine region is contained in the Boyer--Lindquist block $\mathrm{III}$, i.e.
	\[
	0<r_{\mathrm{tm}}<r_-.
	\]
\end{lem}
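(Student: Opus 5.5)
Since $G(0)=-a^2\bh^2<0$ and $G'(r)>0$ for $r>0$, $G$ is strictly increasing on $(0,\infty)$ and has exactly one positive zero, namely $r_{\mathrm{tm}}$. Hence $r_{\mathrm{tm}}<r_-$ is equivalent to $G(r_-)>0$, and the proof reduces to checking this sign at the Cauchy horizon radius; positivity $r_{\mathrm{tm}}>0$ is already part of the definition. Note that $r_->0$, because $r_+r_-=a^2+\bh^2>0$ and $r_++r_-=2M>0$.

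To evaluate $G(r_-)$ I will use the horizon identity $\Delta(r_-)=0$, that is,
\[
r_-^2+a^2=2Mr_--\bh^2 .
\]
Write $G(r)=r^2(r^2+a^2)+2a^2Mr-a^2\bh^2$ and substitute the identity into the first term:
\[
G(r_-)=r_-^2(2Mr_--\bh^2)+2a^2Mr_--a^2\bh^2=(2Mr_--\bh^2)(r_-^2+a^2).
\]
Using the identity once more gives $2Mr_--\bh^2=r_-^2+a^2$. Therefore
\[
G(r_-)=(r_-^2+a^2)^2>0 .
\]
Equivalently, $g_{\phi\phi}(r_-,\pi/2)=(r_-^2+a^2)^2/r_-^2$, which is exactly the value $\sin^2\theta\,(r^2+a^2)^2/\rho^2$ that the general metric formula gives on $\{\Delta=0\}$.

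Combining $G(r_-)>0$ with the strict monotonicity of $G$ on $(0,\infty)$ and $G(r_{\mathrm{tm}})=0$ yields $r_{\mathrm{tm}}<r_-$. So the positive-$r$ equatorial part of $\mathfrak T$, which is $(0,r_{\mathrm{tm}})$, lies in block $\mathrm{III}$.

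The only step needing care is the algebraic reduction of $G(r_-)$, and the horizon identity does it in two substitutions. An alternative argument is to note that, for $\Delta\ge 0$, one has $g_{\phi\phi}=\sin^2\theta\,[(r^2+a^2)^2-a^2\Delta\sin^2\theta]/\rho^2$. In block $\mathrm{II}$ and at the horizons $\Delta\le 0$, so $g_{\phi\phi}>0$ there; likewise in block $\mathrm I$, since $(r^2+a^2)^2-a^2\Delta=r^4+a^2r^2+2Ma^2r-a^2\bh^2$ is positive for $r>r_+$. This also rules out $\mathfrak T$ meeting $\{r\ge r_-\}$, but the direct computation above is cleaner.
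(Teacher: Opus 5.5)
Your proof is correct and follows the paper's argument: monotonicity of $G$ on $(0,\infty)$ reduces the claim to the sign of $G(r_-)$, and the horizon identity $\Delta(r_-)=0$ then gives $G(r_-)=(r_-^2+a^2)^2>0$. You also check explicitly that $r_->0$, using $r_+r_-=a^2+\bh^2$ and $r_++r_-=2M$; this makes precise a step the paper leaves implicit.
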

\begin{proof}
	Since $G(0)=-a^2\bh^2<0$ and $G$ is strictly increasing for $r>0$,  it is enough to show that $G(r_-)>0$. Since $\Delta(r_-)=0$, that is $2Mr_-=r_-^2+a^2+\bh^2$
	we obtain
	\[G(r_-)=
		r_-^4+a^2r_-^2+a^2(r_-^2+a^2+\bh^2)-a^2\bh^2
		=
		(r_-^2+a^2)^2>0.
	\]
\end{proof}
Observe that the Killing vector field $\partial_t$ is always timelike when $r<0$ since
\[
g(\partial_t,\partial_t)=-\frac{r^2-2Mr+a^2\cos^2\theta+\bh^2}{\rho^2}<0,             
\]
(see Figure~\ref{timelikepartial_trneg}), while it changes causal character if $r\in(0,r_{-})$ (see Figures~\ref{timelikepartial_trpos} and \ref{timelikepartial_trpos_zoom}).   When restricted to $\{\theta=\pi/2\}$,  we have

\begin{align}\label{gtt in pi/2}
g(\partial_t,\partial_t)|_{\theta=\pi/2}= -\frac{ r^2-2Mr+\bh^2}{r^2}.  
\end{align}
Therefore, $\partial_t$ is timelike for $r<0$, and on $\{\theta=\pi/2\}$ if $r\in(0,M-\sqrt{M^2-\bh^2})$ and it is spacelike if $r\in(M-\sqrt{M^2-\bh^2},r_{-})$.  Hence,
\begin{align*} 
\{0<r<M-\sqrt{M^2-\bh^2}\}\cap\mathfrak{T}\neq \emptyset,
\end{align*}
so $\partial_t$ can also be timelike in $\mathfrak{T}\cap\{r>0\}$.

\vspace{0.2cm}
\begin{figure}[H]
\centering
\includegraphics[scale=0.4]{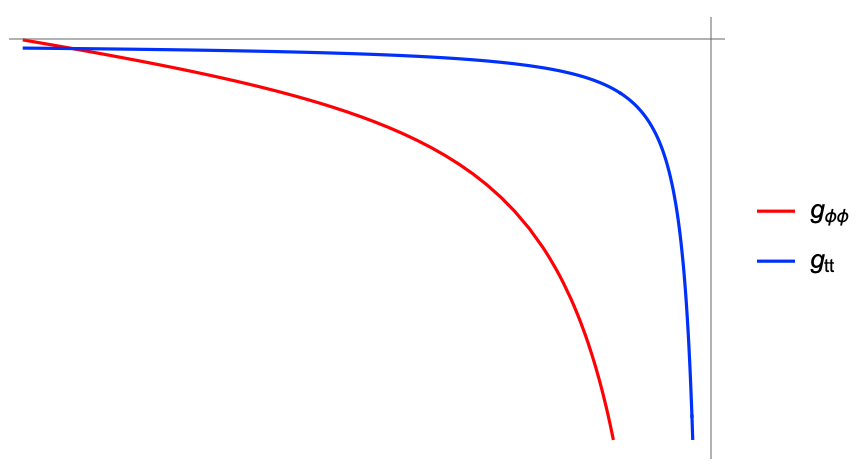} 
\vspace{0.2cm}
\caption{Causal character of $\partial_\phi$ and $\partial_t$ in $\mathfrak{T}\cap\{r<0\}\cap\{\theta=\pi/2\}$. Plot with $a=3,\,  M=5,\,  \bh=1$.}\label{timelikepartial_trneg}
\end{figure}

\vspace{0.2cm}

\begin{figure}[H]
\centering
\includegraphics[scale=0.4]{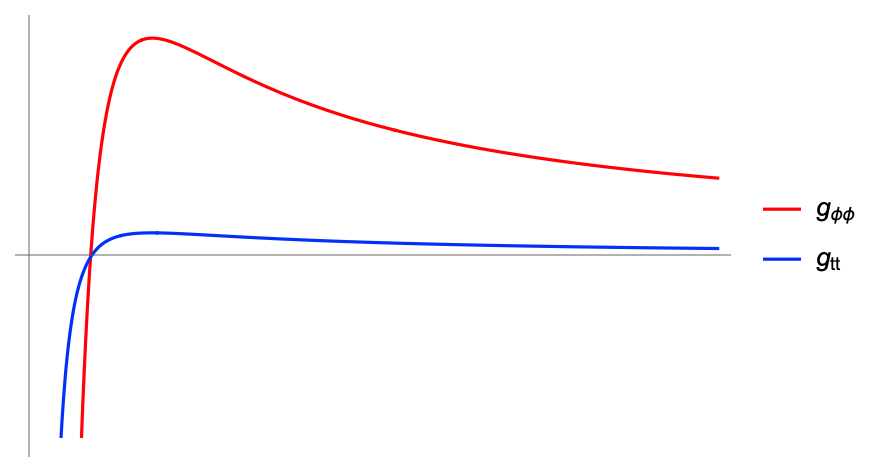} 
\vspace{0.2cm}
\caption{Causal character of $\partial_\phi$ and $\partial_t$ in $\{0<r<r_{-}\}\cap\{\theta=\pi/2\}$.  Plot with $a=3,\,  M=5,\,  \bh=1$.} \label{timelikepartial_trpos}
\end{figure}

\vspace{0.3cm}

\begin{oss} \label{remark_partial_t_timelike_time_machine}
In $\{\theta=\pi/2\}\cap\{0<r<r_{-}\}$,  from Eq.  \eqref{gtt in pi/2} $\partial_t$ is null when $r^2=2Mr-\bh^2$.  Substituting the latter in Eq.  \eqref{gphiphi in pi/2} we get
\[
g(\partial_\phi,\partial_\phi)|_{\theta=\pi/2} = r^2 + 2a^2>0.
\]
So $\partial_\phi$ is spacelike when $\partial_t$ becomes null. Therefore in $\{0<r<r_{-}\}$
\begin{align*}
\mathfrak{T}\subset \{g(\partial_t,\partial_t)|_{\theta=\pi/2}<0\}.
\end{align*}
\begin{figure}[H]
\centering
\includegraphics[scale=0.4]{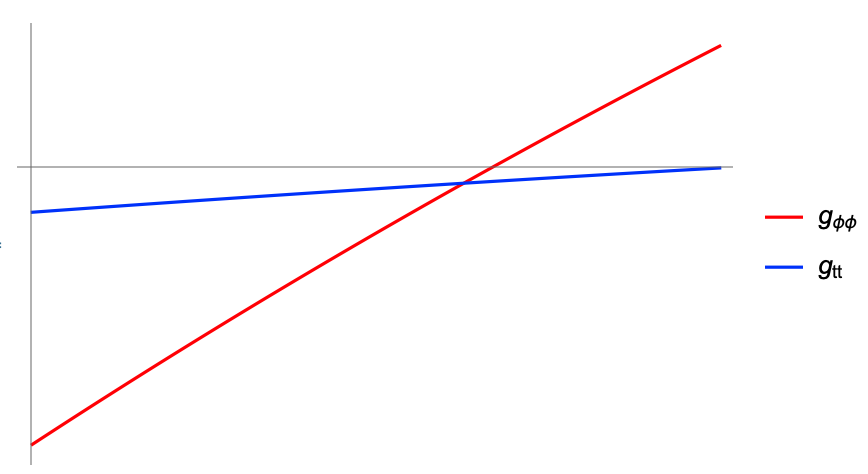} 
\vspace{0.2cm}
\caption{Zoom of Figure~\ref{timelikepartial_trpos} at the intersection point.}\label{timelikepartial_trpos_zoom}
\end{figure}
\end{oss}
\begin{prop}\label{time_orientation_killing_vectors}
In block $\mathrm{III}$ if the Killing vector fields $\partial_t$ and $\partial_\phi$ are causal,  then
\begin{itemize}
\item $\partial_\phi$ is past-pointing;
\item $\partial_t$ is future-pointing.
\end{itemize}
\end{prop}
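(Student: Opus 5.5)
Two causal vectors $X,Y$ lie in the same time cone exactly when $g(X,Y)<0$, or when $g(X,Y)=0$ and they are proportional null vectors. So the plan is to test $\partial_t$ and $\partial_\phi$ against the future-pointing timelike field $V=(r^2+a^2)\partial_t+a\partial_\phi$, which defines the time orientation of block III. Since $V$ is timelike there ($g(V,V)=-\Delta\rho^2<0$), a causal vector $X$ is future-pointing iff $g(X,V)<0$ and past-pointing iff $g(X,V)>0$; for a nonzero causal $X$ the product $g(X,V)$ cannot vanish. Hence it suffices to compute $g(\partial_t,V)$ and $g(\partial_\phi,V)$ from the Boyer--Lindquist metric and determine their signs.

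The key computation is as follows. With $g_{tt}=-(\Delta-a^2\sin^2\theta)/\rho^2$, $g_{t\phi}=a\sin^2\theta[\Delta-(r^2+a^2)]/\rho^2$ and $g_{\phi\phi}=\sin^2\theta[(r^2+a^2)^2-a^2\Delta\sin^2\theta]/\rho^2$, one expects the standard simplifications
\[
g(\partial_t,V)=(r^2+a^2)g_{tt}+a\,g_{t\phi}=-\Delta,\qquad
g(\partial_\phi,V)=(r^2+a^2)g_{t\phi}+a\,g_{\phi\phi}=a\Delta\sin^2\theta .
\]
I would verify these by expanding, using $\rho^2=r^2+a^2-a^2\sin^2\theta$. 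For instance, $(r^2+a^2)(a^2\sin^2\theta-\Delta)+a^2\sin^2\theta(\Delta-r^2-a^2)=-\Delta(r^2+a^2-a^2\sin^2\theta)=-\Delta\rho^2$, and the $\partial_\phi$ case is analogous.

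In block III we have $\Delta>0$, so $g(\partial_t,V)=-\Delta<0$, and $\partial_t$ is future-pointing whenever it is causal. Likewise $g(\partial_\phi,V)=a\Delta\sin^2\theta>0$ whenever $\sin\theta\neq0$, since $a>0$. This forces $\partial_\phi$ to be past-pointing when causal. On the axis $\sin\theta=0$ the field $\partial_\phi$ vanishes, so it is not causal there and that case does not arise; this is the only point needing a remark.

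The main obstacle is simply the algebraic verification of the two inner products. A secondary point is the logical step from $g(X,V)\neq0$ to the cone membership, which relies on $V$ being timelike rather than merely causal; in block III this holds everywhere in $\mathcal M_{\mathrm{BL}}$.
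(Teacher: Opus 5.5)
Your proposal is correct and is essentially the paper's own proof: you pair $\partial_t$ and $\partial_\phi$ with the orienting field $V$, obtaining $g(\partial_t,V)=-\Delta<0$ and $g(\partial_\phi,V)=a\Delta\sin^2\theta>0$ in block III, and you dispose of the axis because $\partial_\phi$ vanishes there. Your expansion of the two inner products and your cone criterion, which only needs $V$ timelike, fill in details the paper leaves implicit; in your opening remark the null case should read ``positively proportional,'' but that case is never used.
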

\begin{proof}
The block III is time-oriented by the canonical vector field $V$ and in this region $\Delta>0$.  Therefore
\begin{align*}
g(\partial_{t},V)&= -\Delta<0,\\
\intertext{and}
g(\partial_{\phi},V)&= a \Delta \sin^2\theta >0,
\end{align*}
if $\theta\neq 0,\pi$.  Notice that if $\theta=0,\pi$,  $\partial_\phi\equiv 0$,  so it is spacelike. 
\end{proof}

\subsection{Equations of motion of charged test particles}
We write
\[
k:=\frac{e}{m}
\]
for the  \textit{charge-to-mass ratio} of a test particle, where $e$ is its electric charge and $m$ its rest mass.
\begin{defn}[Electromagnetic orbits]
A timelike, future-pointing  curve $\gamma$ parametrized by proper time, i.e. 
\[
g(\dot\gamma,\dot\gamma)=-1,
\]
is an {\em electromagnetic orbit} if it  is a trajectory of a charged test particle, i.e.  it satisfies the
\emph{Lorentz force equation} (LFE)
\begin{equation}\label{LFE-abstract}
	\nabla_{\dot\gamma}\dot\gamma = k\,\bar F(\dot\gamma),
\end{equation}
where  $\nabla$ is the Levi--Civita connection of $g$ and $\bar F$ is defined in \eqref{barF}. 
In local coordinates, \eqref{LFE-abstract} becomes
\begin{equation*}
	\ddot x^\mu+\Gamma^\mu_{\sigma\nu}\dot x^\sigma\dot x^\nu
	=
	k\,F^{\mu}{}_\sigma\dot x^\sigma,
\end{equation*}
where \(x^\mu\) are the coordinate components of \(\gamma\).

In particular an electromagnetic orbit $\gamma:[\tau_0,\tau_1]\to  \mathcal M$ will be called \emph{closed} if \begin{align*}
	\gamma(\tau_0)=\gamma(\tau_1),  \quad \dot{\gamma}(\tau_0)=\dot{\gamma}(\tau_1).
\end{align*}
\end{defn}
Since on every Boyer--Lindquist block  $\partial_t$ and $\partial_\phi$ are Killing vector fields and the electromagnetic potential \(A\) is invariant under both symmetries, the quantities
\begin{equation}\label{EL}
E:=-g(\dot\gamma,\partial_t)-kA_t,
\qquad
L:=g(\dot\gamma,\partial_\phi)+kA_\phi,
\end{equation}
are constant along every solution of \eqref{LFE-abstract} contained in a block. For an orbit crossing a horizon, these constants are computed in the adjacent Boyer--Lindquist blocks using the same stationary-axisymmetric
gauge; the Lorentz force equation itself is expressed invariantly in terms of the smooth $2$-form $F$. Thus, they are constant for every orbit in $(\mathcal M, g)$; they are called the
\emph{conserved energy} and \emph{angular momentum} ({\em per unit mass}) respectively.

There is a further constant of motion associated with the Killing tensor field which the Kerr--Newman spacetime admits. More precisely, since the electromagnetic field satisfies the symmetry condition in \cite[Eq. (D.20)]{FrolovNovikov1998}, it follows that 
\[\mathcal{K} = \left( E a \sin \theta - \frac{L}{\sin \theta} \right)^2 + (g_{\theta\theta}\dot\theta)^2 +  a^2 \cos^2 \theta,\]
is constant along a solution $\gamma$ (see \cite[Eq. (D.28)]{FrolovNovikov1998} and take into account that we are considering $\gamma$ parametrized with  proper time, so that our $\mathcal K$ above corresponds to $\mathcal K/m^2$ in \cite{FrolovNovikov1998}). 
The  {\em Carter constant} is then defined by
\begin{equation}\label{carterconst}
\mathcal Q:=\mathcal K-(aE-L)^2.
\end{equation}

Let us also introduce the vector field
\[
W:=a\sin^2\theta\,\partial_t+\partial_\phi,
\]
and define
\begin{equation}\label{PD}
P(r):=-g(\dot\gamma,V)=E(r^2+a^2)-aL-k\,\bh\,r,
\qquad
D(\theta):=g(\dot\gamma,W)=L-aE\sin^2\theta. 
\end{equation}
The following result, originally proved by Carter in \cite{Carter_causality} using Kerr--Newman coordinates, is taken from \cite[\S 3.4.1]{FrolovNovikov1998} (see also \cite[Exe. 33.7]{MTW}).
\begin{thm}\label{first-order-equations}
	Let \(\gamma\) be a future-pointing timelike solution of LFE
	\eqref{LFE-abstract}, parametrized by proper time and contained in a
	Boyer--Lindquist block. Let  $E$, $L$, and $\mathcal Q$ be its constants of motion. Then \(\gamma\) satisfies the first-order system
	\begin{equation}\label{equations-of-motions}
		\begin{cases}
			\rho^2 \dot t
			=
			a\,D(\theta)+(r^2+a^2)\dfrac{P(r)}{\Delta},
			\\[0.4em]
			\rho^2 \dot\phi
			=
			\dfrac{D(\theta)}{\sin^2\theta}
			+\dfrac{a\,P(r)}{\Delta},
			\\[0.6em]
			\rho^4 \dot r^{\,2}=R(r),
			\\[0.3em]
			\rho^4 \dot\theta^{\,2}=\Theta(\theta),
		\end{cases}
	\end{equation}
	where
	\begin{equation}\label{R}
	R(r):=P(r)^2-\Delta(\mathcal K+r^2),
	\end{equation}
	and
	\begin{equation}\label{Theta}
	\Theta(\theta):=
	\mathcal K-a^2\cos^2\theta-\frac{D(\theta)^2}{\sin^2\theta}
	=
	\mathcal Q+\cos^2\theta
	\left[
	a^2(E^2-1)-\frac{L^2}{\sin^2\theta}
	\right].
	\end{equation}
	\end{thm}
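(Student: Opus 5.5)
Since $\partial_t$, $\partial_\phi$ span the Killing directions and $V$, $W$ are linear combinations of them, I would express the $t$ and $\phi$ equations by solving a $2\times 2$ linear system. By \eqref{EL} and \eqref{A}, the covariant components satisfy
\[
\dot t_{\flat}:=g(\dot\gamma,\partial_t)=-E-kA_t,\qquad g(\dot\gamma,\partial_\phi)=L-kA_\phi.
\]
From these I compute $g(\dot\gamma,V)=(r^2+a^2)g(\dot\gamma,\partial_t)+a\,g(\dot\gamma,\partial_\phi)$. The potential enters through $(r^2+a^2)A_t+aA_\phi=-\frac{\bh r}{\rho^2}\bigl(r^2+a^2-a^2\sin^2\theta\bigr)=-\bh r$, which gives $-g(\dot\gamma,V)=P(r)$. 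Similarly, $a\sin^2\theta A_t+A_\phi=0$, so $g(\dot\gamma,W)=D(\theta)$. This justifies \eqref{PD}.

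Next I would use the standard Carter orthogonal frame identities. On each block, $g(V,W)=0$, $g(V,V)=-\Delta\rho^2$ and $g(W,W)=\rho^2\sin^2\theta$. These are checked directly from the metric components; the paper already records $g(\partial_t,V)=-\Delta$ and $g(\partial_\phi,V)=a\Delta\sin^2\theta$. Hence the $(t,\phi)$-part of $\dot\gamma$ is
\[
\frac{g(\dot\gamma,V)}{g(V,V)}V+\frac{g(\dot\gamma,W)}{g(W,W)}W=\frac{P}{\Delta\rho^2}V+\frac{D}{\rho^2\sin^2\theta}W.
\]
Expanding $V=(r^2+a^2)\partial_t+a\partial_\phi$ and $W=a\sin^2\theta\,\partial_t+\partial_\phi$ gives the first two equations of \eqref{equations-of-motions}.

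For the $\theta$ equation, I would use the definition of $\mathcal K$. With $g_{\theta\theta}=\rho^2$ and $Ea\sin\theta-L/\sin\theta=-D/\sin\theta$, it reads
\[
\rho^4\dot\theta^2=\mathcal K-a^2\cos^2\theta-D^2/\sin^2\theta=\Theta.
\]
The second expression in \eqref{Theta} follows by expanding $D^2/\sin^2\theta$ and substituting \eqref{carterconst}; this is routine algebra using $1-\sin^2\theta=\cos^2\theta$. For the $r$ equation, I would use normalization. Because the frame $\{V,W,\partial_r,\partial_\theta\}$ is orthogonal,
\[
-1=-\frac{P^2}{\Delta\rho^2}+\frac{D^2}{\rho^2\sin^2\theta}+\frac{\rho^2}{\Delta}\dot r^2+\rho^2\dot\theta^2.
\]
Multiplying by $\Delta\rho^2$ and inserting the $\theta$ equation gives
\[
\rho^4\dot r^2=P^2-\Delta\bigl(\rho^2+D^2/\sin^2\theta+\Theta\bigr)=P^2-\Delta(\mathcal K+r^2),
\]
since $\rho^2-a^2\cos^2\theta=r^2$. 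This step holds in all blocks, including II, where $\Delta<0$.

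The main obstacle is the constancy of $\mathcal K$ itself, which comes from the Killing tensor and the compatibility condition on $F$. I plan to cite \cite[Eq. (D.28)]{FrolovNovikov1998} for this rather than reprove it. The only care needed is the rescaling by $m^2$ and checking that the sign conventions for $A$ and $\bar F$ match; the preceding remark on conventions already confirms this.
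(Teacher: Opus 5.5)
Your derivation is correct, and it takes a different route from the paper. The paper gives no argument of its own: it imports the system from Carter and \cite[\S 3.4.1]{FrolovNovikov1998}, where it is obtained by Carter's separation of variables in the charged Hamilton--Jacobi equation, with $\mathcal K$ appearing as the separation constant.

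You instead work pointwise in one tangent space:
\begin{itemize}
\item The identities $(r^2+a^2)A_t+aA_\phi=-\bh r$ and $a\sin^2\theta\,A_t+A_\phi=0$ give $P=-g(\dot\gamma,V)$ and $D=g(\dot\gamma,W)$. The paper states these in \eqref{PD} without verification.
\item The orthogonal decomposition along $\{V,W,\partial_r,\partial_\theta\}$, with $g(V,V)=-\Delta\rho^2$ and $g(W,W)=\rho^2\sin^2\theta$, yields the $t$- and $\phi$-equations using only $E$ and $L$.
\item The $\theta$-equation is the definition of $\mathcal K$ rewritten.
\item The $r$-equation is the normalization $g(\dot\gamma,\dot\gamma)=-1$ combined with $\rho^2-a^2\cos^2\theta=r^2$.
\end{itemize}
I checked each of these computations and they are right. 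They also hold in block II off the axis, since you only need $g(V,V)\neq0$ and $g(W,W)\neq0$, not a sign.

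Your route makes transparent which conserved quantity feeds which equation. It also shows that, once $E$, $L$, $\mathcal K$ are known to be constant, the system is linear algebra.

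What it gives up is the point you already flag. The only nontrivial content of the theorem is the constancy of $\mathcal K$. Without it, the $\theta$-equation is a tautology, and $R$ and $\Theta$ would not be functions of $r$ alone and $\theta$ alone. You outsource this to \cite[Eq. (D.28)]{FrolovNovikov1998}, whereas the Hamilton--Jacobi separation proves the hidden conservation law and the first-order system at the same time. Since the paper takes the constancy of $\mathcal K$ from that same source just before stating the theorem, your argument is complete relative to what the paper assumes.
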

\begin{oss}
The  first-order equations in Theorem~\ref{first-order-equations} are naturally obtained for the  four-momentum $\pi^\mu=dx^\mu/d\lambda$, where $g(\pi,\pi)=-m^2$, see e.g. \cite[Appendix D.4.1]{FrolovNovikov1998}. In the timelike case, after passing to proper time $\tau$ and using $\pi^\mu=m\,dx^\mu/d\tau$, one obtains  \eqref{equations-of-motions} in terms of the four-velocity $u^\mu=dx^\mu/d\tau$, with the conserved quantities understood per unit mass.
\end{oss}	
Since the left-hand sides of the third and fourth equations in \eqref{equations-of-motions} are non-negative, the motion can only occur in
the region where
\[
R(r)\ge 0,
\qquad
\Theta(\theta)\ge 0.
\]
This observation allows one to describe the qualitative behaviour of the radial
and angular coordinates. Although the following statement is proved in
\cite[Corollary 4.3.8]{KBH_book} for Kerr geodesics, the same proof applies
to the  LFE in Kerr--Newman, since it only uses the separated
first-order equations for $R$ and $\Theta$ and the smooth dependence of
solutions on the initial data.
\begin{prop}\label{initial-conditions-and-zeroes}
	Suppose \(R(r_0)=0\), and let \(\gamma\) be an electromagnetic orbit such that
	\[
	r(\gamma(\tau_0))=r_0,
	\qquad
	\frac{d}{d\tau}(r\circ\gamma)(\tau_0)=0.
	\]
	Then:
	\begin{enumerate}
		\item if \(R'(r_0)\neq 0\), then \(r_0\) is an \(r\)-turning value, namely
		\((r\circ\gamma)'(\tau)\) changes sign at \(\tau_0\);
		\item if \(R'(r_0)=0\), then \(r(\gamma(\tau))=r_0\) for all \(\tau\), i.e. the electromagnetic orbit has constant radial coordinate.
	\end{enumerate}
\end{prop}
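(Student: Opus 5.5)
The plan is to reduce the statement to a one-dimensional problem for $r(\tau)$ and then invoke uniqueness for a second-order ODE. Because $\gamma$ lies in a Boyer--Lindquist block, Theorem~\ref{first-order-equations} gives $\rho^4\dot r^2=R(r)$. I would not work with this first-order equation directly: at zeros of $R$ it has non-unique solutions, since both $r\equiv r_0$ and a bouncing solution satisfy it.

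\textbf{Step 1: second-order radial equation.} Introduce the Mino-type parameter $\lambda$ with $d\lambda/d\tau=\rho^{-2}$. This is legitimate because $\rho^2>0$ away from $\Sigma$, and $\lambda$ is a smooth, strictly increasing reparametrization. In this parameter the equations separate as $(dr/d\lambda)^2=R(r)$ and $(d\theta/d\lambda)^2=\Theta(\theta)$. Differentiate the first identity in $\lambda$ to get $2\,r'r''=R'(r)\,r'$. On any interval where $r'\neq0$ this yields $r''=\tfrac12R'(r)$. To obtain this equation everywhere, including at points where $r'=0$, I would go back to the radial component of the LFE. Written in terms of the constants $E,L,\mathcal K$, it is exactly the Euler--Lagrange equation of the separated Hamiltonian $\tfrac12(p_r^2\Delta-R(r)/\Delta)$. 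In Carter's (Frolov--Novikov) derivation this gives $\frac{d^2r}{d\lambda^2}=\tfrac12R'(r)$ identically along any solution. As in \cite[Corollary 4.3.8]{KBH_book}, this step only uses the separated structure and the smoothness of solutions, so the geodesic argument transfers verbatim with $P(r)$ now containing the term $-k\bh r$.

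\textbf{Step 2: case (2).} Suppose $R(r_0)=R'(r_0)=0$. Then the constant function $r\equiv r_0$ solves the autonomous ODE $r''=\tfrac12R'(r)$ with the data $r(\lambda_0)=r_0$, $r'(\lambda_0)=0$, since $R'(r_0)=0$. The right-hand side is polynomial in $r$, hence locally Lipschitz, so the Picard--Lindelöf theorem applies. By uniqueness, $r(\gamma(\tau))\equiv r_0$ on the whole connected domain of $\gamma$.

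\textbf{Step 3: case (1).} Suppose $R'(r_0)\neq0$. Then $r''(\lambda_0)=\tfrac12R'(r_0)\neq0$ while $r'(\lambda_0)=0$. Hence $r'$ changes sign at $\lambda_0$, with $r$ attaining a strict local extremum there. Since $d\tau/d\lambda=\rho^2>0$, the derivative $(r\circ\gamma)'(\tau)=\rho^{-2}\,dr/d\lambda$ changes sign at $\tau_0$ as well. Note also that $R\ge0$ along the orbit forces the sign of $R'(r_0)$ to match the side on which $r$ moves, which is consistent with this.

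The main obstacle is Step 1: justifying the second-order radial equation at points where $\dot r=0$, rather than merely off them. I would handle it by deriving $r''=\tfrac12R'(r)$ directly from the Hamilton--Jacobi separation of the charged-particle Hamiltonian $H=\tfrac12g^{\mu\nu}(p_\mu-kA_\mu)(p_\nu-kA_\nu)$, since Hamilton's equations hold at every point.
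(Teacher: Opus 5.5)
Your proof is correct. Its skeleton is the one the paper takes over from \cite[Corollary 4.3.8]{KBH_book}: a second-order radial equation that is valid also where $\dot r=0$, then $r''\neq0$ at a simple zero and ODE uniqueness at a double zero. The paper itself gives no argument beyond remarking that O'Neill's proof only uses the separated first-order equations and smooth dependence of solutions on the initial data.

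The real difference is how you obtain the second-order identity at points where $\dot r=0$, which is exactly the non-uniqueness issue you flag.

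\textbf{The paper's route.} It stays with $\rho^4\dot r^2=R(r)$. One differentiates where $\dot r\neq0$ and uses smooth dependence on initial data to reach the turning points. Going from Kerr geodesics to the Lorentz force equation then changes nothing but the polynomial $R$.

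\textbf{Your route.} You read $d^2r/d\lambda^2=\tfrac12R'(r)$ off Hamilton's equations for the Mino-time Hamiltonian $\rho^2(H+\tfrac12)$. These equations hold at every point of the orbit. The cost is a standard check: for $H=\tfrac12g^{\mu\nu}(p_\mu-kA_\mu)(p_\nu-kA_\nu)$ with $p_t=-E$ and $p_\phi=L$, the radial part must equal $\tfrac12(\Delta p_r^2-R(r)/\Delta)$ up to the constant $\tfrac12\mathcal K$. This follows from $(r^2+a^2)A_t+aA_\phi=-\bh r$ and $a\sin^2\theta\,A_t+A_\phi=0$, which is where the term $-k\bh r$ in $P$ comes from.

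\textbf{What each buys.} Your route needs no continuity argument in the initial data. The Mino parameter also makes the radial equation autonomous and decoupled from $\theta$, so in case (2) Picard--Lindel\"of applies directly to a scalar ODE. In proper time, by contrast, the radial equation carries the factor $\rho^2$, which depends on $\theta(\tau)$.
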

The above proposition leads to the following definition.
\begin{defn}\label{orbits}
		A maximal electromagnetic orbit $\gamma:I\to  \mathcal M$ is called
	\begin{itemize}
		\item a \emph{flyby electromagnetic orbit}
		if there exists $\tau_0\in I$ such that
		\[
		r(\gamma(\tau_0))=\bar r,\qquad
		R(\bar r)=0,\qquad R'(\bar r)\neq0,
		\]
		and 
		\[
		r\circ\gamma(I)=(-\infty,\bar r]
		\quad\text{or}\quad
		r\circ\gamma(I)=[\bar r,+\infty).
		\]
		The number  $\bar r$ is called the radial turning value of $\gamma$.
		\item a \emph{spherical electromagnetic orbit} if $r(\gamma(\tau))=r_0$ for all $\tau$, where $r_0$ is at least a double zero of the corresponding radial polynomial $R(r)$.
        \item a \emph{circular electromagnetic orbit} if it is a spherical electromagnetic orbit and $\theta(\gamma(\tau))=\theta_0$ for all $\tau$, where $\theta_0$ is at least double zero of the corresponding $\theta$-function $\Theta(\theta)$.
        \item an \emph{$r$-bouncing electromagnetic orbit}
		if there exists $\tau_0<\tau_1\in I$ such that
		\[
		r(\gamma(\tau_{0}))=r_{0}<r_1=r(\gamma(\tau_{1})),\qquad
		R(r_{i})=0,\qquad R'(r_{i})\neq0,\quad\text{for $i=0,1$}
		\]
		and 
		\[
		r\circ\gamma(I)=[r_0,r_1].
        \]
		The numbers $r_{i}$, $i\in\{0,1\}$, are called the radial turning values of $\gamma$.
	\end{itemize}
\end{defn}
\begin{oss}\label{extensions}
A flyby electromagnetic orbit may cross the horizons $r=r_\pm$.  Such a crossing	is not described by the Boyer--Lindquist chart itself, but by the	horizon-crossing extension $\mathcal M$, for instance using the ingoing	coordinates \eqref{regular} near future horizons or the corresponding outgoing	coordinates near past horizons. The radial coordinate $r$ is unchanged in these	charts. Thus the preceding definition is naturally understood in	$\mathcal M$, while the separated first-order equations	\eqref{equations-of-motions} are used inside individual 
Boyer--Lindquist blocks.  
\end{oss}
\section{Electromagnetic orbits} 
In this section we present several different results about electromagnetic orbits. 
We  restrict the dynamics to the equatorial hyperplane $\mathcal E:=\{\theta=\pi/2\}$;  this is dynamically consistent, since $\mathcal E$ is invariant under the Lorentz-force flow:
let $\mathscr R\colon\mathcal M\to\mathcal M$ be the equatorial reflection
$\mathscr R(t,r,\theta,\phi):=(t,r,\pi-\theta,\phi)$, whose fixed-point set is $\mathcal E$. 	Since the metric coefficients depend on $\theta$ only through $\sin^2\theta$ and $\cos^2\theta$,
the map $\mathscr R$ is an isometry; moreover $\mathscr R^*A=A$ (the potential \eqref{A} has no
$d\theta$-component, and $\sin^2\theta$, $\rho^2$ are $\mathscr R$-invariant), hence
$\mathscr R^*F=F$. Therefore, if $\gamma$ solves \eqref{LFE-abstract} with charge-to-mass ratio
$k$, so does $\mathscr R\circ\gamma$, with the same $k$, and  if  $\gamma$ is  a solution with $\gamma(0)\in\mathcal E$ and
$\dot\gamma(0)\in T_{\gamma(0)}\mathcal E$,  $\mathscr R\circ\gamma$ shares the same
initial position and velocity. By uniqueness for \eqref{LFE-abstract} they coincide, so $\gamma$
is fixed by $\mathscr R$ and thus $\theta\circ\gamma\equiv\pi/2$.
This confinement is precisely the property defining a massive particle  in
\cite{CaponioGerminarioMasiello2026} (see also \cite{KobialkoBogushGaltsov2022}). We also note, that if $\gamma$ is an equatorial electromagnetic orbit then  $\mathcal Q=0$ (see \eqref{carterconst}).

In the next subsection  we show  that orbits of the unit vector field associated with   $\partial_\phi$ in the time-machine region are electromagnetic for a charge-to-mass ratio uniquely determined by the radius of the orbit.
\subsection{Smooth closed electromagnetic orbits  tangent to \texorpdfstring{$\partial_\phi$}{d/dphi}}
Taking into account Proposition~\ref{time_orientation_killing_vectors} and Lemma~\ref{rtm}, in the positive $r$ time-machine region the future-pointing unit tangent along an equatorial $\partial_\phi$-circle is
\[
U=-(-g_{\phi\phi})^{-1/2}\partial_\phi.
\]
Let us define the effective potential
\[
V_k:=\sqrt{-g_{\phi\phi}}+kA_\phi.
\]
\begin{lem}
	Let $p\in\mathfrak T$. 
	Then $U_p$ is a future-pointing timelike vector, and the integral curve of $U$ through $p$
	solves \eqref{LFE-abstract} if and only if	$dV_k(p)=0$.
	\end{lem}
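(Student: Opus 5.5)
The plan is to compute $\nabla_U U$ directly in terms of the gradient of $g_{\phi\phi}$, compute $\bar F(U)$ in terms of the gradient of $A_\phi$, and compare the two.

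\emph{Step 1: $U_p$ is future-pointing timelike.} At $p\in\mathfrak T$ we have $g_{\phi\phi}<0$, so $U$ is well defined with $g(U,U)=-1$. In the equatorial positive-$r$ region, Lemma~\ref{rtm} places $p$ in block III. Proposition~\ref{time_orientation_killing_vectors} then shows that $\partial_\phi$ is past-pointing there: indeed $g(\partial_\phi,V)=a\Delta\sin^2\theta>0$. Hence $U$ is future-pointing. For a general $p\in\mathfrak T$ I would use the same computation, noting that $\Delta>0$ wherever $\partial_\phi$ is timelike.

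\emph{Step 2: the geodesic part.} Write $\xi=\partial_\phi$ and $f:=-g(\xi,\xi)>0$, so that $U=-f^{-1/2}\xi$. Since $\xi$ is Killing, the standard identity gives
\[
\nabla_\xi\xi=-\tfrac12\nabla\bigl(g(\xi,\xi)\bigr)=\tfrac12\nabla f .
\]
Moreover $\xi(f)=0$ by the $\phi$-invariance of the metric. Therefore
\[
\nabla_UU=f^{-1}\nabla_\xi\xi=\frac{\nabla f}{2f}=\nabla\log\sqrt f=\frac{\nabla\sqrt{-g_{\phi\phi}}}{\sqrt{-g_{\phi\phi}}}.
\]

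\emph{Step 3: the Lorentz force.} For any vector $v$,
\[
g(v,\bar F(U))=F(v,U)=-f^{-1/2}F(v,\xi)=-f^{-1/2}\,dA(v,\xi).
\]
Cartan's formula gives $\iota_\xi dA=\mathcal L_\xi A-d(\iota_\xi A)=-dA_\phi$, using $\mathcal L_\xi A=0$. Hence $dA(\xi,v)=-dA_\phi(v)$, so $dA(v,\xi)=dA_\phi(v)$. It follows that $\bar F(U)=-f^{-1/2}\nabla A_\phi$.

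\emph{Step 4: comparison.} The LFE \eqref{LFE-abstract} along the integral curve reads
\[
f^{-1/2}\bigl(\nabla\sqrt f+k\nabla A_\phi\bigr)=0,
\]
that is, $\nabla V_k=0$, or equivalently $dV_k=0$ at every point of the curve. Both $f$ and $A_\phi$ are $\phi$-invariant, so $V_k$ is constant along the $\phi$-circle. Hence $dV_k(p)=0$ holds if and only if $dV_k$ vanishes along the entire integral curve, which yields the stated equivalence. One should also check that $\nabla_UU$ is automatically orthogonal to $U$, as required for consistency. This holds because $dV_k(\xi)=0$.

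The main obstacle is keeping the sign conventions consistent: the orientation of $\bar F$ in \eqref{barF}, the sign in Cartan's formula, and the sign of $U$. The check I would perform first is that both terms come out with the same sign $f^{-1/2}$ and combine into $d(\sqrt{-g_{\phi\phi}}+kA_\phi)$ rather than into a difference.
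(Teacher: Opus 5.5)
Your proposal is correct and follows the same route as the paper: the Killing identity gives $\nabla_UU=(-g_{\phi\phi})^{-1/2}\nabla\sqrt{-g_{\phi\phi}}$, the $\phi$-independence of $A$ gives $\bar F(U)=-(-g_{\phi\phi})^{-1/2}\nabla A_\phi$, and comparing the two yields $\nabla V_k=0$. The only differences are that you obtain $\bar F(\partial_\phi)=\nabla A_\phi$ from Cartan's formula rather than from the coordinate identity $F_{\mu\phi}=\partial_\mu A_\phi$, and that you spell out two points the paper leaves implicit: why $U_p$ is future-pointing for every $p\in\mathfrak T$ (via $g_{\phi\phi}<0\Rightarrow\Delta>0$ and $g(\partial_\phi,V)>0$), and why, by $\phi$-invariance, vanishing of $dV_k$ at $p$ is equivalent to its vanishing along the whole orbit.
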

\begin{proof}
	Set  $f:=(-g_{\phi\phi})^{-1/2}$, so that $U=-f\partial_\phi$. Since $\partial_\phi$ is Killing and $\partial_\phi(f)=0$,
	\[
	\nabla_U U=\nabla_{-f\partial_\phi}(-f\partial_\phi)=f^2\nabla_{\partial_\phi}\partial_\phi
	=-\frac12 f^2 \nabla(g_{\phi\phi})
	=
	f\,\nabla\!\bigl(\sqrt{-g_{\phi\phi}}\bigr).
	\]
	On the other hand,
	\[
	\bar F(U)=
	-f\bar F(\partial_\phi)
	=
	-f\,g^{\alpha\mu}F_{\mu\phi}\partial_\alpha
	\]
	Because $A$ is independent of $\phi$,
	$F_{\mu\phi}=\partial_\mu A_\phi$,
	hence
	$
	\bar F(U) = -f\nabla A_\phi$.
	Therefore, the LFE is equivalent to
	\[
	f\,\nabla\!\bigl(\sqrt{-g_{\phi\phi}}\bigr)
	=
	-kf\,\nabla(A_\phi),
	\]
	that is $\nabla\!\bigl(\sqrt{-g_{\phi\phi}}+kA_\phi\bigr)=0$.
\end{proof}
On the equatorial plane, recalling \eqref{A} and \eqref{gphiphi in pi/2}, we have
\[
V_k=V_k(r)=
\frac{\sqrt{-G(r)}+ka\bh}{r}
\]
and 
\[
V_k'(r)
=
\frac{-ka\bh\sqrt{-G(r)}-\bigl(a^2\bh^2-a^2Mr+r^4\bigr)}{r^2\sqrt{-G(r)}}.
\]
Hence,
\[
V_k'(r)=0
\iff
k=K(r),
\]
where
\[
K(r)
:=
-\frac{a^2\bh^2-a^2Mr+r^4}{a\bh\sqrt{-G(r)}}.
\]
\begin{thm}\label{bij}
	Let $\bar t\in \R$, $\bar\phi\in[0,2\pi)$. Then 
	\begin{enumerate}
		\item for every $k\in\R$ with $\sgn(k)=-\sgn(\bh)$ and $|k|>1$, there exists a unique $r_k\in(0,r_{\mathrm{tm}})$ such that the integral curve of $U$ through $(\bar t, r_k,\pi/2,\bar\phi)$ is a  closed electromagnetic orbit  with charge-to-mass ratio $k$ (so that $K(r_k)=k$);
		\item for every $\bar r\in(0,r_{\mathrm{tm}})$, the equatorial closed timelike circle
		\[
		\alpha_{\bar r}(\tau)
		=
		\bigl(\bar t,\bar r,\tfrac{\pi}{2},\bar\phi+\omega(\bar r)\tau\bigr),
		\qquad
		\omega(\bar r)
		=
		-\bigl(-g_{\phi\phi}(\bar r,\pi/2)\bigr)^{-1/2},
		\]
		is a closed electromagnetic orbit with charge-to-mass ratio
		\begin{equation}\label{k}
		k=K(\bar r)
		=
		\frac{\omega(\bar r)\,\bar r^2}{2a\bh}\,
		\partial_r g_{\phi\phi}(\bar r,\pi/2).
		\end{equation}
	\end{enumerate}
\end{thm}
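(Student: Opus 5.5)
The plan is to reduce everything to the preceding lemma: the integral curve of $U$ through $p\in\mathfrak T$ solves \eqref{LFE-abstract} if and only if $dV_k(p)=0$. On $\mathcal E\cap\mathfrak T\cap\{r>0\}$ I will first check that $\partial_\theta V_k=0$ automatically. The equatorial reflection $\mathscr R$ leaves $g_{\phi\phi}$ and $A_\phi$ invariant, so $V_k$ is even in $\theta-\pi/2$. Since $V_k$ does not depend on $t,\phi$, the condition $dV_k(p)=0$ then reduces to $V_k'(r)=0$, which by the computation before the statement is equivalent to $k=K(r)$.

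Next come the remaining properties of the curve. Its components are $t=\bar t$, $r=\bar r$, $\theta=\pi/2$ and $\phi=\bar\phi+\omega\tau$. It is unit-speed and future-pointing by Proposition~\ref{time_orientation_killing_vectors} together with Lemma~\ref{rtm}, and it is closed with equal velocities after $\tau=2\pi/|\omega|$. This settles part (2) modulo the identity \eqref{k}. That identity is routine: one writes $g_{\phi\phi}(r,\pi/2)=G(r)/r^2$, so that $\partial_r g_{\phi\phi}=(rG'-2G)/r^3$. Substituting $\omega=-r/\sqrt{-G}$ and $G'=4r^3+2a^2r+2a^2M$ gives $\frac{\omega r^2}{2a\bh}\partial_rg_{\phi\phi}=-\frac{r^4-a^2Mr+a^2\bh^2}{a\bh\sqrt{-G}}=K(r)$.

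For part (1) I need $K$ to be a bijection from $(0,r_{\mathrm{tm}})$ onto $\{k:\sgn k=-\sgn\bh,\ |k|>1\}$. Write $K=-\sgn(\bh)\,h$ with $h=N/(a|\bh|\sqrt{-G})$, $N:=r^4-a^2Mr+a^2\bh^2$, and set $q:=a^2\bh^2$.
\begin{itemize}
\item \emph{Limit at $0$.} As $r\to0^+$ we have $N\to q$ and $a|\bh|\sqrt{-G}\to q$, so $h\to1$.
\item \emph{Limit at $r_{\mathrm{tm}}$.} At $r=r_{\mathrm{tm}}$ we have $G=0$, i.e.\ $q=r^4+a^2r^2+2a^2Mr$. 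Hence $N=2r^4+a^2r^2+a^2Mr>0$, while $\sqrt{-G}\to0^+$, so $h\to+\infty$.
\item \emph{Monotonicity.} This is the step I expect to be the main obstacle. Differentiating and multiplying by the positive factor $\sqrt{-G}$, the sign of $h'$ is that of $\Phi:=-N'G+\tfrac12 NG'$. Expanding gives
\[
\Phi=6qr^3+a^2qr-2r^7-3a^2r^5-8a^2Mr^4+a^4M^2r.
\]
In the time-machine region $-G>0$, i.e.\ $q>r^4+a^2r^2+2a^2Mr$, so $6qr^3>6r^7+6a^2r^5+12a^2Mr^4$. Therefore $\Phi>4r^7+3a^2r^5+4a^2Mr^4+a^2qr+a^4M^2r>0$ for $r>0$.
\end{itemize}
By continuity, $h$ is then a strictly increasing bijection from $(0,r_{\mathrm{tm}})$ onto $(1,\infty)$. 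This gives existence and uniqueness of $r_k$, and shows that $\sgn K=-\sgn\bh$, which completes part (1).
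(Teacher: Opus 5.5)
Your proposal is correct and follows the paper's proof essentially step for step. You use the same limits of $N/\sqrt{-G}$ at $0$ and at $r_{\mathrm{tm}}$, and the same numerator $-N'G+\tfrac12 NG'$, made positive by trading the $6a^2\bh^2r^3$ term against $-G>0$ (the paper does this via the rewriting with $S=-G$); your check of the formula for $K(\bar r)$ in part (2) via $\partial_r g_{\phi\phi}=2N/r^3$ is also the paper's. Your reflection argument that $\partial_\theta V_k=0$ on the equator, which justifies reducing $dV_k=0$ to $V_k'(r)=0$, is a small point the paper leaves implicit.
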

\begin{proof}
	Let us define
	\begin{equation}\label{fr}
	f(r):=\frac{a^2\bh^2-a^2Mr+r^4}{\sqrt{-G(r)}}.
	\end{equation}
	Then $f(0)=a|\bh|$; moreover, as $r_{\mathrm{tm}}$ is the unique zero of $G(r)$ for $r\in [0, +\infty)$ (recall Lemma~\ref{rtm}), we have 
		\[
		a^2\bh^2=2a^2Mr_{\mathrm{tm}}+a^2r_{\mathrm{tm}}^2+r_{\mathrm{tm}}^4.
		\]
		Hence
		\[
		a^2\bh^2-a^2Mr_{\mathrm{tm}}+r_{\mathrm{tm}}^4
		=
		a^2Mr_{\mathrm{tm}}+a^2r_{\mathrm{tm}}^2+2r_{\mathrm{tm}}^4>0.
		\]
		Therefore the numerator of $f$ converges to a strictly positive limit, and thus
		\[
		\lim_{r\to r_{\mathrm{tm}}^-}f(r)=+\infty.
		\]
		In order to evaluate the derivative of $f$ let us set
		\[
		N(r):=a^2\bh^2-a^2Mr+r^4,
		\qquad
		S(r):=-G(r)=a^2\bh^2-2a^2Mr-a^2r^2-r^4.
		\]
		Then \(S(r)>0\) on \((0,r_{\mathrm{tm}})\), and $f(r)=N(r)/\sqrt{S(r)}$.
		Therefore,
		\[
		f'(r)
		=
		\frac{N'(r)S(r)-\frac12 N(r)S'(r)}{S(r)^{3/2}}
		=
		\frac{(-a^2M+4r^3)S(r)+(a^2M+a^2r+2r^3)N(r)}{S(r)^{3/2}}.
		\]
		Expanding the numerator, one finds
		\[
		(-a^2M+4r^3)S(r)+(a^2M+a^2r+2r^3)N(r)
		=
		r\Bigl(a^4(M^2+\bh^2)+6a^2\bh^2r^2-8Ma^2r^3-3a^2r^4-2r^6\Bigr).
		\]
		Now, using the definition of \(S(r)\), we rewrite
		\[
		6a^2\bh^2r^2-8Ma^2r^3-3a^2r^4-2r^6
		=
		6r^2S(r)+4Ma^2r^3+3a^2r^4+4r^6.
		\]
		Hence
		\[
		f'(r)
		=
		\frac{r}{S(r)^{3/2}}
		\Bigl(
		a^4(M^2+\bh^2)+6r^2S(r)+4Ma^2r^3+3a^2r^4+4r^6
		\Bigr).
		\]
		Since \(r>0\) and \(S(r)>0\) on \((0,r_{\mathrm{tm}})\), it follows that
		$f'(r)>0$ for every $r\in(0,r_{\mathrm{tm}})$.
	Hence, $f$ is strictly increasing and
	\[
	f\bigl((0,r_{\mathrm{tm}})\bigr)=(a|\bh|,+\infty).
	\]
	Since $K(r)=-f(r)/(a\bh)$, the function $K$ is one-to-one. In particular \[K\big((0,r_{\mathrm{tm}})\big)=\begin{cases}
	 (-\infty, -1)&\text{if    $\bh>0$}\\
    (1,+\infty)& \text{if    $\bh<0$}
	\end{cases}\]
    and this proves statement (1).
	
	To prove (2), 	we have to show that the charge-to-mass ratio $k$ in \eqref{k}
	coincides with $K(\bar r)$. By \eqref{gphiphi in pi/2},
	$g_{\phi\phi}(r,\pi/2)=\frac{G(r)}{r^2}$ and we have 
	on $(0,r_{\mathrm{tm}})$,
	\[
	\omega(r)=-\frac{r}{\sqrt{-G(r)}}.
	\]
	By direct computation, we can see that
	\[
	\partial_r g_{\phi\phi}(r,\pi/2)
	=
	2r-\frac{2a^2M}{r^2}+\frac{2a^2\bh^2}{r^3},
	\]
	thus
	\[
	\frac{\omega(r)\,r^2}{2a\bh}\,\partial_r g_{\phi\phi}(r,\pi/2)
	=
	-\frac{r^4-a^2Mr+a^2\bh^2}{a\bh\sqrt{-G(r)}}
	=
	K(r).
	\]
	\end{proof}
\begin{oss}
The proper-time period of $\alpha_{\bar r}$ is
\[T_{\bar r}=2\pi\sqrt{-g_{\phi\phi}(\bar r,\pi/2)}
=
\frac{2\pi}{\bar r}\sqrt{a^2Q^2-2a^2M\bar r-a^2\bar r^2-\bar r^4}.
\]	
\end{oss}
\begin{oss}
The two parts of Theorem~\ref{bij} are equivalent formulations of the same result. The first fixes the charge-to-mass ratio and determines the unique radius of the corresponding equatorial closed timelike circle. The second fixes the radius of an equatorial closed timelike circle and determines the unique charge-to-mass ratio for which it is a solution of the Lorentz force equation.
	\end{oss}
    \begin{oss}
 The interaction between the black hole charge $\bh$ and the particle’s charge (per unit mass) \eqref{k}  provides precisely the force required for  the particle  to follow a smooth closed solution of the Lorentz force equation. Notice that, in contrast, the existence of closed timelike curves in the Kerr spacetime would require an additional external force acting on the particles, since no electromagnetic field source is present.
\end{oss}

\begin{cor}
	Let $k$ and $r_k$ be as in Theorem~\ref{bij}. Then the corresponding smooth closed  electromagnetic orbit satisfies 
		\[
	\mathcal{Q}=0,
	\qquad
	R(r_k)=0,
	\qquad
	R'(r_k)=0.
	\]
	Moreover, the conserved energy and angular momentum per unit mass are
	\begin{equation}\label{ELU}
	E=-\frac{a^2M+r_k^3}{a\sqrt{-G(r_k)}},
	\qquad
	L=-\frac{2r_k^3+a^2(M+r_k)}{\sqrt{-G(r_k)}}.
	\end{equation}
\end{cor}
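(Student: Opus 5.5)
We compute the constants of motion directly from the explicit curve. By Theorem~\ref{bij}, the orbit is $\alpha_{r_k}(\tau)=(\bar t,r_k,\pi/2,\bar\phi+\omega\tau)$, with
\[
\omega=-\frac{r_k}{\sqrt{-G(r_k)}} .
\]
Its velocity is therefore $\dot\gamma=\omega\,\partial_\phi$.

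\textbf{Step 1: the Carter constant.} Since $\theta\equiv\pi/2$, we have $\mathcal Q=0$. This is the general observation made for equatorial orbits, since $\mathcal K=(aE-L)^2$ when $\dot\theta=0$ and $\cos\theta=0$.

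\textbf{Step 2: the energy.} Substituting into \eqref{EL} gives
\[
E=-\omega g_{t\phi}-kA_t .
\]
On the equator,
\[
g_{t\phi}=\frac{a(\Delta-r^2-a^2)}{r^2}=-\frac{a(2Mr-\bh^2)}{r^2},\qquad A_t=-\frac{\bh}{r}.
\]
Hence
\[
E=\frac{\omega a(2Mr_k-\bh^2)}{r_k^2}+\frac{k\bh}{r_k}.
\]
Now insert $\omega$ and $k=K(r_k)=-N(r_k)/(a\bh\sqrt{-G(r_k)})$, with $N(r)=a^2\bh^2-a^2Mr+r^4$. Put everything over the common denominator $a r_k\sqrt{-G}$. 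The numerator becomes
\[
-a^2(2Mr_k-\bh^2)-N(r_k)=-(a^2Mr_k+r_k^4),
\]
which yields $E=-(a^2M+r_k^3)/(a\sqrt{-G(r_k)})$.

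\textbf{Step 3: the angular momentum.} Similarly,
\[
L=\omega g_{\phi\phi}+kA_\phi,\qquad g_{\phi\phi}=\frac{G}{r^2},\qquad A_\phi=\frac{a\bh}{r}.
\]
Since $\omega\,G/r_k^2=\sqrt{-G}/r_k$, we get
\[
L=\frac{\sqrt{-G}}{r_k}-\frac{N}{r_k\sqrt{-G}}=\frac{-G-N}{r_k\sqrt{-G}} .
\]
Here $-G-N=-2r_k^4-a^2r_k^2-a^2Mr_k$, which gives the stated formula \eqref{ELU} after dividing by $r_k$.

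\textbf{Step 4: $R(r_k)=0$ and $R'(r_k)=0$.} The relation $R(r_k)=0$ follows from the third equation in \eqref{equations-of-motions}, since $\dot r\equiv0$ along the orbit. For $R'(r_k)=0$, first note that the orbit lies in block III by Lemma~\ref{rtm}. Suppose $R'(r_k)\neq0$. Then by Proposition~\ref{initial-conditions-and-zeroes}(1), $r$ would change sign in its derivative at every instant, contradicting $r\equiv r_k$.

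Alternatively, $R'(r_k)=0$ can be checked by hand. Use $\mathcal K=(aE-L)^2$ and $P(r)=E(r^2+a^2)-aL-k\bh r$. Then
\[
R'(r)=2P(r)P'(r)-\Delta'(r)(\mathcal K+r^2)-2r\Delta .
\]
One shows that $P(r_k)=-g(\dot\gamma,V)=-a\omega\Delta$. This reduces the check to an algebraic identity in $r_k$, with $\bh^2$ eliminated using $k=K(r_k)$.

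The main obstacle is justifying Proposition~\ref{initial-conditions-and-zeroes} cleanly, since its first alternative requires some care at an equilibrium. If that is unconvincing, I would fall back on the direct algebraic verification, which is routine but lengthy.
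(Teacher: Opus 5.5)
Your proposal is correct and follows essentially the same route as the paper: $\mathcal Q=0$ from $\theta\equiv\pi/2$, then $E$ and $L$ by direct substitution into \eqref{EL}, then $R(r_k)=0$ from $\dot r\equiv 0$, and finally $R'(r_k)=0$ by contradiction with Proposition~\ref{initial-conditions-and-zeroes}(1); your explicit algebra, including $-a^2(2Mr_k-\bh^2)-N(r_k)=-(a^2Mr_k+r_k^4)$ and $-G-N=-(2r_k^4+a^2r_k^2+a^2Mr_k)$, checks out. The paper applies that proposition exactly as you do, so the fallback algebraic verification of $R'(r_k)=0$ is not needed.
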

\begin{proof}
As $\dot\theta=0$, last equation in  \eqref{equations-of-motions} and \eqref{Theta} give
 $\mathcal{Q}=0$.
Using the definition of $E$ and $L$ in \eqref{EL}  we obtain \eqref{ELU}.
Moreover, since the electromagnetic orbit has constant radial coordinate $r(\tau)\equiv r_k$, one has $\dot r=0$,
and the third equation in \eqref{equations-of-motions} yields $R(r_k)=0$.
If $R'(r_k)\neq 0$, then by Proposition~\ref{initial-conditions-and-zeroes},
$r_k$ would be an $r$-turning value, and $(r\circ\gamma)'$ would change sign at the
corresponding parameter value. This is impossible, because $r\circ\gamma$ is constant.
Therefore
$R'(r_k)=0$.
\end{proof}
\begin{prop}\label{radialmax}
	Let $k$ be in the admissible range of Theorem~\ref{bij}, and let $r_k$ be the corresponding radius. Then
	\[
	V_k''(r_k)<0.
	\]
\end{prop}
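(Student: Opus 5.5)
We need $V_k''(r_k)<0$, where $V_k(r)=(\sqrt{-G(r)}+ka\bh)/r$ on $(0,r_{\mathrm{tm}})$. The plan is to write $V_k'$ in a factored form that makes the sign at $r_k$ transparent, and then differentiate only the factor that vanishes. From the formula for $V_k'$ in the text, with $S=-G$ and $N(r)=a^2\bh^2-a^2Mr+r^4$ as in the proof of Theorem~\ref{bij}, we have
\[
V_k'(r)=\frac{-ka\bh\sqrt{S(r)}-N(r)}{r^2\sqrt{S(r)}}
=\frac{-ka\bh-f(r)}{r^2}
=\frac{a\bh\bigl(K(r)-k\bigr)}{r^2},
\]
where we used $f=N/\sqrt S$ from \eqref{fr} and $K=-f/(a\bh)$.

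Since $K(r_k)=k$, differentiating the product at $r=r_k$ kills the term in which $1/r^2$ is differentiated. This gives
\[
V_k''(r_k)=\frac{a\bh\,K'(r_k)}{r_k^2}=-\frac{f'(r_k)}{r_k^2}.
\]

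In the proof of Theorem~\ref{bij} it was shown that
\[
f'(r)=\frac{r}{S(r)^{3/2}}\Bigl(a^4(M^2+\bh^2)+6r^2S(r)+4Ma^2r^3+3a^2r^4+4r^6\Bigr)>0
\]
on $(0,r_{\mathrm{tm}})$. Since $r_k\in(0,r_{\mathrm{tm}})$, it follows that $V_k''(r_k)=-f'(r_k)/r_k^2<0$.

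The only delicate point is the algebraic rewriting of $V_k'$ as $a\bh(K-k)/r^2$. This requires the positivity of $\sqrt{S}$ on $(0,r_{\mathrm{tm}})$, so that dividing through by it is legitimate, together with the identity $K=-f/(a\bh)$. Both are already available from the preceding computations, so I expect no real obstacle. The monotonicity of $f$ from Theorem~\ref{bij} does all the work, and the result says the circle sits at a strict maximum of the effective potential $V_k$.
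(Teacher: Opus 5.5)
Your proposal is correct and is essentially the paper's own proof. The paper likewise writes $V_k'(r)=-\frac{a\bh}{r^2}\bigl(k-K(r)\bigr)$ and differentiates, noting that the term containing $k-K(r_k)=0$ drops out, so that $V_k''(r_k)=\frac{a\bh}{r_k^2}K'(r_k)=-\frac{f'(r_k)}{r_k^2}$; it then concludes from the positivity of $f'$ on $(0,r_{\mathrm{tm}})$ established in the proof of Theorem~\ref{bij}.
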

\begin{proof}
Since
\[
V_k'(r)=\frac{-ka\bh\sqrt{-G(r)}-\bigl(a^2\bh^2-a^2Mr+r^4\bigr)}{r^2\sqrt{-G(r)}},
\]
one may rewrite it as
\[
V_k'(r)=\frac{-a\bh}{r^2}\bigl(k-K(r)\bigr).
\]
Differentiating gives
\[
V_k''(r)
=
\frac{2a\bh}{r^3}\bigl(k-K(r)\bigr)+\frac{a\bh}{r^2}K'(r).
\]
At the critical radius $r=r_k$, where $K(r_k)=k$, this becomes
\[
V_k''(r_k)
=
\frac{a\bh}{r_k^2}K'(r_k)
= -\frac{f'(r_k)}{r_k^2},
\]
$f$ defined in \eqref{fr}. As shown in the proof of Theorem~\ref{bij}, $f'(r_k)>0$, thus $V_k''(r_k)<0$.
\end{proof}
\begin{oss}
	Proposition~\ref{radialmax}  shows that  the closed electromagnetic orbits in Theorem~\ref{bij}  are associated with strict local maximum of the effective potential $V_k$. This does not imply that they are radially stable, i.e. that they also satisfy $R''(r_k)<0$.
\end{oss}

\subsection{Equatorial circular electromagnetic orbits inside the time-machine region} 
In this subsection, we give a necessary and sufficient condition ensuring that a non-closed equatorial circular curve inside the time-machine region is an electromagnetic orbit. 
\begin{thm}\label{proposition alpha curve}
	Let the surface $\mathcal S:=\{r=\bar r,\theta=\pi/2\}$ be contained in $\mathfrak T$, and let
	\[
	\gamma(s)=(t_0-s,\bar r,\pi/2,\phi_0+\omega s),
	\qquad t_0,\omega\in\mathbb R,\ \phi_0\in[0,2\pi).
	\]
	Then $\gamma$ is an electromagnetic orbit if and only if
	\begin{align}
		\omega=\frac{1}{-a+\sqrt{\frac{\bar r^2 (a^2 + \bar r^2)}{\bh^2 - 2 M \bar r}}}. \label{omega middle curve}
	\end{align}
	and
	\begin{align}
		k= -\frac{\big[2\bar r^2 (\bh^2 - \frac{3}{2} M \bar r) + a^2 (\bh^2 - M \bar r)\big]\big[ \bar r^4+ a^2 \bar r^2 + a \sqrt{\bar r^2 (\bh^2 - 2 M \bar r) (a^2 + \bar r^2)}\big]}{\bh \bar r (a^2 + \bar r^2) \big[-\bar r^4 + a^2 \big(\bh^2 - \bar r (2 M + \bar r)\big)\big]}. \label{kbarr} 
	\end{align}
\end{thm}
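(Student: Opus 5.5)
The plan is to read the hypothesis literally: $\gamma$ is an electromagnetic orbit only if $s$ is already proper time. This gives two independent scalar conditions, one for each of \eqref{omega middle curve} and \eqref{kbarr}. The tangent is the constant-coefficient Killing field
\[
X:=-\partial_t+\omega\,\partial_\phi .
\]
Normalization $g(X,X)=-1$ fixes $\omega$, and the Lorentz force equation then fixes $k$. Throughout we work on $\mathcal E$, with $\bar r$ such that $\mathcal S\subset\mathfrak T$. By Remark~\ref{remark_partial_t_timelike_time_machine}, $\partial_t$ is timelike there, so $s:=\bh^2-2M\bar r>0$ when $\bar r>0$; for $\bar r<0$ this is automatic. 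Also $G(\bar r)<0$.

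\emph{Step 1 (normalization).} On $\mathcal E$ we have $g_{tt}=-(\bar r^2-2M\bar r+\bh^2)/\bar r^2$, $g_{t\phi}=a s/\bar r^2$ and $g_{\phi\phi}=G(\bar r)/\bar r^2$. The condition $g(X,X)=-1$ then becomes the quadratic
\[
G(\bar r)\,\omega^2-2as\,\omega-s=0 .
\]
Its discriminant is $a^2s^2+sG=s\,\bar r^2(\bar r^2+a^2)>0$. The two roots are
\[
\omega_\pm=\frac{as\pm \bar r\sqrt{s(a^2+\bar r^2)}}{G}.
\]
After rationalizing, the root with the $+$ sign is exactly \eqref{omega middle curve}. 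To exclude the other root, check future-pointing in block III, i.e. $g(X,V)<0$ with $V=(\bar r^2+a^2)\partial_t+a\partial_\phi$. On the equator, $g(\partial_t,V)=-\Delta$ and $g(\partial_\phi,V)=a\Delta$ (proof of Proposition~\ref{time_orientation_killing_vectors}), so
\[
g(X,V)=\Delta(1+a\omega).
\]
Future-pointing is therefore $1+a\omega<0$, since $\Delta>0$ in block III. I expect this inequality to select precisely \eqref{omega middle curve}: that root satisfies $1/\omega+a=\bar r\sqrt{(a^2+\bar r^2)/s}$, and the sign bookkeeping of this quantity (including whether $\bar r<0$) is the point to verify. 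This sign selection is the part I would check most carefully.

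\emph{Step 2 (Lorentz force equation).} Because $X$ is Killing with constant coefficients, the same computation as in the lemma for $U$ gives $\nabla_XX=-\tfrac12\nabla\bigl(g(X,X)\bigr)$. Here $g(X,X)$ is regarded as the function $N_\omega:=g_{tt}-2\omega g_{t\phi}+\omega^2g_{\phi\phi}$ on $\mathcal M$, not just its value $-1$ on $\mathcal S$. Since $A$ is invariant under both symmetries, Cartan's formula gives $F(\cdot,X)=d\bigl(A(X)\bigr)$, so $\bar F(X)=\nabla\bigl(A(X)\bigr)$ with $A(X)=-A_t+\omega A_\phi$. Hence the LFE is equivalent to
\[
\nabla\Bigl(\tfrac12 N_\omega+k\,A(X)\Bigr)=0\quad\text{on }\mathcal S .
\]
The $\theta$-component vanishes on $\mathcal E$ by the reflection symmetry $\mathscr R$. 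The $t$- and $\phi$-components vanish by stationarity and axisymmetry. We are left with the single radial equation
\[
\tfrac12\partial_rN_\omega(\bar r)=-k\,\partial_r\bigl(A(X)\bigr)(\bar r).
\]

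\emph{Step 3 (solving for $k$).} Use the equatorial restrictions $A_t=-\bh/r$ and $A_\phi=a\bh/r$. Then $\partial_r(A(X))=-\bh(1+a\omega)/\bar r^2$, which is nonzero by Step~1. This gives $k$ explicitly as a rational function of $\bar r$ and $\omega$. Substituting \eqref{omega middle curve} and rationalizing via $s\,\bar r^2(a^2+\bar r^2)$ should produce \eqref{kbarr}, whose denominator contains $-G(\bar r)=-\bar r^4+a^2(\bh^2-\bar r(2M+\bar r))$. This last simplification is routine but messy, and I would verify it symbolically. The converse direction is immediate, since each step above is an equivalence.
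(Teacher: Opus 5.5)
Your argument is correct. Its outer structure is the paper's: impose $g(\dot\gamma,\dot\gamma)=-1$, keep only the radial component of the LFE, and select the branch by $g(\dot\gamma,V)=\Delta(1+a\omega)<0$.

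The real difference is how the radial equation is obtained. The paper reads off the equatorial Christoffel symbols and $F^r{}_t$, $F^r{}_\phi$ from Frolov--Novikov. You instead use $\nabla_XX=-\tfrac12\nabla g(X,X)$ for the Killing field $X=-\partial_t+\omega\partial_\phi$, together with $\bar F(X)=\nabla(A(X))$ from Cartan's formula. This is the effective-potential mechanism of the paper's lemma for $U$, extended to $X$. It is self-contained and shows at once why the $t$, $\theta$, $\phi$ components vanish.

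It yields the same equation. On $\mathcal E$ one has $N_\omega=-1+\omega^2(a^2+r^2)+(2Mr-\bh^2)(1+a\omega)^2/r^2$, and $\tfrac12\partial_rN_\omega(\bar r)=k\bh(1+a\omega)/\bar r^2$ is exactly the vanishing of the bracket in \eqref{LFE of middle curve}.

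The algebra you deferred does close. Set $\beta:=|\bar r|\sqrt{(a^2+\bar r^2)/s}$ and $\omega=1/(\beta-a)$. Then $k=\bigl(\bar r^4+(\bh^2-M\bar r)\beta^2\bigr)/\bigl(\bh\bar r\beta(\beta-a)\bigr)$. Multiply numerator and denominator by $s\beta(\beta+a)$, and use $s(\beta^2-a^2)=G(\bar r)$ and $s\beta^2=\bar r^2(a^2+\bar r^2)$. This gives \eqref{kbarr}, whose first bracket is $\bar r^2s+(\bh^2-M\bar r)(a^2+\bar r^2)$.

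Two small repairs are needed.

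First, your $s=\bh^2-2M\bar r>0$ does not follow from $\partial_t$ being timelike. On $\mathcal E$ that only gives $\bar r^2+s>0$. The positivity follows instead from $G(\bar r)=\bar r^2(\bar r^2+a^2)-a^2s<0$. Alternatively, as in the paper, use $g(\partial_t,\partial_\phi)=as/\bar r^2>0$, since $\partial_t$ is future timelike and $\partial_\phi$ is past timelike.

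Second, the root \eqref{omega middle curve} satisfies $1/\omega+a=\beta$, with $|\bar r|$ rather than $\bar r$. So in your labelling it is $\omega_+$ for $\bar r>0$ but $\omega_-$ for $\bar r<0$. A cleaner way is to write the normalization as $\omega^2\bar r^2(a^2+\bar r^2)=s(1+a\omega)^2$. This gives $\omega\in\{1/(\beta-a),\,-1/(\beta+a)\}$ for either sign of $\bar r$. Since $G(\bar r)<0$ is equivalent to $0<\beta<a$, the first root has $1+a\omega=\beta/(\beta-a)<0$. The second has $1+a\omega=\beta/(\beta+a)>0$. This is exactly the selection you wanted, valid for both signs of $\bar r$.
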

\begin{proof}
	From \cite[Appendix D.2]{FrolovNovikov1998}  and \eqref{Ftrtheta}--\eqref{Fphirtheta}, we have 
	\[
	\Gamma^\mu_{tt}=\Gamma^\mu_{t\phi}=\Gamma^\mu_{\phi\phi}=0,
	\qquad
	F^\mu{}_\nu\,\dot\gamma^\nu=0,
	\qquad
	\mu=t,\phi.
	\]
	and for $\theta=\pi/2$ also
	\[
	\Gamma^\theta_{tt}=\Gamma^\theta_{t\phi}=\Gamma^\theta_{\phi\phi}=0,
	\qquad
	F^\theta{}_\nu\,\dot\gamma^\nu=0
	\]
	along $\gamma$. Therefore only the $r$-component of the Lorentz force equation is nontrivial.
	Using again  \cite[Appendix D.2]{FrolovNovikov1998} and \eqref{Frtphi}, the $r$-component reads
	\begin{align*}
		\Gamma^r_{tt} (\dot{\gamma}^t)^2
		+\Gamma^r_{\phi\phi} (\dot{\gamma}^{\phi})^2
		+2\Gamma^r_{t\phi} \dot{\gamma}^t \dot{\gamma}^\phi
		=
		k\bigl(F^r{}_t\dot{\gamma}^t + F^{r}{}_\phi \dot{\gamma}^{\phi}\bigr),
	\end{align*}
	that is, 
	\begin{align}\label{LFE of middle curve}
		\frac{\Delta}{\bar r ^5}
		\Bigl[-\bh k \bar r (1+a\omega)	+(\bh+a\bh\omega)^2 +\bar r\bigl(\bar r^3\omega^2-M(1+a\omega)^2\bigr)\Bigr]=0.
	\end{align}
	Solving \eqref{LFE of middle curve} together with the normalization condition $g(\dot\gamma,\dot\gamma)=-1$ gives the two branches 
	\begin{align*}
		\omega_\pm&=\frac{1}{-a\pm \sqrt{\frac{\bar r^2(a^2+\bar r^2)}{\bh^2-2M\bar r}}},\\
		k_\pm&=-\frac{\bigl[2\bar r^2(\bh^2-\frac32 M\bar r)+a^2(\bh^2-M\bar r)\bigr]
			\bigl[\bar r^4+a^2\bar r^2\pm a\sqrt{\bar r^2(\bh^2-2M\bar r)(a^2+\bar r^2)}\bigr]}
		{\bh \bar r (a^2+\bar r^2)\bigl[-\bar r^4+a^2(\bh^2-\bar r(2M+\bar r))\bigr]}.
	\end{align*}
	These quantities are well-defined  because, at points of $\mathcal S$, by Proposition~\ref{time_orientation_killing_vectors},
	\[
	g(\partial_t,\partial_\phi)	= \frac{a(\bh^2-2M\bar r)}{\bar r^2}>0,
	\]
	hence
	\[
	\bh^2-2M\bar r>0,
	\]
	and moreover  $g(\partial_\phi,\partial_\phi)|_{\mathcal S}<0$, i.e.
	\begin{align}\label{partial_phi_timelike}
		\bar{r}^2(\bar{r}^2+a^2)< a^2 (\bh^2 - 2 M \bar{r}).
	\end{align}
	Therefore,
	\[
	0<\sqrt{\frac{\bar r^2(a^2+\bar r^2)}{\bh^2-2M\bar r}}<a,
	\]
	and 
	\begin{equation}\label{1+aomega}
	1+a\omega_+<0,
	\qquad
	1+a\omega_->0.
	\end{equation}
	Hence, $g(\dot\gamma,V)=\Delta(1+a\omega_\pm)$
	is negative only for the $+$-branch. Thus the future-pointing condition selects uniquely $\omega=\omega_+$ and $k=k_+$. 
\end{proof}
\begin{oss}\label{elixorbit}
Theorem~\ref{proposition alpha curve}  can also be interpreted  as an existence result. Indeed, for each point 
	$q\in\mathcal S\subset \mathfrak T$  if the charge-to-mass ratio is chosen as
	$k(\bar r)$ in \eqref{kbarr} and the initial velocity is chosen as
	$-\partial_t+\omega(\bar r)\partial_\phi$, with $\omega(\bar r)$ given by
	\eqref{omega middle curve}, then the unique Lorentz-force trajectory with these
	initial data is the equatorial circular electromagnetic orbit displayed above. We emphasize that $\bar r$ can be positive or negative (see Figures~\ref{elicarpos} and \ref{elicarneg}).
\end{oss}
\begin{cor} \label{cor energy_ang spherical orbit}
	Let $\mathcal S$ and $\gamma$ be as in Theorem~\ref{proposition alpha curve} with $\omega$ and $k$ given by \eqref{omega middle curve} and \eqref{kbarr}.  Then $\gamma$ satisfies 
	\[
	\mathcal{Q}=0,
	\qquad
	R(\bar r)=0,
	\qquad
	R'(\bar r)=0.
	\]
	Moreover, the conserved energy and angular momentum per unit mass are
	\begin{align}
		E&=	\frac{a^2 \bar r (M-\bar r)+a\sqrt{\bar r^2(\bh^2-2M\bar r)(a^2+\bar r^2)}-\bar r^2\bigl[\bar r(M+\bar r)-\bh^2\bigr]}{\bar r^4+a^2\bar r^2-a\sqrt{\bar r^2(\bh^2-2M\bar r)(a^2+\bar r^2)}}, \label{energy middle curve}\\[4pt]
		L&=
		\frac{a^3M\bar r+a\bar r^2(\bh^2-M\bar r)+(a^2+\bar r^2)\sqrt{\bar r^2(\bh^2-2M\bar r)(a^2+\bar r^2)}}
		{\bar r^4+a^2\bar r^2-a\sqrt{\bar r^2(\bh^2-2M\bar r)(a^2+\bar r^2)}}. \label{ang momentum middle curve}
	\end{align}
\end{cor}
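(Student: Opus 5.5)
Following the pattern of the corollary to Theorem~\ref{bij}, we split the statement into a qualitative part, $\mathcal Q=0$, $R(\bar r)=0$, $R'(\bar r)=0$, and a computational part, the formulas for $E$ and $L$. We treat the qualitative part first.

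Since $\gamma$ lies in $\mathcal E$, we have $\dot\theta\equiv0$ and $\theta\equiv\pi/2$. The fourth equation of \eqref{equations-of-motions} then gives $\Theta(\pi/2)=0$. By \eqref{Theta}, at $\theta=\pi/2$ the term with $\cos^2\theta$ vanishes, so $\Theta(\pi/2)=\mathcal Q$ and hence $\mathcal Q=0$. Next, $r\circ\gamma\equiv\bar r$, so $\dot r=0$, and the third equation of \eqref{equations-of-motions} (with $\rho^2=\bar r^2\neq0$, since $\mathcal S$ avoids $\Sigma$) yields $R(\bar r)=0$. Suppose $R'(\bar r)\neq0$. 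Then Proposition~\ref{initial-conditions-and-zeroes}(1) says that $(r\circ\gamma)'$ changes sign at some parameter value, which is impossible because $r\circ\gamma$ is constant. Therefore $R'(\bar r)=0$.

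One technical point is that $\gamma$ must be parametrized by proper time and lie in a single Boyer--Lindquist block. Theorem~\ref{proposition alpha curve} imposes the normalization $g(\dot\gamma,\dot\gamma)=-1$, and its proof shows that $\gamma$ is future-pointing. For $\bar r>0$, Lemma~\ref{rtm} places $\mathcal S\subset\mathfrak T$ inside block III. For $\bar r<0$ we also have $\Delta>0$, so again we are in block III. Hence Theorem~\ref{first-order-equations} applies.

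For the energy and angular momentum we use \eqref{EL} with $\dot\gamma=-\partial_t+\omega\partial_\phi$. At $\theta=\pi/2$ the potential \eqref{A} gives $A_t=-\bh/\bar r$ and $A_\phi=a\bh/\bar r$. Therefore
\[
E=g_{tt}-\omega g_{t\phi}+k\bh/\bar r,\qquad L=-g_{t\phi}+\omega g_{\phi\phi}+ka\bh/\bar r,
\]
where
\[
g_{tt}=-(\bar r^2-2M\bar r+\bh^2)/\bar r^2,\qquad g_{t\phi}=a(\bh^2-2M\bar r)/\bar r^2,\qquad g_{\phi\phi}=G(\bar r)/\bar r^2.
\]
Into these expressions we substitute $\omega$ from \eqref{omega middle curve} and $k$ from \eqref{kbarr}.

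To keep the algebra manageable, we set $\sigma:=\sqrt{\bar r^2(\bh^2-2M\bar r)(a^2+\bar r^2)}$. We write $\omega=\bar r\sqrt{\bh^2-2M\bar r}/\bigl(\sqrt{(a^2+\bar r^2)}\,\bar r\,\sgn\bar r-a\sqrt{\bh^2-2M\bar r}\bigr)$. We then rationalize every denominator into multiples of $\bar r^4+a^2\bar r^2\pm a\sigma$, using
\[
(\bar r^4+a^2\bar r^2)^2-a^2\sigma^2=\bar r^2(a^2+\bar r^2)\bigl[\bar r^4+a^2\bar r^2-a^2(\bh^2-2M\bar r)\bigr]=-\bar r^2(a^2+\bar r^2)G(\bar r).
\]
This is exactly the kind of factor appearing in the denominator of $k$, so it cancels. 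Collecting terms should then give \eqref{energy middle curve} and \eqref{ang momentum middle curve}.

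The main obstacle is this final simplification, and in particular the handling of the sign of $\bar r$ inside $\sqrt{\bar r^2}$, since the statement covers both $\bar r>0$ and $\bar r<0$. We would first verify the identities for $\bar r>0$ and then check that the same algebraic expressions remain valid when $\bar r<0$, with $\sigma$ understood as the nonnegative root. As an independent consistency check, we would verify $R(\bar r)=P(\bar r)^2-\Delta(\bar r)(L-aE)^2-\Delta(\bar r)\bar r^2=0$ using $P(\bar r)=\Delta(\bar r)(1+a\omega)$, which follows from $P=-g(\dot\gamma,V)$ and the computation $g(\dot\gamma,V)=\Delta(1+a\omega)$ made in the proof of Theorem~\ref{proposition alpha curve}.
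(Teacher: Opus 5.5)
Your plan is the paper's proof. You get $\mathcal Q=0$ from the $\theta$-equation at $\theta=\pi/2$, $R(\bar r)=0$ from $\dot r\equiv0$, and $R'(\bar r)=0$ by contradiction with Proposition~\ref{initial-conditions-and-zeroes}. You then obtain $E,L$ by inserting $\dot\gamma^t=-1$, $\dot\gamma^\phi=\omega$, $A_t=-\bh/\bar r$, $A_\phi=a\bh/\bar r$ and the given $\omega,k$ into \eqref{EL}. Your expressions for $E,L$ in terms of $g_{tt},g_{t\phi},g_{\phi\phi}$, and your block-III justification, are correct.

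Three of your auxiliary formulas are wrong, though, and would spoil the algebra if used literally.

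First, your rewritten $\omega$ has a spurious factor $\bar r$ in the numerator; it is not even dimensionally consistent. From \eqref{omega middle curve}, $\omega=\sqrt{\bh^2-2M\bar r}\,/\bigl(|\bar r|\sqrt{a^2+\bar r^2}-a\sqrt{\bh^2-2M\bar r}\bigr)$.

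Second, the bracket $\bar r^4+a^2\bar r^2-a^2(\bh^2-2M\bar r)$ equals $+G(\bar r)$. Hence $(\bar r^4+a^2\bar r^2)^2-a^2\sigma^2=+\bar r^2(a^2+\bar r^2)G(\bar r)$; it is the denominator of \eqref{kbarr} that carries $-G(\bar r)$.

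Third, $P(\bar r)=-g(\dot\gamma,V)=-\Delta(\bar r)(1+a\omega)>0$. This is harmless in your check, since $P$ appears squared there.

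The sign-of-$\bar r$ difficulty you anticipate is only an artifact of writing $\sqrt{\bar r^2}=|\bar r|$. Keep your $\sigma$ and set $\Lambda:=\bar r^4+a^2\bar r^2-a\sigma$, which is negative in $\mathfrak T$. Using only $\sigma^2=\bar r^2(\bh^2-2M\bar r)(a^2+\bar r^2)$, you get, for either sign of $\bar r$,
\[
\omega=\frac{\sigma}{\Lambda},\qquad 1+a\omega=\frac{\bar r^2(a^2+\bar r^2)}{\Lambda},\qquad k\bh=\frac{\bar r N}{\Lambda},\qquad N:=2\bar r^2\bigl(\bh^2-\tfrac32M\bar r\bigr)+a^2(\bh^2-M\bar r).
\]
The identity $N-(\bh^2-2M\bar r)(a^2+\bar r^2)=\bar r(a^2M+\bh^2\bar r-M\bar r^2)$ then gives at once $E=-1+\bar r(a^2M+\bh^2\bar r-M\bar r^2)/\Lambda$ and $L=\bigl[(a^2+\bar r^2)\sigma+a\bar r(a^2M+\bh^2\bar r-M\bar r^2)\bigr]/\Lambda$. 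These are exactly \eqref{energy middle curve} and \eqref{ang momentum middle curve}, with no case split.
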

\begin{proof}
	Since $\theta\equiv\pi/2$, the last equation in \eqref{equations-of-motions} together with \eqref{Theta} gives
	$\mathcal Q=0$. Moreover, since $r\equiv\bar r$, one has $\dot r=0$, and the third equation in \eqref{equations-of-motions} yields
	$R(\bar r)=0$.
	If $R'(\bar r)\neq 0$, then by Proposition~\ref{initial-conditions-and-zeroes} the value $\bar r$ would be an $r$-turning value, and $(r\circ\gamma)'$ would change sign. This is impossible, because $r\circ\gamma$ is constant. Therefore, 
	$R'(\bar r)=0$.
	Finally, using the definitions of $E$ and $L$ in \eqref{EL}, together with $\dot\gamma^t=-1$, $\dot\gamma^\phi=\omega$ and the values of $\omega$ and $k$ given by \eqref{omega middle curve} and \eqref{kbarr}, we obtain \eqref{energy middle curve} and \eqref{ang momentum middle curve}.
\end{proof}
\begin{oss} \label{remark sign spherical charge}
From \eqref{kbarr} we deduce 
\begin{align*}
\sgn{k}=-\sgn{(\bar{r}\bh)},
\end{align*}
since 
\[
-\bar{r}^4 + a^2 \big(\bh^2 - \bar{r} (2 M + \bar{r})\big)>0
\]
by Eq.  \eqref{partial_phi_timelike} and, if $\bar{r}>0$,
\[
\bh^2 - M \bar{r} > \bh^2 - \frac{3}{2}M \bar{r} > \bh^2 - 2 M \bar{r} >0,
\]
while if $\bar{r}<0$, it is obvious.
Notice that in the region $r>0$, where the gravitational field acts attractively, the Lorentz force must contribute with the same attractive sign; while in the region $r<0$, where the Kerr field is naturally viewed as effectively repulsive, the Lorentz force must again have the same effective sign, namely a repulsive one.
\end{oss}
\begin{figure}[H]
\centering
\includegraphics[scale=0.4]{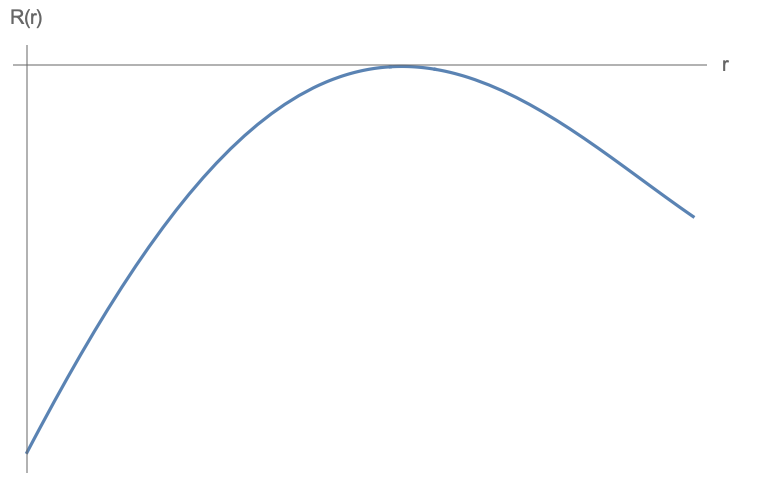}  
\caption{A root of the polynomial $R$ with  $a=3,\,  M=5,\,  \bh=3.5$, corresponding to an equatorial circular electromagnetic orbit with $\bar r=0.9$, in $\mathfrak{T}\cap\{\theta=\pi/2\}$,   hence $k\approx - 2.771$ from Eq.  \eqref{kbarr},  $E\approx -7.408$ from Eq.  \eqref{energy middle curve},  $L\approx -26.055$ from Eq.  \eqref{ang momentum middle curve},  $\mathcal Q=0$.}\label{elicarpos}
\vspace*{-5mm}
\end{figure}
\vspace{0.3cm}
\begin{figure}[H]
\centering
\includegraphics[scale=0.4]{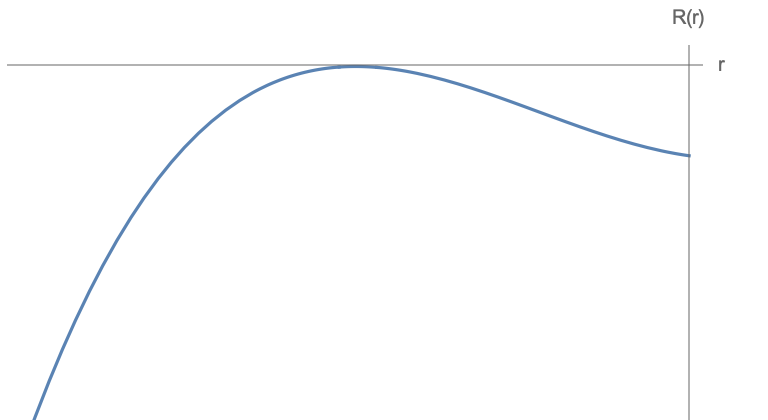} 
\caption{A root of the polynomial $R$ with $a=3,\,  M=5,\,  \bh=1$,  corresponding to an equatorial circular electromagnetic orbit with $\bar r=-1$, in  $\mathfrak{T}\cap\{\theta=\pi/2\}$, hence $k\approx 3.308$ from Eq.  \eqref{kbarr},  $E\approx 0.817$ from Eq.  \eqref{energy middle curve},  $L\approx 0.565$ from Eq.  \eqref{ang momentum middle curve},  $\mathcal Q=0$.} \label{elicarneg}
\vspace*{-5mm}
\end{figure}

\bigskip

\subsection{Flyby electromagnetic orbits with turning point in the time-machine region}\label{subsection_flyby orbits}
We next study equatorial electromagnetic orbits contained either in the
positive or in the negative $r$-region. The two cases are governed by the same
radial polynomial, but the odd powers of $r$ change sign when one passes from
$r>0$ to $r<0$. In order to obtain a   unified formulation, let 
$\sigma\in\{+1,-1\}$, and $\Omega_\sigma:=\{x\in \mathcal M:\sigma r(x)>0\}$, so that  $\Omega_+=\{x\in \mathcal M:r(x)>0\}$ and $\Omega_-=\{x\in \mathcal M:r(x)<0\}$. Let us set 
\begin{equation}\label{AABB}
A:=a^2(E^2-1)-L^2-\bh^2(1-k^2),\qquad	B:=2M(L-aE)^2+2a(L-aE)k\bh.
\end{equation}
\begin{lem}\label{proposition flyby reaching the time-machine}
Let $\gamma$ be a maximal equatorial electromagnetic orbit  in the region
$\Omega_\sigma$, with	constants of motion $(E,L)$ and charge-to-mass ratio $k$. Assume that
$\sgn k=-\sgn\bh$ and $\sigma B<0$.
In addition, assume
\begin{align*}
	&E\geq 1,\qquad\qquad\qquad\text{if } \sigma=+1,\\
\intertext{and} 
	&E>1, \quad A<0,\qquad \text{if $\sigma=-1$.}
\end{align*}
Then  $R$ in \eqref{R} has exactly one simple zero in $\Omega_\sigma$,
denoted $r_{\mathrm{turn}}$, and therefore $\gamma$ is a flyby electromagnetic
orbit in $\Omega_\sigma$ with radial turning value $r_{\mathrm{turn}}$.
In the negative case $\sigma=-1$, one also has the estimate
\[
r_{\mathrm{turn}}<-\frac{2M}{E^2-1}<0.
\]
\end{lem}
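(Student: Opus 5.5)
The plan is to write the equatorial radial polynomial $R$ explicitly as a quartic in $r$ and read off its positive (resp.\ negative) roots with Descartes' rule of signs. On $\mathcal E$ we have $\mathcal Q=0$, so \eqref{carterconst} gives $\mathcal K=(aE-L)^2$. Set $X:=L-aE$. Then \eqref{PD} becomes $P(r)=Er^2-k\bh r-aX$. Expanding \eqref{R}, and using $-2aEX-X^2=a^2E^2-L^2$ in the $r^2$ coefficient, I expect to obtain
\[
R(r)=(E^2-1)r^4+2(M-Ek\bh)r^3+A\,r^2+B\,r-\bh^2X^2,
\]
with $A,B$ as in \eqref{AABB}. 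This expansion is the only real computation, and I would check it coefficient by coefficient.

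Next I fix the signs of the coefficients. Since $\sgn k=-\sgn\bh$ and $E\ge1>0$, we have $-Ek\bh>0$, so $C:=2(M-Ek\bh)>0$. The hypothesis $\sigma B<0$ forces $B\neq0$, hence $X\neq0$. The constant term $-\bh^2X^2$ is therefore strictly negative, so $R(0)<0$.

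For $\sigma=+1$ the coefficient sign pattern is $(\ge0,+,?,-,-)$. Whatever the sign of $A$, there is exactly one sign change, so Descartes gives exactly one positive root, and it is simple. When $E=1$ the leading term vanishes, but the cubic coefficient $C>0$ plays the same role. Moreover $R(r)\to+\infty$ as $r\to+\infty$, so $R<0$ on $(0,r_{\mathrm{turn}})$ and $R>0$ on $(r_{\mathrm{turn}},\infty)$.

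For $\sigma=-1$ I substitute $r=-s$ with $s>0$. This gives
\[
(E^2-1)s^4-Cs^3+As^2-Bs-\bh^2X^2,
\]
with sign pattern $(+,-,-,-,-)$, using $E>1$, $A<0$ and $B>0$. Again there is exactly one sign change, so exactly one simple root $s_*>0$, and $R\to+\infty$ as $s\to\infty$.

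For the estimate in the case $\sigma=-1$, I group the first two terms:
\[
(E^2-1)s^4-Cs^3=s^3\bigl((E^2-1)s-2M+2Ek\bh\bigr)< s^3\bigl((E^2-1)s-2M\bigr)\le0
\]
for $0<s\le 2M/(E^2-1)$, because $Ek\bh<0$. The remaining terms $As^2$, $-Bs$ and $-\bh^2X^2$ are all negative. Hence $R<0$ on that whole interval, which gives $s_*>2M/(E^2-1)$, i.e.\ $r_{\mathrm{turn}}<-2M/(E^2-1)$.

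Finally, I translate the root count into the dynamics. Motion requires $R\ge0$, so $r\circ\gamma$ takes values in $[r_{\mathrm{turn}},\infty)$ for $\sigma=+1$, or in $(-\infty,r_{\mathrm{turn}}]$ for $\sigma=-1$. On the interior of that interval $R>0$, so $\dot r\neq0$ and $r$ is strictly monotone on each branch. Because the root is simple, $R'(r_{\mathrm{turn}})\neq0$, and Proposition~\ref{initial-conditions-and-zeroes} makes it a genuine turning value.

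The main obstacle is justifying that the maximal orbit actually attains $r_{\mathrm{turn}}$ and sweeps the whole half-line. In the case $\sigma=+1$ this involves horizon crossings, which I would handle through Remark~\ref{extensions} and the conservation of $E,L$ across blocks. I would also use the fact that $\sqrt{R}\sim\sqrt{E^2-1}\,r^2$, or $\sim\sqrt{C}\,r^{3/2}$ when $E=1$, so that $r$ reaches large $|r|$ and the turning point in finite proper time, by integrating $d\tau=\rho^2\,dr/\sqrt{R}$.
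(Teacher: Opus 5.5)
Your proposal is correct and follows essentially the same route as the paper: the same quartic expansion of $R$ with $\mathcal Q=0$, Descartes' rule of signs applied to $R(\sigma s)$ (exactly one sign change in both cases, with $R(0)=-\bh^2(L-aE)^2<0$ because $\sigma B<0$ forces $L\neq aE$), and the same grouping argument showing $R<0$ on $[-2M/(E^2-1),0)$ for the estimate. Your extra discussion of why the maximal orbit actually attains $r_{\mathrm{turn}}$ and sweeps the whole half-line (finite proper time to a simple root, horizon crossings via Remark~\ref{extensions}) is more careful than the paper, which handles that step simply by invoking maximality.
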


\begin{proof}
Since $\gamma$ is equatorial, $\mathcal Q=0$, and $R$ in  \eqref{R} is
\begin{equation*}
R(r)=	(E^2-1)r^4	+2(M-k\bh E)r^3	+Ar^2	+Br	-\bh^2(L-aE)^2.
\end{equation*}
Set $s:=\sigma r>0$ and define $R_\sigma(s):=R(\sigma s)$.
Since $\sgn k=-\sgn\bh$, we have $k\bh<0$. Hence, for $E>0$, we have
$M-k\bh E>0$.	Moreover, 
\[
R_\sigma(0)=-\bh^2(L-aE)^2<0,
\]
because $\sigma B<0$ implies $L\neq aE$.
We now distinguish the two signs of $\sigma$.
	
If $\sigma=+1$, by assumptions we have $E\geq1$ and $B<0$. If $E>1$, the signs of the nonzero
coefficients of $R_+$ are
\[
+\quad +\quad \sgn A\quad -\quad -.
\]
If $E=1$, the quartic term is absent and the signs are
\[
+\quad \sgn A\quad -\quad -.
\]
In either case there is exactly one sign change, independently of the sign of $A$. 

If $\sigma=-1$, by assumptions  $E>1$, $A<0$, and $B>0$. Therefore the signs of the
nonzero coefficients of $R_-$ are
\[
+\quad -\quad -\quad -\quad -.
\]
Again, there is exactly one sign change.
	
By Descartes' rule of signs, $R_\sigma$ has at most one positive zero. Moreover,
\[
R_\sigma(s)\to+\infty	\qquad\text{as }s\to+\infty
\]
(for $\sigma=+1$ this is clear if $E>1$, while if $E=1$ it follows from
$2(M-k\bh)>0$. For $\sigma=-1$ it follows from $E>1$).
Thus $R_\sigma$ has exactly one positive zero. Since Descartes' rule counts
multiplicity, this zero is simple. Equivalently, $R$ has exactly one simple
zero in $\Omega_\sigma$, denoted $r_{\mathrm{turn}}$. By Proposition~\ref{initial-conditions-and-zeroes}, $r_{\mathrm{turn}}$ is an $r$-turning value. As $\gamma$ is maximal, it follows that $r\circ\gamma(I)=[r_{\mathrm{turn}},+\infty)$ if $\sigma=+1$ and $r\circ\gamma(I)=(-\infty,r_{\mathrm{turn}}]$ if $\sigma=-1$. Hence, $\gamma$ is a flyby electromagnetic orbit in $\Omega_\sigma$ with radial turning value $r_{\mathrm{turn}}$.
It remains only to prove the additional estimate in the negative case.  We write
\begin{equation}\label{Rr}
R(r)	=	\big[(E^2-1)r^4+2Mr^3\big]+F(r),
\end{equation}
where
\[
F(r)	=	-2k\bh E r^3	+Ar^2	+Br	-\bh^2(L-aE)^2.
\]
For $r<0$, every term in $F(r)$ is strictly negative. Moreover,
\[	
(E^2-1)r^4+2Mr^3	=	r^3\big((E^2-1)r+2M\big)
\]
Therefore $R(r)<0$ on $\left[-\frac{2M}{E^2-1},0\right)$,
so the negative zero cannot lie in this interval. 
\end{proof}
\begin{oss}\label{insideOmegasigma}
In Lemma~\ref{proposition flyby reaching the time-machine}, the maximal equatorial  electromagnetic orbits $\gamma$ is assumed to be contained in $\Omega_\sigma$.  We emphasize that 
	$\sigma B<0$ implies $L\neq aE$. Hence $\gamma$ does not
	satisfy the necessary condition for reaching the ring singularity in the Kerr--Newman spacetime. Indeed, an electromagnetic orbit crashing into the	ring singularity must be equatorial and satisfy $L=aE$,	see \cite[\S 3.D]{Carter_causality}.  
\end{oss}
\begin{oss}\label{remark flyby existence}
Fix constants $E,L,k$, with $\mathcal Q=0$, such that the hypotheses in Lemma~\ref{proposition flyby reaching the time-machine} hold. 
At a point $p=(t_0,r_{\mathrm{turn}},\pi/2,\phi_0)$
one chooses a future-pointing initial vector $u\in T_p\mathcal M$ with
$u^r=u^\theta=0$, and with $u^t,u^\phi$ determined from  \eqref{EL}  by the prescribed
values of $E$ and $L$. 
The standard  existence and uniqueness theorem for the Lorentz force equation then
gives a unique maximal equatorial electromagnetic orbit with the above initial data which is then a 	flyby  orbit with radial turning value $r_{\mathrm{turn}}$.
\end{oss}

A natural question   is whether one  can construct  flyby equatorial electromagnetic orbits with a prescribed radial turning value  $r_{\mathrm{turn}}$ in the equatorial $r$-section of the time-machine region,   while allowing different charge-to-mass ratios $k$ and different constants of motion $(E,L)$. Thus, these orbits are not merely 
images of a single flyby under the Killing flows generated
by $\partial_t$ or $\partial_\phi$, nor  can they be   reparametrizations either, since throughout we consider only solutions $\gamma$ normalized by $g(\dot\gamma,\dot\gamma)=-1$.
In the next proposition, we fix $\bar r\neq 0$ and construct constants of motion in such a way $R$ has $\bar r$ as a root;  then we use Lemma~\ref{proposition flyby reaching the time-machine} to recover uniqueness of the root.
Let us recall  that if $\bar r\neq0$ belongs to the
$r$-section of equatorial time-machine region, then
$G(\bar r)<0$, where $G$ is the radial polynomial in \eqref{G}.
Equivalently, $\bar r^2(\bar r^2+a^2)<a^2(\bh^2-2M\bar r)$. In particular,
$\bh^2-2M\bar r>0$, and therefore
$\Delta(\bar r):=\bar r^2+a^2+\bh^2-2M\bar r>0$.
Thus, the square root $\sqrt{\Delta(\bar r)}$ which appears below is
well-defined.
\begin{thm}\label{prescribed_common_turning_fyby}
	Let $\bar r\neq0$ belong to the $r$-section of the equatorial time-machine
	region and set $\sigma:=\sgn\bar r$. For $\kappa>0$ and $\eta>0$, let
	\[
	k_\kappa:=-\frac{\kappa}{\bh},
	\qquad
	D_{\eta,\kappa}:=\sigma\eta\kappa,
	\]
	\begin{equation}\label{prescribed turning energy}
		E_{\eta,\kappa}:=
		\frac{-\kappa\bar r+aD_{\eta,\kappa}
			+\sqrt{\Delta(\bar r)}\sqrt{\bar r^2+D_{\eta,\kappa}^2}}
		{\bar r^2},
		\qquad
		L_{\eta,\kappa}:=aE_{\eta,\kappa}+D_{\eta,\kappa},
	\end{equation}
	and let $A_{\eta,\kappa}$ denote the quantity $A$ in \eqref{AABB} evaluated
	on the data $(k_\kappa,E_{\eta,\kappa},L_{\eta,\kappa})$. Assume that
	\begin{itemize}
		\item[(i)] $\eta<\dfrac aM$ \ and \ $E_{\eta,\kappa}\geq1$,
		 if $\sigma=+1$;
		\item[(ii)] $E_{\eta,\kappa}>1$ \ and \ $A_{\eta,\kappa}<0$,
		 if $\sigma=-1$.
	\end{itemize}
	Then there exists an equatorial flyby electromagnetic orbit in
	$\Omega_\sigma$ with radial turning value $\bar r$ and data
	$(k_\kappa,E_{\eta,\kappa},L_{\eta,\kappa},\mathcal Q=0)$. Moreover:
	\begin{itemize}
		\item[(a)] for every fixed $\eta>0$ (with $\eta<a/M$ if $\sigma=+1$)
		there exists $\kappa_0=\kappa_0(\eta)>0$ such that (i) or (ii)
		holds for every $\kappa\in(0,\kappa_0)$, respectively; in particular there exist infinitely many equatorial	flyby electromagnetic orbits, contained in $\Omega_\sigma$, having radial turning value  $\bar r$, corresponding to the two-parameters family of  
		data $(k_\kappa, E_{\eta, \kappa},L_{\eta,\kappa},\mathcal Q=0)$ defined by the map 
		$(\eta,\kappa)\mapsto(k_\kappa,E_{\eta,\kappa},L_{\eta,\kappa})$ which is 		smooth and injective on its  domain;
		\item[(b)] conversely, if $\gamma$ is a maximal equatorial,
		future-pointing electromagnetic orbit with data
		$(k,E,L,\mathcal Q=0)$ such that
		\[
		k\bh<0,
		\qquad
		\sigma(L-aE)>0,
		\]
		and $\dot\gamma^r=0$ at a point with radial coordinate $\bar r$, then,
		setting
		\[
		\kappa:=-k\bh,
		\qquad
		\eta:=\frac{\sigma(L-aE)}{\kappa},
		\]
		one has $E=E_{\eta,\kappa}$ and $L=L_{\eta,\kappa}$. Hence, if (i) or (ii) is satisfied respectively, $\gamma$ is a flyby electromagnetic orbit in
		$\Omega_\sigma$ with radial turning value $\bar r$.
	\end{itemize}
\end{thm}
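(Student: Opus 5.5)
On the equatorial plane we have $\mathcal Q=0$, and the plan is to reduce the condition ``$\bar r$ is a zero of $R$'' to a scalar equation for $E$. Set $D:=L-aE$. By \eqref{PD} with $\theta=\pi/2$ and $k=k_\kappa=-\kappa/\bh$,
\[
P(r)=E r^2-aD+\kappa r ,
\]
and by \eqref{carterconst} with $\mathcal Q=0$ we get $\mathcal K=D^2$. Then \eqref{R} reads
\[
R(r)=P(r)^2-\Delta(r)\,(D^2+r^2).
\]
Since $\bar r$ lies in the equatorial time-machine region, we have $\Delta(\bar r)>0$, and $\bar r$ is in block $\mathrm{III}$ (Lemma~\ref{rtm} for $\bar r>0$; trivially for $\bar r<0$). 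Hence $R(\bar r)=0$ is equivalent to
\[
P(\bar r)=\pm\sqrt{\Delta(\bar r)}\sqrt{\bar r^2+D^2}.
\]
The future-pointing condition will select the $+$ sign: $P=-g(\dot\gamma,V)$ with $\dot\gamma$ and $V$ both future-pointing timelike in $\mathrm{III}$, so $P>0$. Solving for $E$ with $D=D_{\eta,\kappa}$ gives exactly \eqref{prescribed turning energy}.

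Next I check the hypotheses of Lemma~\ref{proposition flyby reaching the time-machine}. First, $k\bh=-\kappa<0$. Second, $B=2D(MD+ak\bh)=2D(MD-a\kappa)$, and with $D=\sigma\eta\kappa$,
\[
\sigma B=2\eta\kappa^2(M\sigma\eta-a).
\]
For $\sigma=+1$ this is negative exactly when $\eta<a/M$; for $\sigma=-1$ it is always negative. The remaining hypotheses ($E\ge1$, respectively $E>1$ and $A<0$) are (i) and (ii). The Lemma then says that $R$ has exactly one simple zero in $\Omega_\sigma$, which must be $\bar r$, because $R(\bar r)=0$ by construction.

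For existence I follow Remark~\ref{remark flyby existence}. At $p=(t_0,\bar r,\pi/2,\phi_0)$ I take $u^r=u^\theta=0$ and determine $u^t,u^\phi$ from \eqref{EL}; these are given by the first two equations of \eqref{equations-of-motions}. The relation $R(\bar r)=0$ is what makes $g(u,u)=-1$ consistent with $u^r=0$. The choice $P(\bar r)>0$ makes $u$ future-pointing. The resulting maximal orbit has $r$-range $[\bar r,\infty)$ or $(-\infty,\bar r]$, so it never reaches $r=0$ and stays in $\Omega_\sigma$. Note also $L\neq aE$, consistent with Remark~\ref{insideOmegasigma}.

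For (a), I let $\kappa\to0$ with $\eta$ fixed, so $D\to0$ and $E_{\eta,\kappa}\to\sqrt{\Delta(\bar r)}/|\bar r|$. This limit is strictly greater than $1$, because $\Delta(\bar r)-\bar r^2=a^2+\bh^2-2M\bar r>0$ (recall $\bh^2-2M\bar r>0$ in $\mathfrak T$). Likewise $k\to0$ and $L\to aE$, so
\[
A\to a^2(E^2-1)-a^2E^2-\bh^2=-a^2-\bh^2<0.
\]
Continuity in $\kappa$ then gives $\kappa_0(\eta)$. The map $(\eta,\kappa)\mapsto(k_\kappa,E_{\eta,\kappa},L_{\eta,\kappa})$ is smooth because the radicands are positive. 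It is injective because $\kappa=-k\bh$ and $\eta=\sigma(L-aE)/\kappa$ can be recovered from the data.

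For (b), I define $\kappa$ and $\eta$ as stated, so $D=\sigma\eta\kappa$ with $\eta,\kappa>0$. Then $\dot\gamma^r=0$ at $\bar r$ gives $R(\bar r)=0$, and the same future-pointing sign argument forces $P(\bar r)=+\sqrt{\Delta}\sqrt{\bar r^2+D^2}$, hence $E=E_{\eta,\kappa}$ and $L=aE+D=L_{\eta,\kappa}$. The flyby conclusion then follows from the first part. The main point needing care is this sign selection $P(\bar r)>0$, together with checking that $\bar r$ really lies in block $\mathrm{III}$, where $V$ defines the time orientation.
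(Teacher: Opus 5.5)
Your proposal is correct and follows essentially the same route as the paper. Both write $R=P^2-\Delta(r^2+D^2)$ and use the future-pointing condition $P(\bar r)=-g(\dot\gamma,V)>0$ to select the positive root, which yields $E_{\eta,\kappa}$. Both then check $k\bh<0$, $\sigma B=2\eta\kappa^2(\sigma M\eta-a)<0$ and (i)/(ii), so that Lemma~\ref{proposition flyby reaching the time-machine} makes $\bar r$ the unique simple zero; (a) comes from the limits as $\kappa\to0^+$, and (b) from the same sign selection.

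Your explicit remark that $R(\bar r)=0$ is exactly what makes $g(u,u)=-1$ compatible with $u^r=0$ is a detail the paper leaves implicit in Remark~\ref{remark flyby existence}.
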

\begin{proof}
	With $D=L-aE$, $R$ in \eqref{R} can be written as
	\begin{equation}\label{RP}
	R(r)=P(r)^2-\Delta(r)(r^2+D^2),
	\qquad
	P(r)=Er^2-aD-k\bh r.
	\end{equation}
	For the data $(k_\kappa,E_{\eta,\kappa},L_{\eta,\kappa})$, since
	$k_\kappa\bh=-\kappa$, a direct substitution gives
	\begin{equation}\label{PR}
		P(\bar r)=
		\sqrt{\Delta(\bar r)}\sqrt{\bar r^2+D_{\eta,\kappa}^2}>0
		\qquad\text{and}\qquad
		R(\bar r)=0.
	\end{equation}
	The coefficient $B$ in \eqref{AABB} is
	\[
	B=2D_{\eta,\kappa}\bigl(MD_{\eta,\kappa}+ak_\kappa\bh\bigr)
	=2\sigma\eta\kappa\bigl(M\sigma\eta\kappa-a\kappa\bigr)
	=2\eta\kappa^2\,(M\eta-\sigma a),
	\]
	so that
	\[
	\sigma B=2\eta\kappa^2\,(\sigma M\eta-a)<0
	\]
	if and only if $\eta<a/M$ when $\sigma=+1$, while it holds for every
	$\eta>0$ when $\sigma=-1$. Since moreover $\sgn k_\kappa=-\sgn\bh$ and, by
	assumption (i), resp.\ (ii), $E_{\eta,\kappa}\geq1$ when $\sigma=+1$, resp.\
	$E_{\eta,\kappa}>1$ and $A_{\eta,\kappa}<0$ when $\sigma=-1$, the hypotheses
	of Lemma~\ref{proposition flyby reaching the time-machine} are satisfied.
	Therefore $R$ has exactly one simple zero in $\Omega_\sigma$; since
	$R(\bar r)=0$ and $\bar r\in\Omega_\sigma$, this zero is precisely $\bar r$.
	The existence of the required equatorial flyby electromagnetic orbit then
	follows from Remark~\ref{remark flyby existence}, recalling that, from
	\eqref{PD}, $P(\bar r)=-g(u,V)$, where $u$ is the initial velocity vector,
	and that $P(\bar r)>0$ by \eqref{PR}.
	
	Let us now prove the statements (a) and (b).
	
	\noindent (a) We have
	\[
	\lim_{\kappa\to0^+}E_{\eta,\kappa}
	=\frac{\sqrt{\Delta(\bar r)}}{|\bar r|}>1,
	\]
	where the last inequality follows from
	$\Delta(\bar r)-\bar r^2=a^2+\bh^2-2M\bar r>0$: this is clear if
	$\bar r<0$, while if $\bar r>0$ it follows from $G(\bar r)<0$, which gives
	$a^2(\bh^2-2M\bar r)>\bar r^2(\bar r^2+a^2)>0$. Moreover,
	\[
	A_{\eta,\kappa}
	=-a^2-\bh^2-2aE_{\eta,\kappa}D_{\eta,\kappa}
	-D_{\eta,\kappa}^2+\bh^2k_\kappa^2
	\longrightarrow -a^2-\bh^2<0,
	\qquad \kappa\to0^+.
	\]
	Since $E_{\eta,\kappa}$ and $A_{\eta,\kappa}$ are continuous in $\kappa$,
	conditions (i), resp.\ (ii), hold for every $\kappa>0$ small enough. The map
	$(\eta,\kappa)\mapsto(k_\kappa,E_{\eta,\kappa},L_{\eta,\kappa})$ is clearly
	smooth; it is injective because $k_\kappa$ determines $\kappa$ and then
	$L_{\eta,\kappa}-aE_{\eta,\kappa}=\sigma\eta\kappa$ determines $\eta$.
	
	(b) Since $\gamma$ is equatorial with $\mathcal Q=0$ and $\dot\gamma^r=0$ at
	the point with radial coordinate $\bar r$, the third equation in
	\eqref{equations-of-motions} gives $R(\bar r)=0$, i.e.
	\[
	P(\bar r)^2=\Delta(\bar r)\bigl(\bar r^2+D^2\bigr).
	\]
	As $\gamma$ is future-pointing, $P(\bar r)=-g(\dot\gamma,V)>0$, hence
	$P(\bar r)=\sqrt{\Delta(\bar r)}\sqrt{\bar r^2+D^2}$. Solving the second equation in \eqref{RP} 	for $E$, with this value of $P(\bar r)$, $\kappa=-k\bh>0$ and $D=\sigma\eta\kappa$, yields exactly
	$E=E_{\eta,\kappa}$ in \eqref{prescribed turning energy}, and consequently
	$L=aE+D=L_{\eta,\kappa}$. If (i), resp.\ (ii), is satisfied, then, as in the
	first part of the proof, the hypotheses of
	Lemma~\ref{proposition flyby reaching the time-machine} hold; since $\gamma$
	is maximal and lies in $\Omega_\sigma$ (recall Remark~\ref{insideOmegasigma}) Lemma~\ref{proposition flyby reaching the time-machine} gives that $\gamma$ is a
	flyby electromagnetic orbit with radial turning value $\bar r$.
\end{proof}

\begin{oss}
	For the geometric construction of a broken electromagnetic orbit, the condition
	that the flyby branches have radial  turning value exactly equal to the radius
	$\bar r$ of the equatorial circular orbit is stronger than strictly necessary. It would
	be enough to require that the radial range of each flyby branch contain
	$\bar r$. Equivalently, if $\sigma=\sgn\bar r$, one could ask only that
	$\sigma r_{\mathrm{turn}}\leq \sigma\bar r$,
	and then choose the initial point of the corresponding flyby branch at
	$r=\bar r$. 
	We impose the stronger condition
	$r_{\mathrm{turn}}=\bar r$
	because, as observed in \cite [p. 69]{Ruffini_book},     it makes the decay vertices
	kinematically simpler. At such vertices the flyby branches and the circular
	orbit all have vanishing radial velocity. Thus the radial component of the
	kinetic four-momentum conservation law reduces to the contribution of the remaining decay
	products (see Remark~\ref{decay_turning_points}). 
\end{oss}
\section{Broken electromagnetic orbits and charged-particle decay processes}\label{decay_section}
A decay process in Kerr--Newman spacetime involving charged particles can be  described using multiple broken electromagnetic orbits:
\begin{defn}
A {\em broken electromagnetic orbit} is   a continuous curve which is a concatenation of electromagnetic orbits  not necessarily with the same charge-to-mass ratio $k$. 

\noindent A broken electromagnetic orbit is {\em closed} if the continuous curve is a closed one (see Figure~\ref{closed}).
\end{defn}
\begin{figure}[H]
\centering
\includegraphics[scale=0.4]{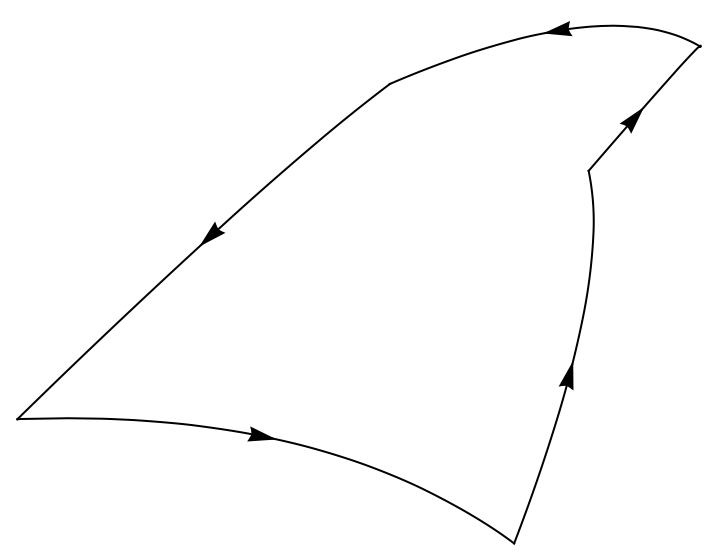} 
\vspace{0.2cm}
\caption{Closed broken electromagnetic orbit.}\label{closed}
\end{figure}
At a break point only continuity of the worldline is imposed; no matching of the one-sided proper-time derivatives is required. If the charge-to-mass ratio changes at the break point, the adjacent branches solve different Lorentz-force equations, so their one-sided accelerations are in general different. When the break point is interpreted as a decay vertex, the matching conditions are instead conservation of kinetic four-momentum and electric charge.

In the following, all the electromagnetic orbits lie on the equatorial plane $\{\theta=\pi/2\}$.\\
Let $\gamma_1$ be a massive particle of rest mass $m_{\gamma_1}$ departing from a point $p$ and whose motion is a flyby electromagnetic orbit (see Definition~\ref{orbits}), with charge-to-mass ratio $k_{\gamma_1}$ and such that its turning point  is contained in $\mathfrak{T}$. The latter condition can be satisfied as seen in Theorem~\ref{prescribed_common_turning_fyby}.

Suppose that at point $q\in\mathfrak T$ with $r(q)=\bar{r}$, the particle $\gamma_1$ decays into $n+1$ particles
\begin{align}\label{first decay}
	\gamma_1 \longrightarrow \alpha + \tilde{\gamma}_1 + \ldots +\tilde{\gamma}_n .
\end{align}
If the produced particle, say $\alpha$, happens to have a charge-to-mass ratio $k_{\alpha}$ as in Eq.~\eqref{kbarr} and energy and angular momentum per unit mass as in Eqs.~\eqref{energy middle curve},~\eqref{ang momentum middle curve}, the electromagnetic orbit $\alpha$ can be an equatorial circular orbit (precisely, a solution of the LFE as in Theorem~\ref{proposition alpha curve}) of radius $\bar{r}$. Now suppose that the particle $\alpha$ during its travel intersects some points $w\in I^{-}(\gamma_1)$
and assume that at such a point $\alpha$  decays into $m+1$ particles
\begin{align}\label{second decay}
	\alpha \longrightarrow \gamma_2 + \tilde{\alpha}_1 + \ldots + \tilde{\alpha}_m .
\end{align}
Since the charge-to-mass ratio of the particle, say $\gamma_2$, will differ from the one of $\alpha$, we choose the outgoing fragment $\gamma_2$ to have charge-to-mass ratio different from $k_\alpha$ so that it leaves the spherical branch as an equatorial electromagnetic orbit that might reach the initial point $p$ or another point on $\mathrm{Im}\,\gamma_1$.
\begin{oss}\label{another closing point}
	If $p\in \mathrm{I}\cup \mathrm{II}$, the curve $\gamma_2$ cannot return to the point $p$ itself; in this case the closing point must be some  $p'\in \big(\mathrm{Im}\,\gamma_1\cap I^+(w)\big)\cap \mathrm{III}$  which then we use to  replace $p$. On the other hand, if $\gamma_2$ is a flyby it might cross the inner horizon in outgoing coordinates and, after traversing a block II', reach another block I  in an extension of Kerr--Newman spacetime (see Remark~\ref{extensions}). The same might happen to  $\tilde \gamma_1$ produced in the first decay, provided it is a flyby electromagnetic orbit. This leads to a possible interpretation of the first decay as  a Penrose-type process between different  asymptotically flat ends, see Subsection~\ref{Penrose}.
\end{oss}
One closed broken electromagnetic orbit in the above process is then  given by the concatenation (see Figure~\ref{broken orbit figure})
\begin{align}\label{concatenation expression}
	\gamma_1 \cup \alpha \cup \gamma_2
\end{align}
(possibly by replacing $\gamma_1$ with one of its restrictions according to Remark~\ref{another closing point}).

In the decay processes \eqref{first decay} and \eqref{second decay}, the fundamental conservation law at the vertices $q,w\in\mathfrak{T}$ is that of the kinetic four-momentum $\pi=mu$, where $u$ is the timelike four-velocity and $m$ is the rest mass of the particle. Since the constants of motion $E$ and $L$ defined in \eqref{EL} are the energy and angular momentum per unit mass -- consistent with the parametrization $g(\dot\gamma,\dot\gamma)=-1$ and with the form \eqref{LFE-abstract} of the Lorentz force equation, in which the charge-to-mass ratio $k$ appears as the coupling constant -- the components of the kinetic momentum along the Killing fields $\partial_t$ and $\partial_\phi$ read
\[
-g(\pi,\partial_t)=mE+eA_t,
\qquad
g(\pi,\partial_\phi)=mL-eA_\phi,
\]
where $e=km$ is the electric charge of the particle. The conservation of $-g(\pi,\partial_t)$ and $g(\pi,\partial_\phi)$ at each decay vertex, together with the conservation of the total electric charge (which makes the $A_t$ and $A_\phi$ contributions cancel between the incoming and outgoing states), is therefore equivalent to the \emph{mass-weighted} conservation of $E$ and $L$. Accordingly, as $g(\pi,\partial_r)=m\,g_{rr}(\bar{r},\pi/2)\,u^r$,
at the decay point $q$ we must have
\begin{align*}
	e_{\gamma_1}&=e_{\alpha} + e_{\tilde{\gamma}_1} + \ldots + e_{\tilde{\gamma}_n},\\ m_{\gamma_1}E(\gamma_1)&=m_{\alpha} E(\alpha) + m_{\tilde{\gamma}_1} E(\tilde{\gamma}_1) + \ldots + m_{\tilde{\gamma}_n} E(\tilde{\gamma}_n), \\
	m_{\gamma_1} L(\gamma_1)&=m_{\alpha} L(\alpha) + m_{\tilde{\gamma}_1} L(\tilde{\gamma}_1) + \ldots + m_{\tilde{\gamma}_n} L(\tilde{\gamma}_n),\\
    m_{\gamma_1} \dot{\gamma}_1^r&=m_{\alpha} \dot{\alpha}^r + m_{\tilde{\gamma}_1} \dot{\tilde{\gamma}}^r_1 + \ldots + m_{\tilde{\gamma}_n} \dot{\tilde{\gamma}}^r_n,
\end{align*}
and analogously at the decay point $w$,
\begin{align*}
   e_{\alpha}&=e_{\gamma_2} + e_{\tilde{\alpha}_1} + \ldots + e_{\tilde{\alpha}_m}\\
   	m_{\alpha} E(\alpha)&=m_{\gamma_2} E(\gamma_2) + m_{\tilde{\alpha}_1} E(\tilde{\alpha}_1) + \ldots + m_{\tilde{\alpha}_m} E(\tilde{\alpha}_m),\\
	m_{\alpha} L(\alpha)&=m_{\gamma_2} L(\gamma_2) + m_{\tilde{\alpha}_1} L(\tilde{\alpha}_1) + \ldots + m_{\tilde{\alpha}_m} L(\tilde{\alpha}_m),\\
    m_{\alpha} \dot{\alpha}^r&=m_{\gamma_2} \dot{\gamma}_2^r + m_{\tilde{\alpha}_1} \dot{\tilde{\alpha}}^r_1 + \ldots + m_{\tilde{\alpha}_m} \dot{\tilde{\alpha}}^r_m.
\end{align*}
\begin{oss}\label{masses}  
The rest masses $m_{\gamma_1},m_{\alpha},m_{\gamma_2},m_{\tilde{\gamma}_i},m_{\tilde{\alpha}_j}$ are  subject to the kinematic admissibility conditions that the system of conservation laws admits a solution with all kinetic four-momenta $\pi_\bullet$ which have to be future-pointing timelike vectors (in particular, this entails the necessary inequalities $m_{\gamma_1}> m_{\alpha}+\sum_i m_{\tilde{\gamma}_i}$ and $m_{\alpha}> m_{\gamma_2}+\sum_j m_{\tilde{\alpha}_j}$, see \cite{landau_book}).
\end{oss}
\begin{oss}\label{decay_turning_points}
    Notice that if we require  the orbits $\gamma_1,\;\gamma_2$ to be flyby, $\alpha$ to be spherical, and the decays to happen at the turning points of $\gamma_1$ and $\gamma_2$, the conservation laws for $r$-component of $\pi$ reduce to
    \begin{align*}
       0 &= m_{\tilde{\gamma}_1} \dot{\tilde{\gamma}}^r_1 + \ldots + m_{\tilde{\gamma}_n} \dot{\tilde{\gamma}}^r_n,\\
        0 &= m_{\tilde{\alpha}_1} \dot{\tilde{\alpha}}^r_1 + \ldots + m_{\tilde{\alpha}_m} \dot{\tilde{\alpha}}^r_m.
    \end{align*}
\end{oss}
\subsection{Penrose-type processes}\label{Penrose}
The particle decays described above can be associated with  energy extraction.
More precisely, consider  
\[\gamma_1\longrightarrow \alpha + \tilde{\gamma}_1,\] 
happening in the region $\mathfrak{T}\cap\{r>0\}\cap\{\theta=\pi/2\}$.
By Remark \ref{remark sign spherical charge}, $\sgn{k_{\alpha}}=-\sgn{\bh}$. Therefore, despite $\partial_t$ being future-pointing timelike by Remark \ref{remark_partial_t_timelike_time_machine} and Proposition \ref{time_orientation_killing_vectors}, we may have
\begin{align} \label{E_alpha<0_condition}
    E(\alpha)=-g(\dot{\alpha},\partial_t) + \frac{k_{\alpha}\bh}{r}<0,
\end{align}
as for instance shown in Example~\ref{decay_positive}.
From Remark~\ref{masses} we  must have  $m_{\gamma_1}>m_{\alpha}+m_{\tilde{\gamma}_1}$. Hence if \eqref{E_alpha<0_condition} is satisfied, by the conservation of the mass-weighted energy we have
\begin{align} \label{energy_extracted_particle}
    E(\tilde{\gamma}_1)=\frac{m_{\gamma_1}E(\gamma_1)-m_{\alpha}E(\alpha)}{m_{\tilde{\gamma}_1}} > \frac{m_{\gamma_1} }{m_{\tilde{\gamma}_1}} E(\gamma_1) > E(\gamma_1).
\end{align}

\begin{figure}[H] 
\includegraphics[scale=0.25]{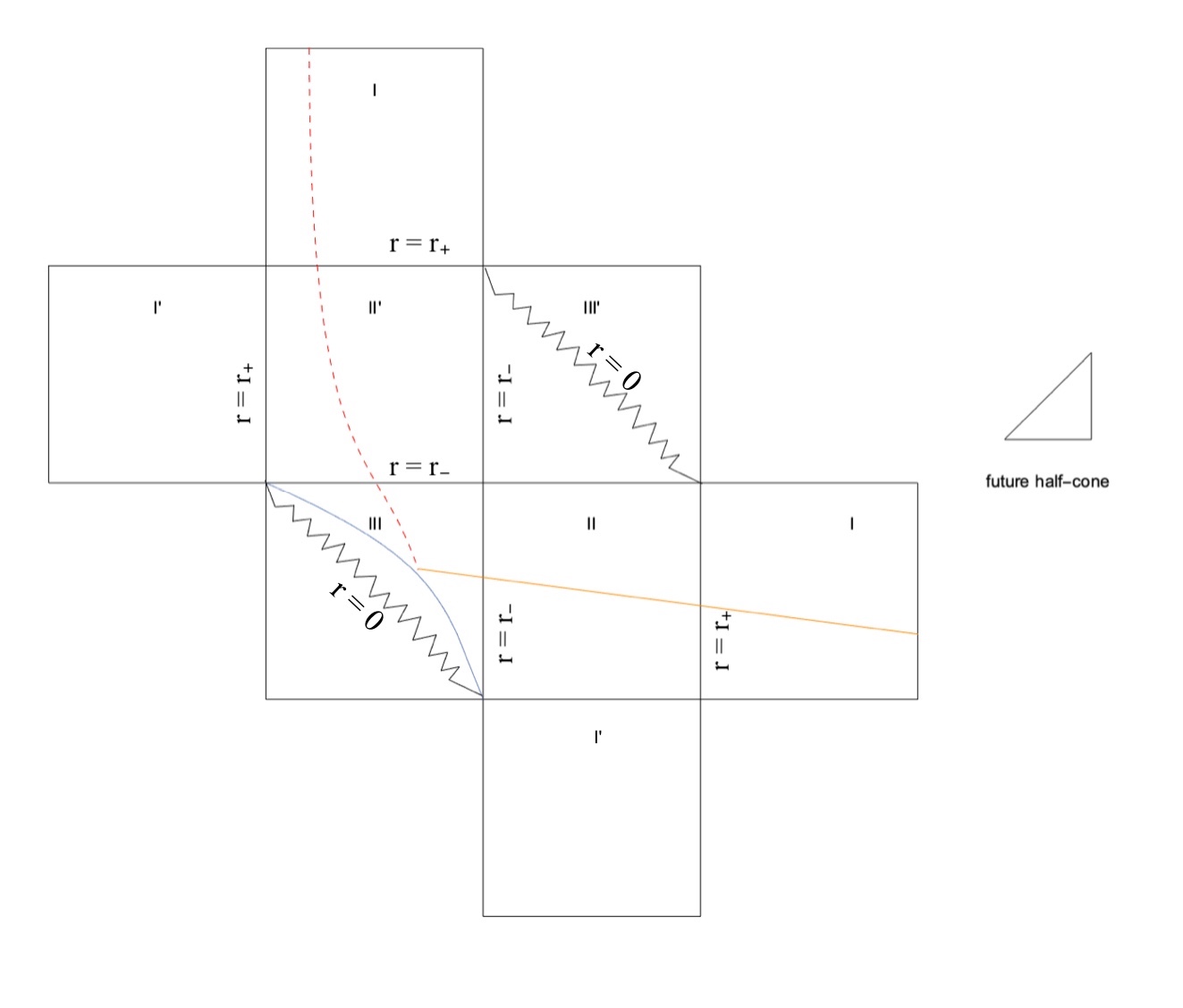}  
    \caption{
    A subset of the conformal diagram of the maximal analytic extension of the Kerr--Newman spacetime: flyby orbit $\gamma_1$ (orange curve) reaches a point of $\mathfrak{T}\cap\{r>0\}\subset\,$III where it decays in the equatorial circular orbit $\alpha$ (blue curve) and in the flyby orbit $\tilde{\gamma}_1$ (dashed red curve). The latter reaches the BL block I of another Kerr--Newman patch.} \label{figure_penrose_diagram}
\end{figure}

Equations \eqref{E_alpha<0_condition} and \eqref{energy_extracted_particle} allow us to interpret the process in a way analogous to the classical energy extraction Penrose process \cite{Penrose_process} happening in the ergosphere, i.e. the subset above the event horizon where $\partial_t$ is spacelike, but in the maximal analytic extension of Kerr--Newman spacetime, restricted to the equatorial plane (see for instance \cite[Fig.11.7]{Griffiths_Podolsky_2009} for the analogous conformal diagram of the Kerr spacetime).  Indeed, the decay process occurs  in block III, below the Cauchy horizon $\{r=r_{-}\}$, so $\tilde{\gamma}_1$ is not able to reach back the initial block I. However if the future-pointing particle $\tilde{\gamma}_1$ is a flyby orbit (as in  Example \ref{decay_positive}), it will pass through different horizons of a Kerr--Newman patch with opposite time-orientation, whose BL blocks are denoted as I',II',III', reaching a  block I of another Kerr--Newman patch. Hence, an observer living in the latter BL block I could gain energy from the outgoing particle $\tilde{\gamma}_1$, see Figure \ref{figure_penrose_diagram}.

An analogous interpretation is available in the negative-$r$ region. If the
decay $\gamma_1\to\alpha+\tilde\gamma_1$ occurs at a radius
$\bar r<0$ at which $E(\alpha)<0$ (as in Example~\ref{decay_negative}),
then \eqref{energy_extracted_particle} holds verbatim and 
 the extracted energy is carried to the
asymptotically flat end $r\to-\infty$ of block $\mathrm{III}$ itself, without
crossing any horizon.

\subsection{The causality violating process in block III}\label{subsection_causality_violation_process}
Let us now focus on the second decay
\[
\alpha\longrightarrow\gamma_2+\tilde\alpha_1.
\]
We show in Theorem~\ref{closed_broken_orbit_blockIII} that it can involve the production of a test particle $\gamma_2$
which, together with $\gamma_1$ and $\alpha$, forms a closed broken
electromagnetic orbit inside block $\mathrm{III}$.
We emphasize that Theorem~\ref{closed_broken_orbit_blockIII} is formulated independently of the
particular values of the constants of motion of the two flyby branches. The admissibility of the decay
vertices is a pointwise algebraic condition involving the kinetic four-momenta
and the electric charges. The
global closing argument below applies without restrictions on the
corresponding constants of motion, apart from $E>1$ in the negative $r$-region (see Remark~\ref{closing}).

Fix a nonzero radius $\bar r$ in the equatorial time-machine region and let
$\gamma_1$ and $\gamma_2$ be two equatorial flyby
electromagnetic orbits with the same radial turning value $\bar r$ and, if $\bar r<0$, with energies $E_1,E_2>1$. Such orbits are provided by
Theorem~\ref{prescribed_common_turning_fyby}. 
Then, for $i=1,2$,
\[
R_i(\bar r)=0,\qquad R_i'(\bar r)\neq0,\qquad P_i(\bar r)>0.
\]
We then take $\alpha$ to be the equatorial circular electromagnetic
orbit of radius $\bar r$ given by Theorem~\ref{proposition alpha curve} (recall also Remark~\ref{elixorbit}).
The constants of motion and the
charge-to-mass ratio of $\alpha$ are then fixed by $\bar r$, and they are not
required to coincide with those of the flyby branches.

We choose
the proper-time parameters so that the turning point corresponds to
$\tau=0$. Let $k_i$ and $(E_i,L_i,\mathcal Q_i=0)$, $i=1,2$, be respectively their charge-to-mass ratio
and their constants of motion. Recalling \eqref{PD}, let us set
\begin{equation*}
P_i(r):=E_i(r^2+a^2)-aL_i-k_i\bh r,
\qquad
D_i:=L_i-aE_i,
\end{equation*}
and
\begin{equation}\label{AB}
\mathcal A_i(r):=
aD_i+\frac{(r^2+a^2)P_i(r)}{\Delta(r)},
\qquad
\mathcal B_i(r):=
D_i+\frac{aP_i(r)}{\Delta(r)}.
\end{equation}
We shall use $r$ as parameter on each monotone  branch of the flyby electromagnetic orbit $\gamma_i$. 
Indeed, away
from the turning point, the first-order equations \eqref{equations-of-motions} on the equatorial plane reduce to 
\[
r^2\dot t=\mathcal A_i(r),
\qquad
r^2\dot\phi=\mathcal B_i(r),
\qquad
r^2\dot r=\sigma\sqrt{R_i(r)},
\]
where $\sigma=1$ on a branch on which $r$ increases and
$\sigma=-1$ on a branch on which $r$ decreases. Hence
\[
\frac{dt}{dr}
=
\sigma\frac{\mathcal A_i(r)}{\sqrt{R_i(r)}},
\qquad
\frac{d\phi}{dr}
=
\sigma\frac{\mathcal B_i(r)}{\sqrt{R_i(r)}}.
\]
We distinguish two cases.
\smallskip

\noindent\emph{Negative-$r$.} In this case,  $\bar r<0$, and since $\gamma_i$ is a flyby orbit with turning point $\bar r$:
\[
R_i(\bar r)=0,\qquad
R_i'(\bar r)<0,\qquad
R_i(r)>0\quad\text{for every }r<\bar r.
\]
In this case the  branch issued from the turning point and escaping
towards $r=-\infty$ has $r$ decreasing. Therefore, for $r<\bar r$, the
coordinate lapses from $\bar r$ to $r$ are
\begin{align}
T_i^-(r)&:=\int_{r}^{\bar r}\frac{\mathcal A_i(s)}{\sqrt{R_i(s)}}\,ds, \label{negative_lapse_Ti} \\
\Phi_i^-(r)&:=\int_{r}^{\bar r}\frac{\mathcal B_i(s)}{\sqrt{R_i(s)}}\,ds.\label{negative_lapse_Phi_i} 
\end{align}
The incoming branch reaching the turning point from the region $r<\bar r$
has $r$ increasing. Hence the coordinate lapse from the point with radial
coordinate $r$ to the turning point is again $T_i^-(r)$, respectively
$\Phi_i^-(r)$.

\smallskip

\noindent\emph{Positive-$r$.} In this case $0<\bar r<r_{\mathrm{tm}}$,
and 
\[
R_i(\bar r)=0,\qquad
R_i'(\bar r)>0,\qquad
R_i(r)>0\quad\text{for every }r\in(\bar r,r_-).
\]
Thus the future branch issued from the turning point and contained in
$(\bar r,r_-)$ has $r$ increasing. Therefore, for $r\in(\bar r,r_-)$, the
coordinate lapses from $\bar r$ to $r$ are
\begin{align}
T_i^+(r)&:=	\int_{\bar r}^{r}	\frac{\mathcal A_i(s)}{\sqrt{R_i(s)}}\,ds,	\label{positive_lapse_Ti}\\
\Phi_i^+(r)	&:=	\int_{\bar r}^{r}	\frac{\mathcal B_i(s)}{\sqrt{R_i(s)}}\,ds.
\label{positive_lapse_Phi_i}
\end{align}
The incoming branch reaching the turning point has $r$ decreasing. Hence, the
coordinate lapse from the point with radial coordinate $r$ to the turning
point is again $T_i^+(r)$, respectively $\Phi_i^+(r)$.

\smallskip
In what follows we write $T_i$ and $\Phi_i$ for $T_i^+$ and $\Phi_i^+$ in
the positive-$r$ case, and for $T_i^-$ and $\Phi_i^-$ in the negative-$r$  case.

\begin{thm}
    \label{closed_broken_orbit_blockIII}
Let $\gamma_1$ and $\gamma_2$ be equatorial flyby
electromagnetic orbits with the same radial turning value $\bar r$ and, if $\bar r<0$, with energies $E_1,E_2>1$, and   let $\alpha$  be an equatorial circular electromagnetic 
orbit of radius $\bar r$ according to Theorem~\ref{proposition alpha curve}. In the positive-$r$ case there exists a sequence $r_n\to r_-^-$, while in
the negative-$r$ case there exists a sequence $r_n\to-\infty$, such that
\begin{equation}\label{phasecondition}
\Phi_1(r_n)+\Phi_2(r_n)
+\omega\bigl(T_1(r_n)+T_2(r_n)\bigr)
\in 2\pi\mathbb Z.
\end{equation}
Moreover,
\[
s_n:=T_1(r_n)+T_2(r_n)>0,
\]
for all $n$ large enough, and  setting $w_n:=\alpha(s_n)$,
the branch of the flyby electromagnetic orbit $\gamma_2$ issued from
$w_n$ meets the incoming branch of $\gamma_1$ at a point $p_n$ with
$r(p_n)=r_n$. Hence, for suitable parameters $\tau_n<0$ and $\sigma_n>0$,
\begin{equation}\label{chain}
\gamma_1\big|_{[\tau_n,0]}
\cup
\alpha\big|_{[0,s_n]}
\cup
\gamma_2\big|_{[0,\sigma_n]}
\end{equation}
is a closed broken electromagnetic orbit
contained in block $\mathrm{III}$.
\end{thm}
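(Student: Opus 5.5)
The plan is to reduce closing the chain \eqref{chain} to a single scalar equation in the radial variable, and then solve that equation by a divergence and intermediate-value argument. Since $\partial_t$ and $\partial_\phi$ are Killing and $A$ is invariant under their flows, any $(t,\phi)$-translate of an equatorial electromagnetic orbit is again an electromagnetic orbit with the same $k,E,L$. This lets us place the turning points freely.

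\emph{Step 1: placing the orbits.} Put the turning point of $\gamma_1$ at $q=\alpha(0)=(t_0,\bar r,\pi/2,\phi_0)$. By \eqref{negative_lapse_Ti}--\eqref{positive_lapse_Phi_i}, the incoming branch of $\gamma_1$ passes through the radius $r$ at the point $(t_0-T_1(r),\,r,\,\pi/2,\,\phi_0-\Phi_1(r))$. Theorem~\ref{proposition alpha curve} gives $\alpha(s)=(t_0-s,\bar r,\pi/2,\phi_0+\omega s)$. Translate $\gamma_2$ so that its turning point is $w=\alpha(s)$. Its outgoing branch then reaches radius $r$ at $(t_0-s+T_2(r),\,r,\,\pi/2,\,\phi_0+\omega s+\Phi_2(r))$. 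The two points coincide exactly when
\[
s=T_1(r)+T_2(r)\quad\text{and}\quad \Phi_1(r)+\Phi_2(r)+\omega s\in2\pi\mathbb Z,
\]
which is \eqref{phasecondition} with $s=s_n$. Positivity $s_n>0$ is needed so that $\alpha|_{[0,s_n]}$ is a future-directed piece. Every piece has radius between $\bar r$ and $r_n$, so the chain stays in block III.

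\emph{Step 2: asymptotics of the lapses.} Let $h(r):=\Phi_1+\Phi_2+\omega(T_1+T_2)$. It is continuous on $(\bar r,r_-)$ in the positive case and on $(-\infty,\bar r)$ in the negative case, because the integrands have only an integrable $|s-\bar r|^{-1/2}$ singularity at the simple root $\bar r$. I would show that $h$ and $T_1+T_2$ are unbounded at the relevant end.

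In the negative case we have $E_i>1$, so $\mathcal A_i(s)\sim E_i s^2$ and $\sqrt{R_i(s)}\sim\sqrt{E_i^2-1}\,s^2$ as $s\to-\infty$. Hence $T_i(r)\sim \frac{E_i}{\sqrt{E_i^2-1}}|r|\to+\infty$. On the other hand $\mathcal B_i/\sqrt{R_i}=O(s^{-2})$, so $\Phi_i$ converges. Therefore $h(r)\sim \omega\,(T_1+T_2)$, which is unbounded since $\omega\neq0$.

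In the positive case, $P_i>0$ on $(\bar r,r_-)$ because $\gamma_i$ is future-pointing in block III. Also $R_i(r_-)=P_i(r_-)^2>0$, since the flyby's radial range $[\bar r,\infty)$ contains $r_-$ and so $R_i$ cannot vanish there. It follows that
\[
\frac{\mathcal A_i}{\sqrt{R_i}}\sim\frac{r_-^2+a^2}{\Delta},\qquad \frac{\mathcal B_i}{\sqrt{R_i}}\sim\frac{a}{\Delta},
\]
and $\int^{r}ds/\Delta\sim(r_+-r_-)^{-1}|\log(r_--r)|\to+\infty$. Thus $T_i\to+\infty$, which gives $s_n>0$ for large $n$, and
\[
h(r)\sim 2\bigl(a+\omega(r_-^2+a^2)\bigr)\int^r\frac{ds}{\Delta}.
\]

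\emph{Step 3: choosing $r_n$.} Since $h$ is continuous and $|h|\to\infty$, the intermediate value theorem gives infinitely many $r_n$, accumulating at $r_-$ (respectively at $-\infty$), with $h(r_n)\in2\pi\mathbb Z$. By Step 1 these $r_n$ yield the points $w_n$ and $p_n$ and the closed chain \eqref{chain}; the parameters $\tau_n<0$ and $\sigma_n>0$ are the proper times corresponding to the radius $r_n$ on the respective branches.

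\emph{Main obstacle.} The main obstacle is the positive case when the leading coefficient $a+\omega(r_-^2+a^2)$ vanishes, i.e.\ $\omega=-a/(r_-^2+a^2)$. This value is a priori possible, since $\omega<0$ by \eqref{1+aomega}. If it occurs, I would first try to rule it out using the explicit formula \eqref{omega middle curve} together with $G(\bar r)<0$. Failing that, I would expand $\mathcal A_i,\mathcal B_i,\sqrt{R_i}$ one order further in $r_--s$. The combination $\mathcal B_i+\omega\mathcal A_i=(1+a\omega)D_i+(a+\omega(s^2+a^2))P_i/\Delta$ then contributes $\omega(s^2-r_-^2)P_i/\Delta$, which is bounded, so $h$ would stay bounded in that case and a different argument would be needed. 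The fallback is a continuity argument: $h$ depends continuously and non-trivially on the constants of the flyby branches, so one can perturb those constants (within the families of Theorem~\ref{prescribed_common_turning_fyby}) to make $h(r)$ hit $2\pi\mathbb Z$ at some $r$ near $r_-$.
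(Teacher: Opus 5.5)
Your reduction of closedness to \eqref{phasecondition} with $s=T_1(r)+T_2(r)$ is exactly the paper's proof. So is the divergence-plus-intermediate-value argument that follows it: logarithmic divergence of $\int ds/\Delta$ at $r_-$ in the positive case, and $T_i\sim E_i|r|/\sqrt{E_i^2-1}$ with $\Phi_i$ convergent in the negative case.

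The step you leave open is not actually an obstacle. You read \eqref{1+aomega} only as $\omega<0$, but it gives $1+a\omega<0$, i.e.\ $\omega<-1/a$. Therefore
\[
a+\omega(r_-^2+a^2)=a(1+a\omega)+\omega r_-^2<0,
\]
since both summands are negative. Equivalently, $\omega<-1/a<-a/(r_-^2+a^2)$ because $r_->0$. This is precisely the line the paper uses to conclude $h(r)\to-\infty$ as $r\to r_-^-$, so the degenerate value $\omega=-a/(r_-^2+a^2)$ never occurs.

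You should close the argument this way and drop the fallback. Perturbing the constants of motion of the flyby branches replaces the given $\gamma_1,\gamma_2$ by different orbits. It would therefore only show that \emph{some} nearby pair closes up, not the prescribed pair in the statement. As written, then, your proposal does not prove the theorem in the case you flagged, and the one-line sign argument above is what is missing.
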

\begin{proof}
Let us first prove the fact that the broken electromagnetic orbit  \eqref{chain} is  a closed one. 
Let $r_*$ belong to the relevant
radial interval, namely $r_*\in(\bar r,r_-)$ in the positive-$r$ case and
$r_*<\bar r$ in the negative-$r$ case. Suppose that
\[
s_*:=T_1(r_*)+T_2(r_*)>0
\]
and
\[
\Phi_1(r_*)+\Phi_2(r_*)
+\omega\bigl(T_1(r_*)+T_2(r_*)\bigr)
\in 2\pi\mathbb Z.
\]
By the previous discussion on the coordinate lapses and recalling that $q=\alpha(0)=(t_0,\bar r,\pi/2,\phi_0)$,  the point of the incoming branch of $\gamma_1$ with radial coordinate $r_*$ has coordinates
\[
p_*=\bigl(t_0-T_1(r_*),	r_*,	\pi/2,	\phi_0-\Phi_1(r_*)\bigr).
\]
Indeed, $T_1(r_*)$ and $\Phi_1(r_*)$ are precisely the $t$-- and $\phi$--coordinate lapses from $p_*$ to the turning point $q$ along the incoming branch of $\gamma_1$.
On the other hand, if
\[
w=\alpha(s_*)=(t_0-s_*,\bar r,\pi/2,\phi_0+\omega s_*),
\]
then the  branch of $\gamma_2$ issued from $w$ reaches the radial	value $r_*$ at the point
\[\bigl(t_0-s_*+T_2(r_*),	r_*,	\pi/2,	\phi_0+\omega s_*+\Phi_2(r_*)	\bigr),	\]
because $T_2(r_*)$ and $\Phi_2(r_*)$ are the coordinate lapses from the
turning point $\bar r$ to the point with radial coordinate $r_*$ along that branch of $\gamma_2$.
This point coincides with $p_*$ if and only if
\[
t_0-s_*+T_2(r_*)=t_0-T_1(r_*),
\]
and
\[
\phi_0+\omega s_*+\Phi_2(r_*)\equiv\phi_0-\Phi_1(r_*),\qquad \bmod 2\pi.
\]
The first equation is precisely
\[
s_*=T_1(r_*)+T_2(r_*).
\]
Using this value of $s_*$, the angular equation becomes
\[
\Phi_1(r_*)+\Phi_2(r_*)	+\omega\bigl(T_1(r_*)+T_2(r_*)\bigr)\in 2\pi\mathbb Z.
\]
Thus, whenever \eqref{phasecondition} holds and $s_*>0$, the  branch of
$\gamma_2$ issued from $\alpha(s_*)$ meets the incoming branch of
$\gamma_1$ at $p_*$.
	
It remains to prove that a sequence of radii satisfying \eqref{phasecondition} exists. Let
\[F(r):=	\Phi_1(r)+\Phi_2(r)	+\omega\bigl(T_1(r)+T_2(r)\bigr).\]
In the positive-$r$ case, by \eqref{AB} and \eqref{positive_lapse_Ti}--\eqref{positive_lapse_Phi_i}, we have
\[
F(r)=	\sum_{i=1}^{2}	\int_{\bar r}^{r}	\left(	\frac{(1+a\omega)D_i}{\sqrt{R_i(s)}}
+\frac{\bigl(a+\omega(s^2+a^2)\bigr)P_i(s)}	{\Delta(s)\sqrt{R_i(s)}}\right)ds.
\]
Since $\gamma_i$ is future-pointing and
$P_i(r)=-g(\dot\gamma_i,V)$, where $V$ is the future-directed timelike field orienting block
$\mathrm{III}$, we have $P_i(r)>0$ for $r\in(\bar r,r_-)$. Moreover, $V$ extends to $r=r_-$ as a nonzero future-pointing causal vector field (recall Remark~\ref{horizons}), so that $P_i(r_-)>0$.
Therefore, from \eqref{R}, 
\[
R_i(r_-)=P_i(r_-)^2>0.
\]
The first integrals in the expression of $F$ converge as $r\to r_-^-$. The
second ones contain the singular term $\Delta(r)=(r-r_-)(r-r_+)$; by \eqref{1+aomega},
\[
a+\omega(r_-^2+a^2)	=	a(1+a\omega)+\omega r_-^2<0,\]
hence they both  diverge negatively as $r\to r_-^-$.  Therefore, 
$F(r)\to-\infty$, as $r\to r_-^-$ and then it 
crosses infinitely many levels $2\pi n$ with $n\to-\infty$.
This gives a sequence $r_n\to r_-^-$ such that $F(r_n)\in2\pi\mathbb Z$.
Analogously, from \eqref{positive_lapse_Ti} and the same horizon asymptotics,
$T_i(r)\to+\infty$, for each $i=1,2$,
as $r\to r_-^-$. Thus, $s_n:=T_1(r_n)+T_2(r_n)>0$ for all $n$ sufficiently large.

In the negative-$r$ case, from \eqref{Rr} we have, as $r\to-\infty$,
\[
\sqrt{R_i(r)}=	\sqrt{E_i^2-1}\,r^2+O(|r|).	\]
Moreover,
\[
\frac{\mathcal A_i(r)}{\sqrt{R_i(r)}}=	\frac{aD_i+(r^2+a^2)P_i(r)/\Delta(r)}{\sqrt{R_i(r)}}
=\frac{E_i}{\sqrt{E_i^2-1}}+O(|r|^{-1}),	\]
hence by \eqref{negative_lapse_Ti}, $T_i(r)	\to+\infty$, as $r\to-\infty$.
On the other hand,
\[
\frac{\mathcal B_i(r)}{\sqrt{R_i(r)}}=	\frac{D_i+aP_i(r)/\Delta(r)}	{\sqrt{R_i(r)}}	=	O(r^{-2}),
\]
and therefore $\Phi_i(r)$ in \eqref{negative_lapse_Phi_i} converges  as $r\to-\infty$. Since the
future-pointing spherical orbit $\alpha$ satisfies $\omega<0$, we obtain
\[
F(r)	=	\omega\bigl(T_1(r)+T_2(r)\bigr)+O(1)	\to-\infty,	\qquad r\to-\infty.
\]
Therefore, $F$ crosses infinitely many levels $2\pi n$ with $n\to-\infty$.
This gives a sequence $r_n\to-\infty$ such that
$F(r_n)\in2\pi\mathbb Z$. Since also $T_i(r_n)\to+\infty$, $i=1,2$, we also have that
$s_n:=T_1(r_n)+T_2(r_n)>0$ for all $n$ large enough.	
\end{proof}
\begin{oss}\label{closing}
Theorem~\ref{closed_broken_orbit_blockIII} is not a result for specially tuned constants of motion.
Once the radius $\bar r$ in the equatorial time-machine region has been fixed,
the spherical branch $\alpha$ is fixed by Theorem~\ref{proposition alpha curve}, but the two flyby
branches may be chosen arbitrarily among the equatorial future-pointing flyby
electromagnetic orbits with radial turning value $\bar r$; in the negative
$r$-case we only require $E_1,E_2>1$. No further global matching condition is
imposed on their constants of motion or charge-to-mass ratios.
Thus the closing result is universal
within this class of flyby--spherical--flyby concatenations.
In particular, the interpretation  of the construction in Theorem~\ref{closed_broken_orbit_blockIII} as a  causality-violating particle decay is not confined to the
numerical examples of Section~\ref{subsection_numerics}. It only requires the local conservation laws at the break
points. These are algebraic conditions in one tangent space and are
generically solvable. Indeed, let an incoming particle have kinetic four-momentum
$p_0=m_0u_0$, and suppose that one outgoing branch is prescribed with
kinetic four-momentum $p_1=m_1u_1$, where $u_0,u_1$ are future-pointing unit
timelike vectors. The remaining fragment must then have kinetic four-momentum $p_2=p_0-p_1$.
Writing $-g(u_0,u_1)=\cosh \chi$, one finds
\[
g(p_2,p_2)	=	-m_0^2-m_1^2+2m_0m_1\cosh\chi .
\]
Hence $p_2$ is timelike for all sufficiently small positive ratios
$m_1/m_0$. Moreover $p_2$ is then future-pointing, and its rest mass is
$m_2=\sqrt{-g(p_2,p_2)}$.
The electric charge conservation law imposes no additional obstruction:
it simply determines
\[
e_2=e_0-e_1,
\qquad
k_2=\frac{e_2}{m_2}.
\]
Thus the vertex compatibility conditions are nonempty, open conditions.
\end{oss}
\begin{oss}\label{brokenKerr}
Notice that an analogous concatenation  cannot be achieved in the Kerr spacetime using future-pointing timelike  geodesics since  a timelike curve  with constant $r$-coordinate  on $\{\theta=\pi/2\}\cap\mathfrak{T}$ cannot be a geodesic, as shown in the following Proposition.
\end{oss}
\begin{prop}
A timelike curve $\alpha(s)=(t_0+b_t s,\bar{r},\pi/2,\phi_0+b_{\phi} s)$ in the Kerr spacetime, contained in $\mathfrak{T}$, with  $t_0,b_t,  b_\phi\in\mathbb{R},\,  \phi_0\in[0,2\pi)$, cannot be a geodesic.
\end{prop}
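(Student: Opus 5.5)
The plan is to take the Kerr case $\bh=0$ of the computation in the proof of Theorem~\ref{proposition alpha curve}. In Kerr, $\mathfrak T$ lies in $\{r<0\}$, so $\bar r<0$ and $\Delta(\bar r)=\bar r^2-2M\bar r+a^2>0$. Along $\alpha$ one has $\dot\alpha=b_t\partial_t+b_\phi\partial_\phi$ with constant components. The Christoffel symbols $\Gamma^\mu_{tt},\Gamma^\mu_{t\phi},\Gamma^\mu_{\phi\phi}$ vanish for $\mu=t,\phi$, and on $\theta=\pi/2$ also for $\mu=\theta$. Hence the geodesic equation reduces to its $r$-component, which is \eqref{LFE of middle curve} with $\bh=0$, $k=0$ and general $(b_t,b_\phi)$ in place of $(-1,\omega)$:
\[
\frac{\Delta}{\bar r^5}\Bigl[\bar r^4 b_\phi^2-M\bar r\,(b_t-a b_\phi)^2\Bigr]=0 .
\]
Equivalently, $\partial_r g_{tt}b_t^2+2\partial_r g_{t\phi}b_tb_\phi+\partial_r g_{\phi\phi}b_\phi^2=0$ at $\bar r$. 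I would verify this directly from $g_{tt}=-(1-2M/r)$, $g_{t\phi}=-2Ma/r$, $g_{\phi\phi}=r^2+a^2+2Ma^2/r$, since $\partial_r(g_{tt}b_t^2+\dots)$ collapses to $2rb_\phi^2-\tfrac{2M}{r^2}(b_t-ab_\phi)^2$.

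The key step is a sign argument. With $\bar r<0$ and $M>0$, both terms $\bar r^4b_\phi^2$ and $-M\bar r(b_t-ab_\phi)^2$ are nonnegative, and $\Delta(\bar r)>0$. So the geodesic equation forces $b_\phi=0$ and $b_t=ab_\phi=0$. Then $\dot\alpha=0$, which contradicts $\alpha$ being timelike.

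To be careful about the hypothesis, I would note that being contained in $\mathfrak T$ only serves to place $\bar r<0$, by \cite[Lemma 2.4.9]{KBH_book} or by \eqref{gphiphi in pi/2} with $\bh=0$. In fact the argument shows more: no nonconstant curve of this form at any $\bar r<0$ is a geodesic.

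The main obstacle is only getting the correct form of the $r$-equation. The factors $g^{rr}=\Delta/\bar r^2$ and the $\tfrac12$ are irrelevant for the sign argument, so I would compute it as $-\tfrac12 g^{rr}\partial_r(g_{ab}\dot x^a\dot x^b)=0$, which is valid because $g_{rt}=g_{r\phi}=0$. This avoids any reliance on quoted Christoffel tables.
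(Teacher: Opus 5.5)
Your proof is correct, but it takes a different route from the paper's. The paper argues through the Carter constant. The first-order system forces $\mathcal Q=0$ because $\theta\equiv\pi/2$. On the other hand, a cited classification of Kerr geodesics (timelike geodesics with $\mathcal Q\ge 0$ confined to $\{r<0\}$ are flyby orbits) implies that a geodesic of constant negative radius must have $\mathcal Q<0$, which is a contradiction.

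You instead write the radial geodesic equation explicitly as $2\bar r\,b_\phi^2-\tfrac{2M}{\bar r^2}(b_t-ab_\phi)^2=0$ and conclude by a sign argument from $\bar r<0$. The steps check out: the equatorial Kerr coefficients are right, the $t$-, $\phi$- and $\theta$-components vanish identically, and your equation agrees with \eqref{LFE of middle curve} at $\bh=k=0$.

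What your version buys is self-containment and a stronger conclusion. You use the hypothesis $\alpha\subset\mathfrak T$ only to get $\bar r<0$, which is immediate from $g_{\phi\phi}=\bar r^2+a^2+2Ma^2/\bar r$. You then show that no nonconstant curve of this form at any $\bar r<0$, of any causal character, is a geodesic.

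Your computation also shows why the electromagnetic field is indispensable in Theorem~\ref{proposition alpha curve} when $\bar r<0$. In \eqref{LFE of middle curve}, every term except the coupling term $-\bh k\bar r(1+a\omega)$ is nonnegative, and these terms cannot all vanish. So the coupling term must be negative. Together with $1+a\omega<0$, this is exactly the sign rule of Remark~\ref{remark sign spherical charge}.

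The paper's argument is shorter and places the statement within the known structure of Kerr geodesics: spherical orbits at negative radius need $\mathcal Q<0$. The price is that it depends on external results rather than a direct computation.
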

\begin{proof}
First,  observe that a curve $\alpha$ of this type would be the most general geodesic curve where the $r,\theta$-coordinates are kept constant by Theorem~\ref{first-order-equations}.
In the Kerr spacetime the time-machine region lies in $\{r<0\}$, so $\bar{r}<0$.  Since by Proposition~4.8.2 of \cite{KBH_book} every timelike geodesic, with Carter constant $\mathcal Q\geq 0$ and constrained in $\{r<0\}$, is a flyby orbit,  a geodesic with negative constant $r$-coordinate must have $\mathcal Q<0$,  consistently with Proposition $4.4$ of \cite{sanzeni_timelike}. However,  the constancy of $\theta=\pi/2$ implies that $\mathcal Q=0$ by the system \eqref{equations-of-motions}.
\end{proof}

\begin{figure}[H] 
	\centering
	\includegraphics[scale=0.35]{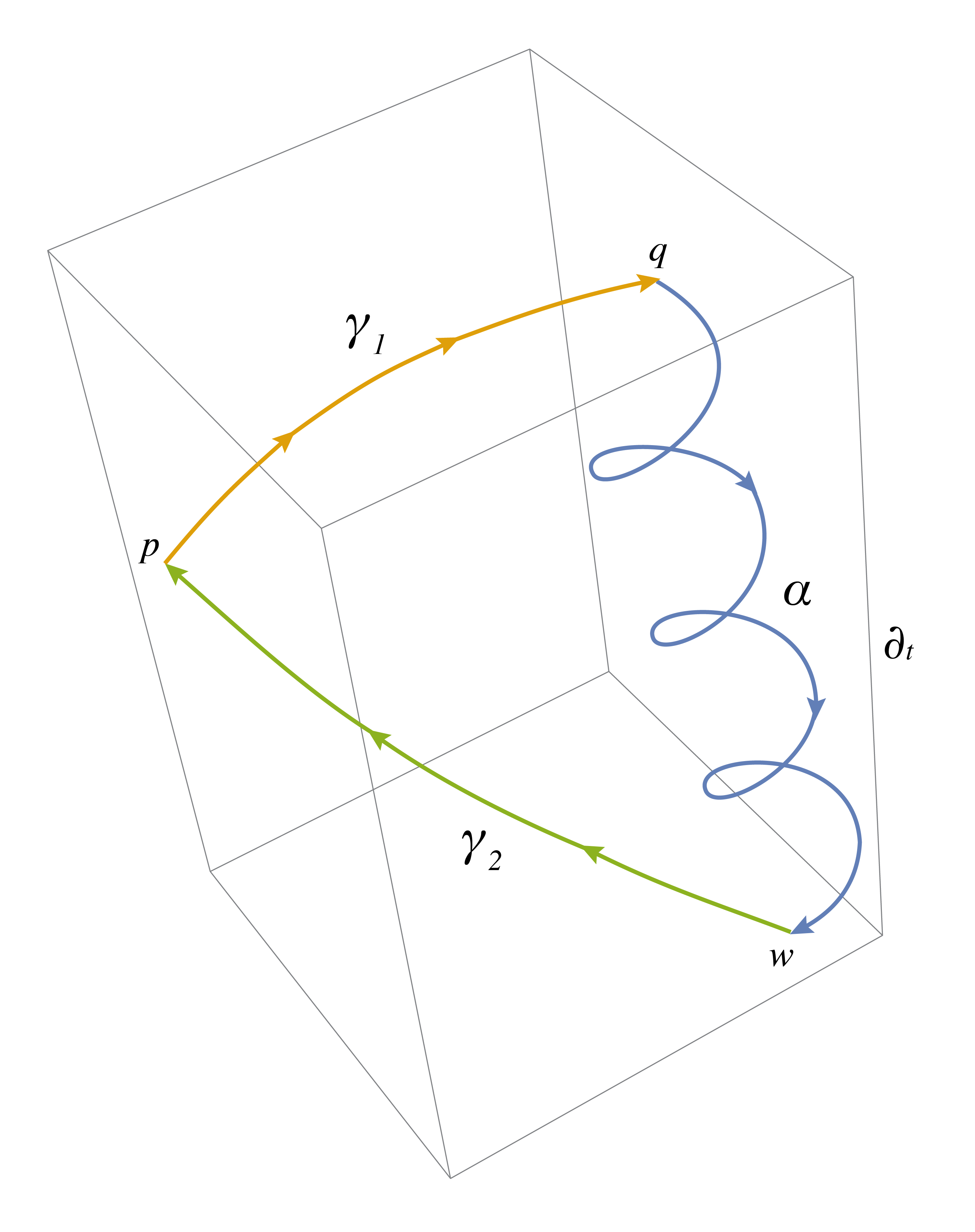}  
	\caption{The concatenation \eqref{concatenation expression} of solutions to the LFE: an illustrative picture of the closed broken electromagnetic orbit in Theorem~\ref{closed_broken_orbit_blockIII}} \label{broken orbit figure}
\end{figure}

\subsection{Some numerical examples} \label{subsection_numerics}
Let us now show some numerical examples of the decay processes described in Sections~\ref{Penrose}--\ref{subsection_causality_violation_process} in both $r$-positive and $r$-negative equatorial regions of block III.

\begin{example} \label{decay_positive}
	Let $a=3,\, M=5,\, \bh=3.5$, so that $r_{\mathrm{tm}}\approx 1.090$, and fix
	$\bar r=0.9\in(0,r_{\mathrm{tm}})$.
	
	\smallskip
	\noindent\emph{Step 1: the first flyby orbit.}
Consider the data
\[
m_{\gamma_1}=10,\qquad k_{\gamma_1}=-1,\qquad e_{\gamma_1}=-10,\qquad
E(\gamma_1)=1.
\]
We have
\[
\kappa=-k_{\gamma_1}\bh=3.5.
\]
Let $D_{\gamma_1}:=L(\gamma_1)-aE(\gamma_1)$;
replacing in \eqref{RP} and imposing $R_{\gamma_1}(\bar r)=0$, we get two
solutions, both with $P_{\gamma_1}(\bar r)>0$; we select the one with
$D_{\gamma_1}>0$, namely
\[
D_{\gamma_1}\approx 0.207,
\qquad
L(\gamma_1)\approx 3.207.
\]
Thus
\[
\eta:=\frac{D_{\gamma_1}}{\kappa}\approx 0.059<\frac{a}{M}=0.6,
\]
and, by Theorem~\ref{prescribed_common_turning_fyby}-(b),
$E_{\eta,\kappa}=E(\gamma_1)=1$ and
$L_{\eta,\kappa}=a+D_{\gamma_1}\approx 3.207$, so that condition (i) is
satisfied. Hence there exists an equatorial flyby electromagnetic orbit
$\gamma_1$ with prescribed turning value $\bar r=0.9$.

\smallskip
	\noindent\emph{Step 2: the equatorial circular orbit.}
	Let $\alpha$ be the equatorial circular electromagnetic orbit of radius $\bar r$ given
	by Theorem~\ref{proposition alpha curve} (recall
	Remark~\ref{elixorbit}), with rest mass $m_\alpha=0.72$. From
	\eqref{kbarr} and Corollary~\ref{cor energy_ang spherical orbit},
	\[
	k_\alpha\approx -2.771,\quad e_\alpha\approx-1.995,\quad
	E(\alpha)\approx -7.408,\quad L(\alpha)\approx -26.055,
	\]
	as in Figure~\ref{elicarpos}. In particular $E(\alpha)<0$, i.e.
	\eqref{E_alpha<0_condition} holds.
	
	\smallskip
	\noindent\emph{Step 3: the first decay
		$\gamma_1\to\alpha+\tilde\gamma_1$ (Fig.~\ref{firstdecay_positive})}
	Let the decay occur at the turning point $q=\alpha(0)$. By the conservation
	laws of electric charge, mass-weighted energy and angular momentum, and the
	normalization $g(\dot{\tilde\gamma}_1,\dot{\tilde\gamma}_1)=-1$,
	\begin{align*}
		&m_{\tilde{\gamma}_1}\approx 4.669,\\
		&k_{\tilde{\gamma}_1}= \frac{k_{\gamma_1} m_{\gamma_1}-k_{\alpha}m_{\alpha}}{m_{\tilde{\gamma}_1}}\approx -1.714,\\
		&E(\tilde{\gamma}_1)=\frac{m_{\gamma_1}E(\gamma_1)-m_{\alpha}E(\alpha)}{m_{\tilde{\gamma}_1}}\approx 3.284,\\
		&L(\tilde{\gamma}_1)=\frac{m_{\gamma_1}L(\gamma_1)-m_{\alpha}L(\alpha)}{m_{\tilde{\gamma}_1}}\approx 10.887,\\
		&e_{\tilde{\gamma}_1}=k_{\tilde{\gamma}_1} m_{\tilde{\gamma}_1}\approx -8.005,
	\end{align*}
	with $g(\dot{\tilde{\gamma}}_1,V)\approx -4.956<0$, so $\tilde\gamma_1$ is
	future-pointing, and $m_{\gamma_1}>m_\alpha+m_{\tilde\gamma_1}$. By
	Remark~\ref{decay_turning_points}, $\dot{\tilde\gamma}_1^r=0$ at $q$. The
	data of $\tilde\gamma_1$ satisfy the assumptions 
	in Theorem~\ref{prescribed_common_turning_fyby}-(b):
	\[
	\kappa=-k_{\tilde\gamma_1}\bh\approx 6.000,
	\qquad
	\eta=\frac{L(\tilde\gamma_1)-aE(\tilde\gamma_1)}{\kappa}\approx 0.172
	<\frac aM=0.6,
	\]
	and $E(\tilde\gamma_1)\approx 3.284\geq1$, so condition (i) is also satisfied.
	Hence $\tilde\gamma_1$ is a flyby electromagnetic orbit with the same
	turning value $\bar r$. Since
	$E(\alpha)<0$, inequality \eqref{energy_extracted_particle} applies and
	yields the energy gain
	\[
	E(\tilde\gamma_1)\approx 3.284>1=E(\gamma_1),
	\]
	realizing the Penrose-type process  of Subsection~\ref{Penrose} in the positive-$r$ region.
	
	\begin{figure}[H]
		\includegraphics[scale=0.6]{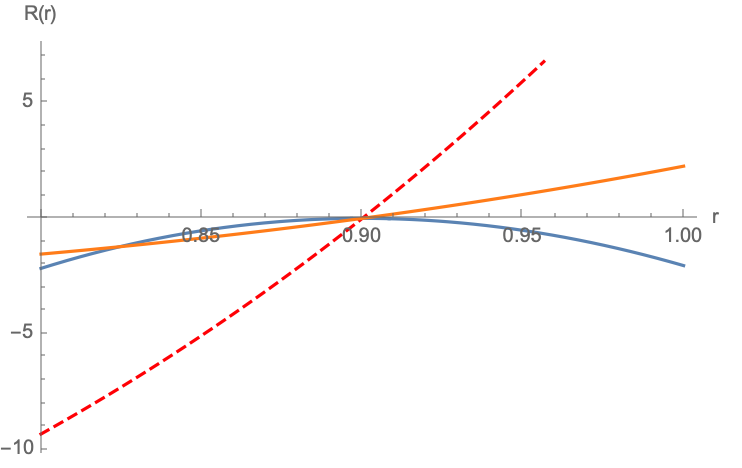}
		\caption{A common root of three polynomials $R$ for the first decay: a flyby electromagnetic orbit $\gamma_1$ (whose $R$ is the orange curve) reaching the radius $\bar{r}=0.9$ and decaying in the spherical orbit $\alpha$ in $\mathfrak T$ (whose polynomial $R$ is the blue curve) and in the flyby orbit $\tilde{\gamma}_1$ (whose $R$ is the dashed red curve).} \label{firstdecay_positive}
	\end{figure}
	
	\smallskip
	\noindent\emph{Step 4: the second flyby orbit and the second decay
		$\alpha\to\gamma_2+\tilde\alpha_1$ (Fig.~\ref{seconddecay_positive}).}
	Let
\[
\kappa=7,\ \text{i.e.}\ k_{\gamma_2}=-2,
\qquad
\eta\approx0.148,\ \text{i.e.}\ D_{\gamma_2}\approx 1.0385;
\]
condition (i) is satisfied, since $\eta<a/M$ and
\eqref{prescribed turning energy} gives $E(\gamma_2)=2.2\geq1$, with
$L(\gamma_2)\approx 7.638$. Thus, by
Theorem~\ref{prescribed_common_turning_fyby}, there exists another flyby
electromagnetic orbit $\gamma_2$ with turning value $\bar r$. Set
$m_{\gamma_2}=0.02$, hence $e_{\gamma_2}=-0.04$. Let the decay occur at a
point $w=\alpha(s)$, $s>0$, to be fixed in Step 5. By the conservation laws
and the normalization $g(\dot{\tilde\alpha}_1,\dot{\tilde\alpha}_1)=-1$,
\begin{align*}
	& m_{\tilde{\alpha}_1}\approx 0.430,\\
	&k_{\tilde{\alpha}_1}= \frac{k_{\alpha} m_{\alpha}-k_{\gamma_2}m_{\gamma_2}}{m_{\tilde{\alpha}_1}}\approx -4.544,\\
	&E(\tilde{\alpha}_1)=\frac{m_{\alpha}E(\alpha)-m_{\gamma_2}E(\gamma_2)}{m_{\tilde{\alpha}_1}}\approx -12.499,\\
	&L(\tilde{\alpha}_1)=\frac{m_{\alpha}L(\alpha)-m_{\gamma_2}L(\gamma_2)}{m_{\tilde{\alpha}_1}}\approx -43.952,\\
	&e_{\tilde{\alpha}_1}=k_{\tilde{\alpha}_1} m_{\tilde{\alpha}_1}\approx -1.955,
\end{align*}
with $g(\dot{\tilde{\alpha}}_1,V)\approx -23.558<0$ and
$m_\alpha>m_{\gamma_2}+m_{\tilde\alpha_1}$, so the decay is kinematically admissible.

	\begin{figure}[H]
		\includegraphics[scale=0.6]{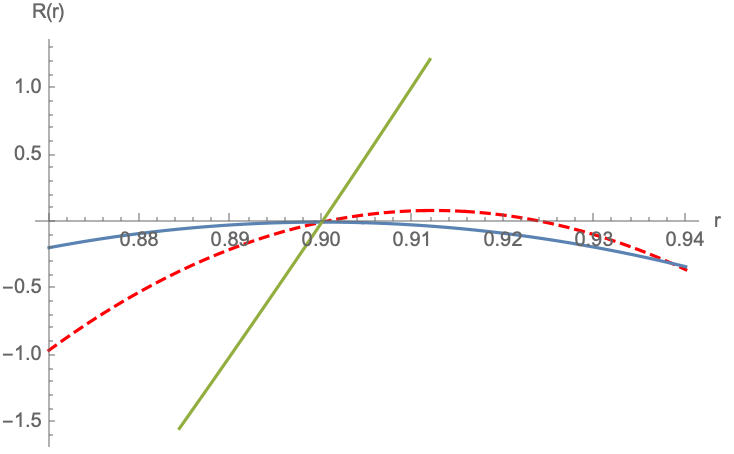}
		\caption{Other two polynomials $R$ having $\bar r=0.9$ as a common root for the second decay: a flyby orbit $\gamma_2$ (whose polynomial $R$ is the green curve) and an $r$-bouncing orbit $\tilde{\alpha}_1$ (whose polynomial is the dashed red curve). The polynomial $R$ of the spherical orbit $\alpha$ in Figure~\ref{firstdecay_positive} is also represented.}\label{seconddecay_positive}
	\end{figure}
	
	\smallskip
	\noindent\emph{Step 5: the closed broken electromagnetic orbit.}
	The orbits $\gamma_1$ and $\gamma_2$ are equatorial, future-pointing flyby
	electromagnetic orbits with the same radial turning value $\bar r=0.9$.
	By Theorem~\ref{closed_broken_orbit_blockIII} there exists a sequence
	$r_n\to r_-^-$ satisfying \eqref{phasecondition} with
	$s_n=T_1(r_n)+T_2(r_n)>0$ for all $n$ large enough. Choosing the second
	decay at the point $w_n=\alpha(s_n)$, the concatenation
	\[
	\gamma_1\big|_{[\tau_n,0]}\cup\alpha\big|_{[0,s_n]}\cup\gamma_2\big|_{[0,\sigma_n]}
	\]
	is a closed broken electromagnetic orbit contained in the positive-$r$ region of block~$\mathrm{III}$.
\end{example}

\begin{example} \label{decay_negative}
Let $a=3,\, M=5,\, \bh=1$ and prescribe the radius
	$\bar r=-2.5$. Let us observe that $\bar r$ belongs to the $r$-section of
	the equatorial time-machine region. Indeed, the latter is characterized by
	$G(r)<0$, $r\neq0$, where $G$ is the radial polynomial in \eqref{G}, which
	for these values of the parameters reads
	\[
	G(r)=r^4+9r^2+90r-9,
	\]
	and $G(-2.5)\approx-138.687<0$.

\smallskip    
\noindent\emph{Step 1: the first flyby orbit.}
We choose $\gamma_1$ in the family of
Theorem~\ref{prescribed_common_turning_fyby} with
\[
\kappa=1,\ \text{i.e.}\ k_{\gamma_1}=-1,
\qquad
\eta=0.4,\ \text{i.e.}\ D_{\gamma_1}=-0.4;
\]
then \eqref{prescribed turning energy} gives
\[
E(\gamma_1)\approx 2.810,
\qquad
L(\gamma_1)\approx 8.029,
\]
and $A_{\eta,\kappa}\approx -2.417<0$, so that condition (ii) is satisfied.
Hence there exists an equatorial flyby electromagnetic orbit $\gamma_1$ with
prescribed turning value $\bar r=-2.5$. We set $m_{\gamma_1}=7$, hence
$e_{\gamma_1}=-7$.

\smallskip
\noindent\emph{Step 2: the equatorial circular orbit.}
Let $\alpha$ be the equatorial circular electromagnetic orbit of radius $\bar r$ given
by Theorem~\ref{proposition alpha curve} (recall
Remark~\ref{elixorbit}), with rest mass $m_\alpha=0.2$. From
\eqref{kbarr} and Corollary~\ref{cor energy_ang spherical orbit},
\[
k_\alpha\approx 17.045,\quad e_\alpha\approx 3.409,\quad
E(\alpha)\approx -0.479,\quad L(\alpha)\approx -12.489.
\]
In particular $E(\alpha)<0$, i.e. \eqref{E_alpha<0_condition} holds also in
the negative-$r$ region.

\smallskip
\noindent\emph{Step 3: the first decay
	$\gamma_1\to\alpha+\tilde\gamma_1$.}
Let the decay occur at the turning point $q=\alpha(0)$. By the conservation
laws of electric charge, mass-weighted energy and angular momentum, and the
normalization $g(\dot{\tilde\gamma}_1,\dot{\tilde\gamma}_1)=-1$,
\begin{align*}
	&m_{\tilde{\gamma}_1}\approx 6.178,\\
	&k_{\tilde{\gamma}_1}= \frac{k_{\gamma_1} m_{\gamma_1}-k_{\alpha}m_{\alpha}}{m_{\tilde{\gamma}_1}}\approx -1.685,\\
	&E(\tilde{\gamma}_1)=\frac{m_{\gamma_1}E(\gamma_1)-m_{\alpha}E(\alpha)}{m_{\tilde{\gamma}_1}}\approx 3.199,\\
	&L(\tilde{\gamma}_1)=\frac{m_{\gamma_1}L(\gamma_1)-m_{\alpha}L(\alpha)}{m_{\tilde{\gamma}_1}}\approx 9.502,\\
	&e_{\tilde{\gamma}_1}=k_{\tilde{\gamma}_1} m_{\tilde{\gamma}_1}\approx -10.409,
\end {align*}
with $g(\dot{\tilde{\gamma}}_1,V)\approx -16.068<0$, so $\tilde\gamma_1$ is
future-pointing, and $m_{\gamma_1}>m_\alpha+m_{\tilde\gamma_1}$. By
Remark~\ref{decay_turning_points}, $\dot{\tilde\gamma}_1^r=0$ at $q$. The
data of $\tilde\gamma_1$ satisfy the assumptions in
Theorem~\ref{prescribed_common_turning_fyby}-(b):
\[
\kappa=-k_{\tilde\gamma_1}\bh\approx 1.685,
\qquad
\eta=-\frac{L(\tilde\gamma_1)-aE(\tilde\gamma_1)}{\kappa}\approx 0.057>0,
\]
and condition (ii) is also satisfied, since
$E(\tilde\gamma_1)\approx 3.199>1$ and $A_{\eta,\kappa}\approx -5.338<0$.
Hence $\tilde\gamma_1$ is a flyby electromagnetic orbit with the same
turning value $\bar r$. Since
$E(\alpha)<0$, inequality \eqref{energy_extracted_particle} applies and
yields the energy gain
\[
E(\tilde\gamma_1)\approx 3.199>2.810\approx E(\gamma_1),
\]
realizing the Penrose-type process of Section~\ref{Penrose} in the
negative-$r$ region.

\smallskip
\noindent\emph{Step 4: the second flyby orbit and the second decay
	$\alpha\to\gamma_2+\tilde\alpha_1$.}
We choose $\gamma_2$ in the family of
Theorem~\ref{prescribed_common_turning_fyby} with
\[
\kappa=2,\ \text{i.e.}\ k_{\gamma_2}=-2,
\qquad
\eta\approx 0.077,\ \text{i.e.}\ D_{\eta,\kappa}\approx -0.154;
\]
then \eqref{prescribed turning energy} gives
$E(\gamma_2)\approx 3.3$ and $L(\gamma_2)\approx 9.746$, and
$A_{\eta,\kappa}\approx -2.975<0$, so that condition (ii) is satisfied. Thus,
by Theorem~\ref{prescribed_common_turning_fyby}, there exists another flyby
electromagnetic orbit $\gamma_2$ with turning value $\bar r$. Set
$m_{\gamma_2}=0.01$, hence $e_{\gamma_2}=-0.02$. Let the decay occur at a
point $w=\alpha(s)$, $s>0$, to be fixed in Step 5. By the conservation laws
and the normalization $g(\dot{\tilde\alpha}_1,\dot{\tilde\alpha}_1)=-1$,
\begin{align*}
	& m_{\tilde{\alpha}_1}\approx 0.152,\\
	&k_{\tilde{\alpha}_1}= \frac{k_{\alpha} m_{\alpha}-k_{\gamma_2}m_{\gamma_2}}{m_{\tilde{\alpha}_1}}\approx 22.597,\\
	&E(\tilde{\alpha}_1)=\frac{m_{\alpha}E(\alpha)-m_{\gamma_2}E(\gamma_2)}{m_{\tilde{\alpha}_1}}\approx -0.849,\\
	&L(\tilde{\alpha}_1)=\frac{m_{\alpha}L(\alpha)-m_{\gamma_2}L(\gamma_2)}{m_{\tilde{\alpha}_1}}\approx -17.103,\\
	&e_{\tilde{\alpha}_1}=k_{\tilde{\alpha}_1} m_{\tilde{\alpha}_1}\approx 3.429,
\end{align*}
with $g(\dot{\tilde{\alpha}}_1,V)\approx -94.848<0$ and
$m_\alpha>m_{\gamma_2}+m_{\tilde\alpha_1}$, so the decay is kinematically
admissible; $\tilde\alpha_1$ is an $r$-bouncing orbit with radial range
approximately $(-2.521,-2.5)$.
    
	\smallskip
	\noindent\emph{Step 5: the closed broken electromagnetic orbit.}
	The orbits $\gamma_1$ and $\gamma_2$ are equatorial flyby
	electromagnetic orbits with the same radial turning value $\bar r=-2.5$.
	By Theorem~\ref{closed_broken_orbit_blockIII} there exists a sequence
	$r_n\to -\infty$ satisfying \eqref{phasecondition} with
	$s_n=T_1(r_n)+T_2(r_n)>0$ for all $n$ large enough. Choosing the second
	decay at the point $w_n=\alpha(s_n)$, the concatenation
	\[
	\gamma_1\big|_{[\tau_n,0]}\cup\alpha\big|_{[0,s_n]}\cup\gamma_2\big|_{[0,\sigma_n]}
	\]
	is a closed broken electromagnetic orbit contained in block~$\mathrm{III}$.
\end{example}

\section{Conclusions}
The results of this paper show that the time-machine region of Kerr--Newman spacetime is not only a locus of causality violation at the level of the metric, but also supports distinguished closed timelike trajectories governed by the Lorentz force equation. This is the main difference with the neutral Kerr picture. The simplest manifestation of this phenomenon is provided by the closed electromagnetic orbits tangent to $\partial_\phi$. Their existence is rigid: in the positive-$r$ part of the equatorial time-machine region, fixing the radius fixes the charge-to-mass ratio, and conversely fixing an admissible charge-to-mass ratio fixes the radius. These orbits are not introduced by an external force added by hand; they are sustained by the Kerr--Newman electromagnetic field itself, provided that the black hole charge has opposite sign to that of the particle. 

The main  point of the paper is that closedness can persist in a weaker, but physically natural, sense when particle decays are allowed.  A decay process can be described as a concatenation of  Lorentz-force trajectories with different charge-to-mass ratios, provided the conservation laws at the decay vertices are satisfied.
We have shown that  different equatorial flyby
electromagnetic orbits can be chosen with the same radial turning value inside
the time-machine region $\mathfrak T$, both in the positive-$r$ and negative-$r$ regions. At such a turning point a decay can
produce an equatorial circular electromagnetic orbit $\alpha$ with the same radius. The
subsequent decay of $\alpha$ can  then produce a second equatorial flyby branch
$\gamma_2$, and  condition \eqref{phasecondition} in
Theorem~\ref{closed_broken_orbit_blockIII} ensures that, for a suitable  parameter value  on $\alpha$, the concatenation $\gamma_1\cup\alpha\cup\gamma_2$
is a closed  broken electromagnetic orbit contained in
block $\mathrm{III}$.

This causality-violating process is separated from the
Penrose-type interpretation of the first decay
$\gamma_1\to\alpha+\tilde\gamma_1$. In this case the spherical orbit  $\alpha$ can have negative conserved energy even though $\partial_t$ is still timelike and future-pointing at 
the decay point; the negative contribution comes from the electromagnetic
coupling in the conserved energy, rather than from an ergoregion. Conservation
of energy then forces the other daughter particle
$\tilde\gamma_1$ to carry more energy than the parent. Since the decay takes
place below the Cauchy horizon, however, this energy extraction does not involve block I:
the electromagnetic orbit $\tilde\gamma_1$ can be flyby, and so   it reaches a different asymptotically
flat end of the maximal analytic extension, provided the radial turning value is positive while, for
a negative radial turning value, the extracted energy is carried to the
asymptotically flat end $r\to-\infty$ of block III itself.

The analysis leaves some natural questions open. The first is whether smooth closed electromagnetic orbits exist  beyond the family tangent to $\partial_\phi$ in the time-machine region.  A second question concerns the non-equatorial dynamics.  
Finally, our construction is carried out in the test-particle approximation and does not address backreaction or self-force effects. The equatorial circular
electromagnetic orbits used in the decay process indicate where this approximation reaches its natural limit. Indeed, for the family of solutions
$\gamma(s)=(t_0-s,\bar r,\pi/2,\phi_0+\omega s)$
described in Theorem~\ref{proposition alpha curve}, the proper acceleration is
\begin{align*}
	a_c(\gamma)
	&:=
	\sqrt{
		g\bigl(\nabla_{\dot\gamma}\dot\gamma,
		\nabla_{\dot\gamma}\dot\gamma\bigr)
	}
	=
	\sqrt{
		k^2 g_{\mu\nu}
		F^\mu{}_\rho F^\nu{}_\sigma
		\dot\gamma^\rho \dot\gamma^\sigma
	} \\
	&=
	\sqrt{
		k^2 g_{rr}
		\bigl(F^r{}_t\dot\gamma^t+F^r{}_\phi\dot\gamma^\phi\bigr)^2
	}
	=
	\frac{\sqrt{\Delta(\bar r)}}{|\bar r|^3}
	\,|k\bh|\,|1+a\omega|.
\end{align*}
From \eqref{omega middle curve} and \eqref{kbarr}, as $\bar r\to0$ one has
\[
|k\bh|\longrightarrow |\bh|,
\qquad
|1+a\omega|\sim \frac{|\bar r|}{|\bh|}.
\]
Since \(\Delta(\bar r)\to a^2+\bh^2\), it follows that
\[
a_c(\gamma)
\sim
\frac{\sqrt{a^2+\bh^2}}{|\bar r|^2}
\longrightarrow+\infty
\qquad\text{as }\bar r\to0.
\]
Thus the decays considered here are arranged precisely in a region where self-force, backreaction and
possibly quantum effects cannot be expected to remain negligible.  Whether an analogous causality-violating mechanism persists once these effects are included remains an open problem.

\section*{Acknowledgments}
\noindent E. Caponio is  partially supported   by  MUR under the Program ``Department of Excellence'' Legge 232/2016  (Grant No. CUP - D93C23000100001) and by ``INdAM - GNAMPA Project''  CUP E53C25002010001.\\
E. Caponio and S. Suhr gratefully acknowledge support from the Simons Center for Geometry and Physics, Stony Brook University and in particular the Program ``Contact Geometry, General Relativity and Thermodynamics'' during which some of the research for this paper was performed.\\
G. Sanzeni gratefully acknowledges partial support for this research from the Pacific Institute for the Mathematical Sciences (PIMS-20260623-PDF). G. Sanzeni thanks Stefan Nemirovski, Don Page and Eric Woolgar for helpful discussions.
\addtocontents{toc}{\setcounter{tocdepth}{0}}
\section*{Declarations}
\subsection*{Conflict of interest} The authors have no competing interests to declare that are relevant to the content of this article.
\subsection*{Data availability statement}
Data sharing is not applicable to this article as no new data were created or analyzed in this study.

\addtocontents{toc}{\setcounter{tocdepth}{1}}

\end{document}